\let\l@ENGLISH\l@english
\DeclareMathOperator{\diag}{diag}
\newtheorem{theorem}{Theorem}
\newtheorem{proposition}{Proposition}
\newtheorem{lemma}{Lemma}
\newtheorem{corollary}{Corollary}
\newtheorem{obj}{Objective}
\newtheorem{remark}{Remark}
\newtheorem{assumption}{Assumption}
\newcommand{\numberset}{\mathbb}
\newcommand{\R}{\numberset{R}}
\definecolor{green}{rgb}{0.13,0.55,0.13}
\renewcommand{\vec}{\bm}
\newcommand{\red}[1]{{\textcolor{black}{#1}}}
\newcommand{\green}[1]{{\textcolor{black}{#1}}}
\newcommand{\blue}[1]{{\textcolor{black}{#1}}}
\newcommand{\magenta}[1]{{\textcolor{black}{#1}}}
\DeclareMathAlphabet{\mathpzc}{OT1}{pzc}{m}{it}
\DeclareMathAlphabet{\mathcal}{OMS}{cmsy}{m}{n}
\newcommand{\V}{\mathcal{V}}
\tikzstyle{block} = [draw, rectangle, minimum height=2em, minimum
\tikzstyle{sum} = [draw, fill=blue!20, circle, node
\tikzstyle{input} = [coordinate] \tikzstyle{output} =
\tikzstyle{pinstyle} = [pin edge={to-,thin,black}]
\colorlet{shadecolor}{black!15}
\tikzstyle{every
node}=[font=\small] \tikzstyle{every path}=[line width=0.8pt,line
\newcommand{\ncom}{\newcommand}
\ncom{\beqn}{\begin{eqnarray*}} \ncom{\eeqn}{\end{eqnarray*}}
\ncom{\beq}{\begin{eqnarray}} \ncom{\eeq}{\end{eqnarray}}
\newcommand{\Ra}[1]{{\textcolor{black}{#1}}}
\newcommand{\Rc}[1]{{\textcolor{black}{#1}}}
\newcommand{\Reva}[1]{{\textcolor{black}{#1}}}
\newcommand{\Revb}[1]{{\textcolor{black}{#1}}}
\newcommand{\redd}[1]{{\textcolor{black}{#1}}}
\begin{document}
    \title{Differentiation and Passivity for Control of Brayton-Moser Systems}
    \author{Krishna Chaitanya Kosaraju,
        Michele~Cucuzzella,
        Jacquelien M. A. Scherpen
        and Ramkrishna Pasumarthy
        \thanks{K. C. kosaraju is with Department of Electrical Engineering, University of Notre Dame, 275 Fitzpatrick Hl Engrng, Notre Dame, IN 46556, USA.}
        \thanks{M. Cucuzzella and J. M. A. Scherpen are with the Jan C. Wilems Center for Systems and Control, ENTEG, Faculty of Science and Engineering, University of Groningen, Nijenborgh 4, 9747 AG Groningen, the Netherlands.} 
        \thanks{R. Pasumarthy is with Department of electrical engineering, Indian Institute of Technology-Madras, Chennai-36, India. (email: \tt{k.c.kosaraju@nd.edu}, \tt{\{m.cucuzzella, j.m.a.scherpen\}@rug.nl}, \tt{ramkrishna@ee.iitm.ac.in})}
        \thanks{
        This work is  supported by the Netherlands Organisation for Scientific Research through
        Research Programme ENBARK+ (under Project 408.urs+.16.005) and the EU Project `MatchIT' (project number: 82203).}}
    \maketitle
    %
    %
    %
    %
    \begin{abstract}
This paper deals with a class of Resistive-Inductive-Capacitive (RLC) circuits and switched RLC (s--RLC)  circuits modeled in the Brayton Moser framework. 
For this class of systems, new passivity properties using a Krasovskii-type Lyapunov function as storage function are presented, where the supply-rate is  function of the system states, inputs and their first time derivatives.
Moreover, after showing the integrability property of the port-variables, two \emph{simple} control methodologies called {\em output shaping} and {\em input shaping} are proposed for regulating the voltage in RLC and s--RLC circuits. 
Global asymptotic stability is theoretically proved for both the proposed control methodologies. Moreover, robustness with respect to load uncertainty is ensured by the {\em input shaping} methodology.
The applicability of the proposed methodologies is illustrated by designing voltage controllers for DC-DC converters and DC networks.
    \end{abstract}
    \begin{IEEEkeywords}
Brayton-Moser systems, passivity-based control, RLC circuits, power converters, DC networks.
    \end{IEEEkeywords}
    \IEEEpeerreviewmaketitle
    %
    %
    %
    %
    \section{Introduction}

In the recent years, passivity theory has gained renewed attention because of its advantages and practicality in modeling and control of multi-domain dynamical systems \cite{5184955,915398}. In general, a system is passive if  there exists a (bounded from below) {\em storage function} $S:\mathbb{R}^n\rightarrow \mathbb{R}_{+}$ satisfying 
\begin{equation}\label{passivity_gen}
S(x(t))-S(x(0))\leq \int_{0}^{t}u^\top y dt,
\end{equation}
where $x\in \mathbb{R}^n$ is the system state, $u, y\in \mathbb{R}^m$ are the input and the output, also called {\em port-variables} and the product $u^\top y$ is commonly known as {\em supply-rate} \cite{van2000l2, HILL1980327}. Naturally, one can interpret the storage function as the total system energy, and the supply rate as the power supplied to the system.  
Consequently, inequality \eqref{passivity_gen} implies that the newly \emph{stored} energy is never greater than the \emph{supplied} one. 

In order to analyze the passivity properties of a general nonlinear system, it is usually required to be artful in designing the storage function. For this reason, it is helpful to recast the system dynamics into a known framework, such as the port-Hamiltonian (pH) one \cite{SYS-002}, where the storage function, also called Hamiltonian function, generally depends on the system energy.
Another well known framework that has been extensively explored for modeling of nonlinear Resistive-Inductive-Capacitive (RLC) circuits, is the Brayton-Moser (BM) framework \cite{BMnetworks1,BMnetworks2}, where the storage function relies on the system power (see \cite{1393170} for further details on geometric modeling of nonlinear RLC circuits).  
 
Nowadays, power-converters play a prominent role in smart grids. Conventional power-converters consist of (passive) subsystems interconnected through switches. 
In this paper, we consider a large class of \emph{ switched} RLC (s--RLC) circuits, which models the majority of the existing power converters (e.g., buck, boost, buck-boost and C\'uk).
 Although the analysis of s--RLC circuits has received a significant amount of attention (see for instance \cite{4154602,ESCOBAR1999445,577568a, SCHERPEN2003365} and the references therein), we notice that results based on the \emph{passivity properties} of the open-loop system are still lacking.  
On the other hand, a significant number of results have been published relying on Passivity-Based Control (PBC)~\cite{ORTEGA2004432,GARCIACANSECO2010127,Ortega1998,MEHRA201798,7993047,BORJA2016230,chinde2017passivity}, where the main idea is generally to passify the controlled system such that the  closed-loop storage function has a minimum at the desired operating point~\cite{915398}. 
 \green{
However, the passivity properties and the control techniques developed for pH systems cannot   be straightforwardly applied  to s--RLC networks.}
 \green{Alternatively, in \cite{1323174} the authors formalize the use of the BM framework for  analyzing s--RLC circuits and also provide tuning rules based on the well-known BM theorems. }

\subsection{Motivation and Main Contributions}
Lyapunov theory is  fundamental  in systems theory. In order to study the stability of a dynamical system, one generally needs to find a suitable Lyapunov function. 
Krasovskii proposed a simple and elegant candidate Lyapunov function, {where one needs to compute some point-wise conditions for sufficiency of Lyapunov stability \cite{khalil2002nonlinear}}. In a similar manner, passivity theory hinges on finding candidate storage functions satisfying \eqref{passivity_gen}. However, the candidate Lyapunov function proposed by Krasovskii is not well explored as a storage function. 
In \cite{mtns, 7846443,8431720}, the authors presented a preliminary result on the passivity property for a class of RLC circuits using such a storage function, named {\em Krasovskii storage function}, i.e.,
\begin{equation}
\label{Krasovskii}
S(x,\dot{x})=\dfrac{1}{2}\dot{x}^\top M(x) \dot{x},
\end{equation}
where $M(x)>0 \in \mathbb{R}^{n\times n}$. 
\blue{By using such a storage function, in this paper we present completely new passivity properties for a class of s--RLC circuits and for a class of RLC circuits wider than the one  considered in \cite{mtns, 7846443, 8431720}\footnote{\Ra{Note that in \cite{mtns}, the authors have \emph{only} explored as a conclusive remark (see \cite[Section V]{mtns}) the idea of using \eqref{Krasovskii} as storage function for a particular electrical example. Moreover,  in \cite{7846443} and \cite{8431720}, \emph{only} a preliminary result on the passivity property of \emph{only} RLC circuits is established under some assumptions that are more restrictive than the ones in this paper.}}.}

The output port-variables associated with the above storage function have integrability properties. It is well-established that the integrated output port-variable can be used to shape the closed-loop storage function. This leads to the development of a new control technique, named {\em output shaping}.
More precisely, the above storage function allows to establish a passivity property with supply-rate depending on the system state $x$, input $u$ and also their first time derivatives $\dot{x}, \dot{u}$. 
This enables us to develop a second control technique that we call \emph{input shaping}, which is radically new in PBC methodology. More precisely, we use the integrated input port-variable to shape the closed-loop storage function.
Furthermore, a Krasovskii storage function has the following advantages:
\begin{itemize}
\item[(i)] Since the supply-rate is a function of the first time derivative of the system state and input, the so-called \emph{dissipation obstacle}\footnote{For a system with non-zero supply-rate at the desired operating point, the controller has to provide unbounded energy to stabilize the system. In the literature, this is usually referred to as {\em dissipation obstacle} or {\em pervasive dissipation}.} problem \cite{915398} is avoided.
\item[(ii)] There are no parametric constraints that usually appear in Brayton-Moser framework (see for instance {\cite[Theorem 1]{1230225}}).
\item[(iii)] The port-variables are integrable.
\end{itemize}
\smallskip 
      Below, we list the main contributions of this work:
\begin{itemize}
	\item[(i)] The use of a storage function similar to \eqref{Krasovskii} for s--RLC circuits leads to a new passive map useful for control purposes. 
	\item [(ii)] We use the integrated port-variables to shape the closed-loop storage function and propose two \emph{simple} control techniques: \emph{output shaping} and \emph{input shaping}. 
	Both the techniques are used for regulating the voltage in RLC and s--RLC circuits.
	\item [(iii)] The input shaping technique is robust with respect to load uncertainty and requires less assumptions on the system parameters/structure than the output shaping one. 
	\end{itemize}
	\smallskip
The proposed techniques are finally illustrated with application to buck, boost, buck-boost, C\'uk  DC-DC converters and DC networks, which are attracting growing interest and receiving much research attention \cite{Zhao2015,J_Cucuzzella_18,DePersis2016,8071020,J_Trip2018,cucuzzella2017robust}. Simulation results show excellent performance.

\Reva{Note that the BM framework adopted in this work to model RLC and s--RLC circuits represents a larger class of nonlinear gradient systems \cite{BROER20101,VANDERSCHAFT20113321}. For instance, in {\cite{5184955}},  the BM equations are shown to be applicable to a wide class of nonlinear physical systems, including lumped-parameter mechanical, fluid, thermal, and electromechanical systems, electrical power converters, mechanical systems with impacts and distributed-parameter systems \cite{chaitanya2019modeling} {(see also {\cite{jeltsema2002brayton,4154602,favache2010power, 7155514}} for further applications)}.
	Moreover, a practical advantage of using the BM framework is that the system variables are directly expressed in terms of easily measurable physical quantities, such as currents, voltages, velocities, forces, volume flows, pressures, or temperatures. On the other hand, the Lagrangian and Hamiltonian formulations normally involve generalized displacement and momenta, which in many cases cannot be measured directly. Furthermore, s--RLC circuits do not generally inherit a standard pH structure because the interconnection matrix is often a function of both the system state and control input, rather than only the state~\cite{ESCOBAR1999445}. As a consequence, the existing passivity-based techniques may not be useful for the analysis and control purpose.}

    \subsection{Outline}
This paper is outlined as follows. In Section II, we recall the BM representation of RLC and s--RLC circuits and formulate the control objective after introducing the required assumptions. In Section III, we present the newly established passivity property for the RLC and s--RLC circuits. Then, using these properties, we propose two novel control techniques: output shaping and input shaping. In Section IV and Appendix, we illustrate the proposed techniques on buck, boost,  buck-boost, C\'uk DC-DC converters and DC networks with buck and boost converters interconnected through resistive-inductive lines. Finally, we conclude and present some possible future directions in Section V.
    \subsection{{Notation}}
    The set of real numbers is denoted by $\R$. The set of positive real numbers is denoted by $\R_{+}$.
    Let $x\in \mathbb{R}^n$ and $y\in \mathbb{R}^m$. Given a mapping $f:\mathbb{R}^n\times \mathbb{R}^m\rightarrow \mathbb{R}$, the symbol $\nabla_xf(x,y)$ and $\nabla_yf(x,y)$ denotes the partial derivative of $f(x,y)$ with respect to $x$ and $y$ respectively. 
    Let $K\in \mathbb{R}^{n\times n}$, then $K>0$ and $K\geq 0$ denote that $K$ is symmetric positive definite and symmetric positive semi-definite, respectively. Assume $K>0$, then $||x||_{K}:=\sqrt{x^\top K x}$ and $||K||_s$ denotes the spectral norm of $K$. Let $Q_1$ and $Q_2$ denote square matrices of order $m$ and $n$ respectively. Then $\diag\{Q_1,Q_2\}$ denotes a block-diagonal matrix of order $m+n$ with block entries $Q_1$ and $Q_2$. Given ${p} \in \R^n$ and ${q} \in \R^n$, \lq$\circ$' denotes the so-called
Hadamard product (also known as Schur product), i.e., $({p} \circ {q}) \in \R^n$ with $({p}\circ {q})_i := p_i q_i, i=1,\dots, n$. Moreover, $[p] := \diag\{p_1, \dots, p_n\}$.
    %
    %
    %
    %
        \section{Preliminaries and Problem Formulation}
    \label{sec:preliminaries}

    In this section, we briefly outline the Brayton-Moser (BM) formulation of RLC circuits and extend it to the case including an ideal switching element. 
    %
    %
    %
    \subsection{Non-Switched Electrical Circuits}
    Consider the class of {topologically complete} RLC circuits~\cite{1323174} with $\sigma$ inductors, $\rho$ capacitors {and $m$ (current-controlled) voltage sources  $u_s \in \mathbb{R}^m$ connected in series with inductors}. In \cite{BMnetworks1,BMnetworks2}, Brayton and Moser show that the dynamics\footnote{\Revb{For further details and a large number of
		examples, we suggest the reading of the sidebar \lq History of the Mixed-Potential Function' and
		section \lq The Brayton-Moser equations' in \cite{5184955}.}} of this class of systems can be represented as
    \begin{align}
    \label{BM_non_switched}
    \begin{split}
        -L\dot{I}&=\nabla_IP(I,V) - Bu_s\\
        C\dot{V}&=\nabla_VP(I,V),
    \end{split}
    \end{align}
where $L\in \mathbb{R}^{\sigma \times \sigma}$ and $C\in
\mathbb{R}^{\rho \times \rho}$ are the inductances and capacitances matrices,
respectively. The state variables $I\in \mathbb{R}^{\sigma}$ and $V
\in \mathbb{R}^{\rho}$ denote the currents through the $\sigma$
inductors and the voltages across the $\rho$ capacitors,
respectively. The matrix $B\in \mathbb{R}^{\sigma\times m}$ is the
input matrix with full column rank and $P:\mathbb{R}^{\sigma\times \rho}\rightarrow
\mathbb{R}$ represents the so-called \emph{mixed-potential}
function, given by,
    \begin{equation}
    \label{Mixpot}
    P(I,V) = I^\top\Gamma V+P_R(I)-P_G(V),
    \end{equation}
    where {$\Gamma\in \mathbb{R}^{\sigma\times \rho}$} captures the power circulating across the dynamic elements. The resistive content $P_R:\mathbb{R}^\sigma \rightarrow \mathbb{R}$ and the resistive co-content $P_G:\mathbb{R}^\rho \rightarrow \mathbb{R}$ capture the power dissipated in the resistors  connected in series to the inductors and in parallel to the capacitors, respectively. 
    \begin{remark}[Current sources]
    	For the sake of simplicity, in \eqref{BM_non_switched} we have not included current sources. However, the results presented in this note can also be developed for current sources in a straightforward manner.
    \end{remark}
    According to the BM formulation, system~\eqref{BM_non_switched} can compactly be written as
      \begin{equation}\label{gradient_structure}
Q\dot{x}= \nabla_xP(x)+\tilde{B}u_s,
    \end{equation}
 where $x=(I^\top,\;V^\top)^\top$, $Q=\diag\{-L,C\}$ and $\tilde{B}=(-B^\top\;O)^\top$,  $O\in\mathbb{R}^{m\times \rho}$ being a zero-matrix. 
 To permit the controller design in the following sections, we introduce the following assumptions:
    \begin{assumption}[Inductance and capacitance matrices]
    \label{ass:LC}
        Matrices $L$ and $C$ are constant, symmetric\footnote{Matrices $L$ and $C$ can possibly capture mutual inductances and capacitances, respectively.} and positive-definite. 
    \end{assumption}
    \begin{assumption}[Resistive content and co-content]
    \label{ass:diss_potential}
    The resistive content and co-content of current controlled resistors $R$ and voltage controlled resistors $G$ are quadratic in $I$ and $V$ respectively, i.e., 
        \begin{equation}
        P_R(I)=\frac{1}{2}I^\top R I, \qquad P_G(V)=\frac{1}{2}V^\top G V,
        \end{equation}
        where $R \in \R^{\sigma \times \sigma}$ and $G \in \R^{\rho \times \rho}$ are positive semi-definite matrices.
    \end{assumption}
    %
    \noindent Under Assumptions \ref{ass:LC} and \ref{ass:diss_potential}, it can be shown that system~\eqref{BM_non_switched} is passive with respect to the {\em power-conjugate}\footnote{We use the expression \emph{power-conjugate} to indicate that the product of input and output has units of power.} port-variables $u_s$, $B^\top I$ and the total energy stored in the network as storage function (see Remark \ref{energy_s-rlc}). 
    \subsection{(Average) Switched Electrical Circuits}
    We now consider the class of RLC circuits including an ideal switch\footnote{For the sake of simplicity we restrict the analysis to RLC circuits including only one switch. However, in Section \ref{sec:networks} we analyze a DC network including an arbitrary number of switches.} (s--RLC). {Let $u_d\in \{0,1\}$ and {$V_s \in \R^m$} denote the state of the switching element, i.e., \emph{open} or \emph{closed}, and the (current-controlled) voltage sources, respectively}. To describe the dynamics of s--RLC circuits we adopt the BM formulation \eqref{BM_non_switched} with the mixed-potential function and input matrix depending on the state of the switching element, i.e., $P:\{0,1\}\times \mathbb{R}^\sigma\times \mathbb{R}^\rho \rightarrow \mathbb{R}$ and $B:\{0,1\}\rightarrow \mathbb{R}^{\sigma\times m}$ can be expressed as
    \begin{align}
    \label{switched_sys_model}
        \begin{split}
            P(u_d,I,V)&=u_dP_1(I,V)+(1-u_d)P_0(I,V)\\
            B(u_d)&=u_dB_1+(1-u_d)B_0,
        \end{split}
    \end{align}
where $P_1(I,V)$, $B_1$ and $P_0(I,V)$, $B_0$ represent the
mixed-potential function and the input matrix of the s--RLC
circuit when $u_d=1$ and $u_d=0$,  respectively. Under the reasonable assumption
that the Pulse Width Modulation (PWM) frequency is sufficiently
high, the state of the system can be replaced by the corresponding average state
representing the average inductor currents and capacitor voltages,
while the switching control input is replaced by the so called duty
cycle of the converter \cite{ESCOBAR1999445}. For the sake of notational simplicity, from
now let $I$, $V$ and $u\in [0,1]$ denote the average
signals of $I$, $V$ and $u_d$, respectively, throughout the rest of the paper.
Consequently, the average behaviour of a  s--RLC electrical circuit can
be represented by the following BM equations
\begin{align}
    \label{BM_switched}
    \begin{split}
        -L\dot{I}&=\nabla_IP(u,I,V) - B(u)V_s\\
        C\dot{V}&=\nabla_VP(u,I,V).
    \end{split}
    \end{align}
    
    \begin{table}
                \centering
                \caption{Description of the used symbols}
                {\begin{tabular}{ll}
                    \toprule
                    &State variables\\
                    \midrule
                    $I$                     & Inductor current\\
                    $V$                     & Capacitor voltage\\
                    \midrule
                    & Parameters\\
                    \midrule
                    $L$                     &  Inductance\\
                    $C$                     & Capacitance\\
                    $G$                 & Conductance\\
                    $R$                 & Resistance \\
                    \midrule
                    & Inputs \\
                    \midrule
                    $u_s$					  & Control input (RLC circuits)\\
                    $u$                     & Duty cycle (s--RLC circuits) \\
                    $V_s$                    & Voltage source (s--RLC circuits)\\
                    \bottomrule
                    \end{tabular}}
                \label{tab:symbols}
                \end{table}
    
\begin{remark}[Resistive content and co-content structure]\label{rem::Dissipation_structure}
Note that if the content and co-content structure is not
affected by the switching signal, the mixed-potential function in \eqref{switched_sys_model} can
 be rewritten as follows
        \begin{equation}
        \label{switched_sys_modela}
                P(u,I,V)=I^\top\Gamma(u) V+P_R(I)-P_G(V),
        \end{equation}
        where the mapping $\Gamma:[0,1]\rightarrow \mathbb{R}^{\sigma\times \rho}$ is defined as
        \begin{equation}
        \label{eq:Gamma}
            \Gamma(u)=u\Gamma_1+(1-u)\Gamma_0,
        \end{equation}
        and $\Gamma_1, \Gamma_0$ capture the interconnection of the storage elements (i.e., inductors and capacitors) when $u=1$ and $u=0$, respectively.
    \end{remark}
    In the following we consider that the resistive content and co-content structure is not affected by the switching signal. As a consequence, system \eqref{BM_switched}
can be written as
\begin{align}
\label{BM_switched2}
\begin{split}
-L\dot{I} &=RI + \Gamma(u)V - B(u)V_s\\
C\dot V &=\Gamma^\top(u)I - GV.
\end{split}
\end{align}
The main symbols used in  \eqref{BM_non_switched}--\eqref{BM_switched2} are described
in Table~\ref{tab:symbols}. 

 \begin{remark}[Total energy as storage function]\label{energy_s-rlc}
	It can be shown that the RLC circuit~\eqref{BM_non_switched} is passive with respect to the storage function
	\begin{equation}\label{total_energy}
	S_e(I,V)=\dfrac{1}{2}I^\top LI+\dfrac{1}{2}V^\top C V,
	\end{equation}
	and port-variables $u_s$ and $B^\top I$ (see for instance {\cite{915398}}). Consider now the s--RLC circuit \eqref{BM_switched2}. The first time derivative of the storage function \eqref{total_energy} along the solutions to \eqref{BM_switched2} satisfies 
	\begin{equation*}
	\dot{S}_e \leq
	I^\top B_0V_s+ uI^\top \left(B_1-B_0\right)V_s.
	\end{equation*}
	Consequently, system \eqref{BM_switched2} is passive with respect to the storage function \eqref{total_energy} and supply rate $uI^\top B_1V_s$ if and only if $B_0=0$ and $B_1\neq 0$. However, if we consider for instance the model of the boost converter (see Section \ref{subsec:boost}), the conditions $B_0=0$ and $B_1\neq 0$ are not satisfied. Furthermore, even supposing that the conditions $B_0=0$ and $B_1\neq 0$ hold, we notice that the supply rate $uI^\top B_1V_s$ is generally not equal to zero at the desired operating point, implying the occurrence of the so-called \lq dissipation obstacle' problem \cite{915398}.
\end{remark}

As a consequence of Remark \ref{energy_s-rlc}, adopting the pH framework (using the total energy as Hamiltonian) does not provide any additional advantage compared to the BM framework. Moreover, s--RLC circuits do not inherit a standard pH structure~\cite{ESCOBAR1999445}:

\Revb{\begin{remark}[Port-Hamiltonian formulation for s--RLC circuits]
		Generally, a standard port-Hamiltonian system has the following structure
		\begin{eqnarray}\label{ph}
		\dot{x}=[J(x)-R(x)]\nabla_xH(x)+g(x)u,
		\end{eqnarray}
		where $x:\mathbb{R}_+\rightarrow\mathbb{R}^n$, $u:\mathbb{R}_+\rightarrow\mathbb{R}^m$ denote the state and input, respectively, $J(x)=-J(x)^\top$, $R(x)\geq 0$, $H:\mathbb{R}^n\rightarrow\mathbb{R}_+$ is the Hamiltonian and $g(x)$ the input matrix. The skew symmetric matrix $J(x)$ generally captures the interconnection of the storage elements and $R(x)$ describes the dissipation structure of the system. As a result we have the following dissipation inequality:
		\begin{eqnarray*}
			\dot{H}\leq u^\top y,
		\end{eqnarray*}
		where $y=g(x)^\top \nabla_xH(x)$.	However, s--RLC circuit do not generally inherit this structure. To represent s--RLC circuits it may be needed to modify \eqref{ph} as follows (see  \cite{ESCOBAR1999445} for some examples):
		\begin{eqnarray}\label{ph2}
		\dot{x}=\left[J(x,u)-R(x)\right]\nabla_xH(x) + g(x) V_s,
		\end{eqnarray}
		where $J(x,u)=J_0(x)+\sum_{i=1}^{m}J_i(x) u_i$, $J_i+J_i^\top=0$ for all $i \in \{0, \dots, m\}$, $u$ and $V_s$ denote the duty-ratio and the supply voltage, respectively. This implies the following dissipation inequality:
		\begin{eqnarray}
		\dot{H}\leq V_s^\top y,
		\end{eqnarray}
		which may be not useful for control purpose, since $V_s$ is generally not controllable. 
	\end{remark}}

Alternatively, in \cite[Theorem~1]{1230225} it is shown, under some assumptions, that system~\eqref{BM_non_switched} is passive with respect to the port-variables $u_s$,  $B^\top \dot I$ and the so-called \emph{transformed} mixed-potential function as storage function. However, finding the transformed mixed-potential function is not trivial and often requires that (sufficient) conditions on the system parameters are satisfied. 
Differently, in this work we overcome these issues by proposing a Krasovkii's Lyapunov function similar to \eqref{Krasovskii} as storage function.

    \subsection{Problem Formulation}
    \label{sec:probelm}
    The main goal of this paper is to propose new passivity-based control methodologies for regulating the voltage in RLC and s--RLC circuits.
    
Before formulating the control objective and in order to permit the
controllers design in the next sections, we first make the following
assumption on the available information about systems \eqref{BM_non_switched} and \eqref{BM_switched2}:
\begin{assumption}[Available information]
\label{ass:avail_information} The state variables $I$ and $V$ are
measurable\footnote{Note that, when needed, we also assume that $\dot I$ and $\dot V$ are available.}. 
The voltage source $V_s$ in \eqref{BM_switched2} is known and different from zero.
\end{assumption}

Secondly, in order to formulate the control objective aiming at
voltage regulation, we introduce the following two assumptions on the
existence of a desired reference voltage for both RLC and s--RLC circuits, respectively:
\begin{assumption}[Feasibility for RLC circuits]
\label{ass:feasibility_RLC} There exist a constant desired reference
voltage $V^\star\in\R_+^{\rho}$ and a constant control input $\overline u_s$ such
that a steady state solution $(\overline I, V^\star)$ to system
\eqref{BM_non_switched} satisfies
\begin{align}
\label{BM_non_switched_equilibrium}
\begin{split}
0 &=\Gamma V^\star + R\overline{I} - B\overline{u}_s\\
0 &=\Gamma^\top\overline I - G V^\star.
\end{split}
\end{align}
\end{assumption}
\begin{assumption}[Feasibility for s--RLC circuits]
\label{ass:feasibility_sRLC} There exist a constant desired reference
voltage $V^\star\in\R_+^{\rho}$ and a constant control input $\overline u\in \left(0,1\right)$ such
that a steady state solution $(\overline I, V^\star)$ to system
\eqref{BM_switched2} satisfies
\begin{align}
\label{BM_switched3}
\begin{split}
0 &=\Gamma(\overline u)V^\star + R\overline{I} - B(\overline u)V_s\\
0 &=\Gamma^\top(\overline u)\overline I - G V^\star.
\end{split}
\end{align}
\end{assumption}
%
	We notice now that system \eqref{BM_switched2} can be written as follows
	\begin{align*}
	\begin{split}
		\begin{bmatrix}
			-L\dot{I}\\C\dot{V}
		\end{bmatrix}= &\begin{bmatrix}
			{RI} + \Gamma_0V-B_0V_s\\ \Gamma_0^\top I-GV
		\end{bmatrix}\\
		&+\begin{bmatrix}
			\left(\Gamma_1-\Gamma_0\right)V-\left(B_1-B_0\right)V_s\\\left(\Gamma_1-\Gamma_0\right)^\top I
		\end{bmatrix}u.
		\end{split}
	\end{align*}
	Consequently, we introduce the following assumption for controllability purposes:

\begin{assumption}[{Controllability necessary condition}]\label{ass:input Matrix ass}
	There exists {(at least)} an element in the column vector 
\begin{equation*}
\begin{bmatrix}
			\left(\Gamma_1-\Gamma_0\right)V-\left(B_1-B_0\right)V_s\\\left(\Gamma_1-\Gamma_0\right)^{\top}I
\end{bmatrix}
\end{equation*}
that is different from zero for all $\left(I,V\right)\in\mathbb{R}^{\sigma\times \rho}$ and any $t\geq0$.
\end{assumption}
The control objective can now be formulated explicitly:
\begin{obj} [Voltage regulation]
\label{obj:voltage_regulation}
\begin{equation}
\label{eq:voltage_regulation} \lim_{t \rightarrow \infty} V(t) =V^{\star}.
\end{equation}
\end{obj}

\begin{remark}[Robustness to load uncertainty]
\label{rm:robustness}
In power networks it is generally desired that Objective~\ref{obj:voltage_regulation} is achieved independently from the load parameters, which are indeed often unknown (see also  Assumption~\ref{ass:avail_information}).
\end{remark}
    
\label{sec:model}

                %
                %
                %
                %
\section{The Proposed Control Approaches}
\label{sec:proposal}
In this section, we present new passivity properties (akin to
differential passivity \cite{6760930}) for the considered RLC circuits~\eqref{BM_non_switched}. Then, we extend these properties to {s--RLC} circuits \eqref{BM_switched2}.
\subsection{New passivity properties}
Novel passive
maps for a class of RLC circuits are presented in
\cite{7846443}\footnote{The class of RLC circuits considered in \cite{7846443} is a sub-class of the systems analyzed in this paper. More precisely, in \cite{7846443} the authors assume that $L, C$ are diagonal and $R, G$ are positive definite. These assumptions are relaxed in this paper (see Assumptions~\ref{ass:LC} and \ref{ass:diss_potential}).}, where the authors use a Krasovskii-type storage 
function similar to \eqref{Krasovskii}, i.e., 
\begin{equation}
\label{com_storage}
S(\dot{I},\dot{V})=\dfrac{1}{2}\|\dot{I}\|_L^2+\dfrac{1}{2}\|\dot{V}\|_C^2.
\end{equation}
The use of such a storage function enables to relax the constraints on the system parameters required in \cite[Theorem~1]{1230225}. 

Since the storage function \eqref{com_storage} depends on $\dot{I}$ and $\dot{V}$, we consider the following \emph{extended-dynamics}\footnote{These dynamics are differentially extended with respect to time.} of system \eqref{BM_non_switched} 
\begin{subequations}  \label{BM_non_switched_extended}
\begin{align} 
-L\dot{I}&= \Gamma V+RI-Bu_s\label{BM_non_switched_extendeda}\\
C\dot{V}&= \Gamma^\top  I-G V \label{BM_non_switched_extendedb}\\
-L\ddot{I}&= \Gamma \dot{V}+R\dot{I}-B\upsilon_{s}\label{BM_non_switched_extendedc}\\
C\ddot{V}&= \Gamma^\top  \dot{I}-G\dot{V} \label{BM_non_switched_extendedd}\\
\dot{u}_s&=\upsilon_{s},\label{BM_non_switched_extendede}
\end{align}  
\end{subequations}
where $(I,V,\dot{I},\dot{V},u_s)$ and $\upsilon_{s}\in\mathbb{R}^m$ are the (extended) system state and input, respectively. \blue{Then, inspired by \cite{mtns, 7846443, 8431720}, the following result can be established.}
\begin{proposition}[Passivity of RLC circuit]
\label{prop: passivity_Non_switched electrical circuits}
Let Assumptions \ref{ass:LC} and \ref{ass:diss_potential} hold.
System \eqref{BM_non_switched_extended} is  passive
with respect to the storage function \eqref{com_storage} and the port-variables $y_s=B^\top\dot{I}$ and $\upsilon_s$.
\end{proposition}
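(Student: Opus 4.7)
The plan is to show that the time derivative of $S(\dot I,\dot V)$ along the trajectories of the extended system \eqref{BM_non_switched_extended} is bounded above by $y_s^\top \upsilon_s$, which by definition is the passivity inequality. The approach is a direct energy-balance computation exploiting the skew-symmetric role of the interconnection matrix $\Gamma$ and the positive semi-definiteness of $R,G$ guaranteed by Assumption~\ref{ass:diss_potential}.

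First I would differentiate \eqref{com_storage} along trajectories, using Assumption~\ref{ass:LC} (so that $L,C$ are constant and symmetric), to obtain
\[
\dot S \;=\; \dot I^\top L\,\ddot I \;+\; \dot V^\top C\,\ddot V.
\]
Next I would substitute $L\ddot I$ and $C\ddot V$ from equations \eqref{BM_non_switched_extendedc}--\eqref{BM_non_switched_extendedd}, namely $L\ddot I = -\Gamma\dot V - R\dot I + B\upsilon_s$ and $C\ddot V = \Gamma^\top \dot I - G\dot V$. This yields
\[
\dot S \;=\; -\dot I^\top R\dot I - \dot V^\top G\dot V - \dot I^\top\Gamma\dot V + \dot V^\top\Gamma^\top\dot I + \dot I^\top B\,\upsilon_s.
\]
The two cross terms involving $\Gamma$ are scalars that are transposes of each other and therefore cancel exactly; this is precisely the structural reason the BM interconnection does not contribute to the power balance when the storage is written in terms of $(\dot I,\dot V)$.

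What remains is
\[
\dot S \;=\; -\dot I^\top R\dot I - \dot V^\top G\dot V + \dot I^\top B\,\upsilon_s \;\le\; \dot I^\top B\,\upsilon_s \;=\; y_s^\top \upsilon_s,
\]
where the inequality follows from $R\succeq 0$ and $G\succeq 0$. Integrating in time gives the passivity inequality \eqref{passivity_gen} with storage $S$ and port-variables $(\upsilon_s,y_s)$, completing the proof.

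There is essentially no deep obstacle here: the argument is a one-line Lyapunov-type computation, and the only substantive ingredient is recognising that the skew coupling through $\Gamma$ between the differentiated current and voltage equations annihilates the power circulating across the dynamic storage elements, leaving only the dissipative quadratic forms in $R,G$ and the external supply term. The mildly non-trivial point worth emphasising is that we do not need $R,G$ to be positive definite (only semi-definite), which is exactly why this Krasovskii-type storage relaxes the parametric assumptions of \cite[Theorem~1]{1230225} that are recalled earlier in the excerpt.
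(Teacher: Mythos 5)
Your proof is correct and is exactly the computation the paper compresses into the single line $\dot S \leq \dot u_s^\top B^\top \dot I = \upsilon_s^\top y_s$: differentiate \eqref{com_storage} using the symmetry of $L,C$, substitute \eqref{BM_non_switched_extendedc}--\eqref{BM_non_switched_extendedd}, cancel the scalar $\Gamma$ cross terms, and drop the nonpositive dissipation terms via $R,G\geq 0$. No difference in approach; your write-up merely makes the intermediate steps explicit.
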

\begin{proof}
The first time derivative of the storage function \eqref{com_storage} along the trajectories of \eqref{BM_non_switched_extended} satisfies
\begin{equation}
\label{Sdot_passivity_RLC}
\dot S
\leq \dot{u}_s^\top B^\top\dot{I}=\upsilon_s^\top y_s.
\end{equation}
\end{proof}
\redd{\begin{remark}{\bf (Physical interpretation of \eqref{Sdot_passivity_RLC})}
		\label{rm:S_intution}
		The established passivity property can be interpreted as the passivity property derived from the total energy of the \lq dual' circuit, which is constructed by using capacitors as inductors, voltage sources as current sources and vice-versa.
		This follows from considering $V_L$ as the voltage across the inductor and $I_C$ as the current through the capacitor. As a consequence, the storage function \eqref{com_storage} can be rewritten as
		\begin{equation}
		\label{com_storage_int}
		S(I_C,V_L)=\dfrac{1}{2}\|V_L\|_{L^{-1}}^2+\dfrac{1}{2}\|I_C\|_{C^{-1}}^2.
		\end{equation}
		In \eqref{com_storage_int}, the term ${1}/{2}\|V_{L}\|^2_{L^{-1}}$ represents the energy stored into a capacitor with capacitance $L^{-1}$ and charge $q_L=L^{-1}V_L$. Similarly, the term ${1}/{2}\|I_C\|^2_{C^{-1}}$ represents the energy stored into an inductor with inductance  $C^{-1}$ and flux $\phi_C=C^{-1}I_C$. 
Furthermore, let $i_s$ denote the current source constructed from a capacitor with capacitance $L^{-1}$ and charge $L^{-1}Bu_s$. As a result, \eqref{Sdot_passivity_RLC} becomes
		\begin{equation}
			\dot{S}\leq  \dot{u}_s^\top B^\top L^{-1}V_L = i_s^\top V_L.
		\end{equation}
\end{remark}}
Before presenting an analogous passive map also for s--RLC circuits~\eqref{BM_switched2}, similarly to \eqref{BM_non_switched_extended} we consider the following extended dynamics of system \eqref{BM_switched2}
\begin{subequations}  \label{BM_switched_extended}
\begin{align} 
-L\dot{I} &=RI + \Gamma(u)V - B(u)V_s\label{BM_switched_extendeda}\\
C\dot V &=\Gamma^\top(u)I - GV\label{BM_switched_extendedb}\\
-L\ddot{I} &=R\dot{I} + \Gamma(u)\dot{V}+\left(\left(\Gamma_1-\Gamma_0\right)V-\left(B_1-B_0\right)V_s\right)\upsilon\label{BM_switched_extendedc}\\
C\ddot V &=\Gamma^\top(u)\dot{I} +\left(\Gamma_1-\Gamma_0\right)^\top I\upsilon- G\dot{V}\label{BM_switched_extendedd}\\
\dot{u}&=\upsilon, \label{BM_switched_extendede}
\end{align}  
\end{subequations}
where $(I,V,\dot{I},\dot{V},u)$ and $\upsilon \in \mathbb{R}$  are the (extended) system state and input, respectively. Then, the following result can be established.
\begin{proposition}[Passivity of s--RLC circuit]
\label{prop: passivity_switched electrical circuits}
Let Assumptions \ref{ass:LC} and \ref{ass:diss_potential} hold. System \eqref{BM_switched_extended} 
is passive with respect to the storage function \eqref{com_storage} and the port-variables $\upsilon$ and
\begin{align}
\label{output_srlc}
\begin{split}
y=&~\Big(\dot{V}^\top \left(\Gamma_1-\Gamma_0 \right)^\top
I-\dot{I}^\top \left(\Gamma_1-\Gamma_0 \right)V\\
&-\dot{I}^\top
\left(B_0-B_1 \right)V_s \Big).
\end{split}
\end{align}
\end{proposition}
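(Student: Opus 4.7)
The plan is to mirror the computation done for Proposition~1, but now keeping careful track of the extra terms that arise because $\Gamma$ and $B$ depend on the control input $u$. The starting point is to differentiate the Krasovskii storage function \eqref{com_storage} along the trajectories of the extended system \eqref{BM_switched_extended}:
\begin{equation*}
\dot{S}=\dot{I}^\top L\ddot{I}+\dot{V}^\top C\ddot{V}.
\end{equation*}
I would then substitute $L\ddot{I}$ and $C\ddot{V}$ directly from \eqref{BM_switched_extendedc} and \eqref{BM_switched_extendedd}. Note that no extra derivative of $\Gamma(u)$ or $B(u)$ needs to be computed by hand, since the authors have already pushed those through when writing the extended dynamics: the $(\Gamma_1-\Gamma_0)$ and $(B_1-B_0)$ terms come precisely from $\tfrac{d}{dt}\Gamma(u)=(\Gamma_1-\Gamma_0)\upsilon$ and $\tfrac{d}{dt}B(u)=(B_1-B_0)\upsilon$.

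After substitution, $\dot{S}$ splits into three groups of terms. The first group, $-\dot{I}^\top R\dot{I}-\dot{V}^\top G\dot{V}$, is nonpositive by Assumption~\ref{ass:diss_potential}. The second group consists of the two \emph{interconnection} contributions $-\dot{I}^\top\Gamma(u)\dot{V}$ and $\dot{V}^\top\Gamma(u)^\top\dot{I}$; since they are scalars that are transposes of one another, they cancel exactly, regardless of the value of $u$. This is the same cancellation that already works in the RLC case, and it is unaffected by the $u$-dependence of $\Gamma$. The third group collects everything proportional to $\upsilon$, namely
\begin{equation*}
\upsilon\Bigl(\dot{V}^\top(\Gamma_1-\Gamma_0)^\top I-\dot{I}^\top(\Gamma_1-\Gamma_0)V-\dot{I}^\top(B_0-B_1)V_s\Bigr),
\end{equation*}
which is precisely $\upsilon\,y$ with $y$ as defined in \eqref{output_srlc}.

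Combining the three groups yields $\dot{S}\le \upsilon\, y$, from which the passivity claim follows by integration. The main obstacle I anticipate is purely bookkeeping: making sure that each sign (in particular the $B_0-B_1$ versus $B_1-B_0$ convention) and each transpose match the output $y$ exactly as written in \eqref{output_srlc}. No new structural assumption beyond Assumptions~\ref{ass:LC} and \ref{ass:diss_potential} is needed, because the skew-symmetric-like cancellation between the $\Gamma(u)$ cross-terms is an algebraic identity and the dissipative terms enter with the correct sign irrespective of the switching state.
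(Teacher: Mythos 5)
Your proposal is correct and follows essentially the same route as the paper: differentiate the Krasovskii storage function, substitute the extended dynamics \eqref{BM_switched_extendedc}--\eqref{BM_switched_extendedd}, cancel the $\Gamma(u)$ cross-terms as mutual transposes of a scalar, and collect the $\upsilon$-proportional terms into $\upsilon y$ with the remaining $-\dot{I}^\top R\dot{I}-\dot{V}^\top G\dot{V}$ nonpositive by Assumption~\ref{ass:diss_potential}. The sign bookkeeping you flag (the $B_0-B_1$ convention) works out exactly as in \eqref{Sdot_passivity_sRLC}.
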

\begin{proof}
The time derivative of the storage function \eqref{com_storage} along the trajectories of \eqref{BM_switched_extended} satisfies
\begin{align}
\label{Sdot_passivity_sRLC}
\begin{split}
\dot S&= -\dot{I}^\top
\Big(\big((1-u)\Gamma_0+u\Gamma_1\big)\dot{V}+\dot{u}(\Gamma_1-\Gamma_0)V\\
&\, \quad+R\dot{I} -\dot{u}(B_1-B_0)V_s\Big)+\dot{V}^\top \Big(\big((1-u)\Gamma_0\\
&\, \quad+u\Gamma_1\big)\dot{I}+\dot{u}(\Gamma_1-\Gamma_0)^\top I-G \dot{V}\Big)\\
&= -\dot{I}^\top R \dot{I}-\dot{V}^\top G \dot{V}+\dot{u}y\\
&\leq  \dot{u}y=\upsilon y.
\end{split}
\end{align}
\end{proof}

\magenta{Note that, if $V_s$ is controllable, then the storage function \eqref{com_storage}
		along the extended dynamics of \eqref{BM_switched2} satisfies
		\begin{align}
		\label{Sdot_passivity_sRLCa}
		\begin{split}
		\dot S
		=& -\dot{I}^\top R \dot{I}-\dot{V}^\top G \dot{V}+\dot{u}y+\dot{I}^\top B(u) \dot{V}_s\\
		\leq &  \upsilon y + \theta^\top  \phi.
		\end{split}
		\end{align}
		where $\theta =\dot{V}_s$ and $\phi=B(u)^\top \dot{I}$.
		Therefore the extended dynamics of \eqref{BM_switched2} are passive with port variables $[\upsilon, ~\theta^\top]^\top$ and $[y,~\phi^\top]^\top$.} 
\begin{remark}{\bf (Insights on the storage function $S$)}
\label{rm:S}
The storage function \eqref{com_storage} depends on the states $\dot{I}, \dot V$ of system~\eqref{BM_non_switched_extended} or~\eqref{BM_switched_extended}. Consequently, $S$ depends on the entire state of the extended system~\eqref{BM_non_switched_extended} or~\eqref{BM_switched_extended}. This follows from replacing $\dot{I}, \dot V$ by the corresponding dynamics \eqref{BM_non_switched_extendeda}--\eqref{BM_non_switched_extendedb} or \eqref{BM_switched_extendeda}--\eqref{BM_switched_extendedb}.
Moreover, we will show in Theorems \ref{prop:Non_Switched_Method_2} and \ref{prop:Switched_Method_2} that designing the controller by using the storage function \eqref{com_storage} enables the achievement of Objective~1 despite the load uncertainty (see Remark~\ref{rm:robustness}). However, the cost of designing a robust controller is the need of information about the first time derivative of the signals $I$ and $V$.  
\end{remark}
By using the passive maps presented in Propositions~\ref{prop: passivity_Non_switched electrical circuits} and~\ref{prop: passivity_switched electrical circuits}, we propose in the next two subsections two different passivity-based control methodologies for both RLC and s--RLC circuits, respectively. 
\subsection{Output Shaping}
The first methodology, which we call {\em output shaping}, relies on the integrability property of the \emph{output} port-variable. More precisely, we use the integrated output port-variable to shape the closed-loop storage function. In this subsection, we first extend this methodology to a wider class of RLC circuits than the one considered in \cite{7846443}. Subsequently, we further extend the output shaping methodology to s--RLC circuits. 
\begin{theorem}[Output shaping for RLC circuits]\label{prop:Non_Switched_Method_1}
Let Assumptions~\ref{ass:LC}--\ref{ass:feasibility_RLC} hold. Consider system \eqref{BM_non_switched_extended} with control input $\upsilon_s$ given by 
                \begin{equation}\label{cont_RLC_output_controller}
               \upsilon_s= \left(\mu_s-k_iB^\top\left( I-\bar I\right)-k_d y_s\right),
                \end{equation}
                with $y_s=B^\top\dot{I}$, $k_d>0,\;k_i>0$ and $\mu_s \in \mathbb{R}^m$. The following statements hold:
                \begin{itemize}
                \item[(a)]  System \eqref{BM_non_switched_extended} in closed-loop with control \eqref{cont_RLC_output_controller} defines a passive map $\mu_s \mapsto y_s$.
                \item[(b)] Let $\mu_s$ be equal to zero. If any of the following conditions holds
                \begin{itemize}
                	\item [(i)]$R>0$ and $G>0$,
                	\item[(ii)]$G> 0$ and $\Gamma^\top$  has full column rank,
                \end{itemize}
                then the solution to the closed-loop system  asymptotically converges to the set
                \end{itemize}
                                \begin{equation}\label{set:ouput_non_switched}
              \left\{  \left(I,V,\dot{I},\dot{V},u_s\right): 
               \dot{V}=0,\dot{I}=0,\dot{u}_s=0,B^\top \left(I-\overline{I}\right)=0 \right\}.
                 \end{equation}
                \end{theorem}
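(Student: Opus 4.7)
The plan is to exploit the passivity relation of Proposition~\ref{prop: passivity_Non_switched electrical circuits} by augmenting the Krasovskii storage function~\eqref{com_storage} with a quadratic penalty on the integrated output port-variable, and then invoking LaSalle's invariance principle.

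Concretely, I would define
\begin{equation*}
W = S + \frac{k_i}{2}\|B^\top(I-\overline{I})\|^2.
\end{equation*}
Differentiating $W$ along the closed-loop trajectories and using the equality form of the computation behind Proposition~\ref{prop: passivity_Non_switched electrical circuits}, namely $\dot S = -\dot I^\top R\dot I - \dot V^\top G\dot V + \upsilon_s^\top y_s$, together with $\tfrac{d}{dt}B^\top(I-\overline{I}) = B^\top\dot I = y_s$, yields
\begin{equation*}
\dot W = -\dot I^\top R\dot I - \dot V^\top G\dot V + \upsilon_s^\top y_s + k_i\,y_s^\top B^\top(I-\overline{I}).
\end{equation*}
Substituting the control law \eqref{cont_RLC_output_controller} makes the $k_i$ cross-terms cancel exactly, leaving $\dot W = -\dot I^\top R\dot I - \dot V^\top G\dot V - k_d\|y_s\|^2 + \mu_s^\top y_s \leq \mu_s^\top y_s$. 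Dropping the non-positive dissipative terms establishes statement (a): $W$ is a storage function for the passive map $\mu_s\mapsto y_s$.

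For statement (b), set $\mu_s=0$, so that $\dot W\leq 0$ and the trajectories stay in a sublevel set of $W$. LaSalle then forces convergence to the largest invariant set $\Omega$ contained in $\{\dot W=0\}$. On $\Omega$: in case (i), $R>0$ and $G>0$ immediately give $\dot I=0$ and $\dot V=0$; in case (ii), $G>0$ gives $\dot V=0$ and $k_d\|y_s\|^2=0$ gives $B^\top\dot I=0$, and invariance of $\dot V=0$ together with $\ddot V=0$ in \eqref{BM_non_switched_extendedd} yields $\Gamma^\top\dot I=0$, whence $\dot I=0$ by the full-column-rank hypothesis on $\Gamma^\top$. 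In either case, invariance of $\dot I=0$ forces $\ddot I=0$, so \eqref{BM_non_switched_extendedc} collapses to $B\upsilon_s=0$, and full column rank of $B$ gives $\upsilon_s=0$, hence $\dot u_s=0$. Feeding $\upsilon_s=0$ and $y_s=0$ back into \eqref{cont_RLC_output_controller} produces $B^\top(I-\overline{I})=0$, which is precisely the set \eqref{set:ouput_non_switched}.

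The step I expect to require the most care is the precompactness argument needed to invoke LaSalle: the storage function $W$ only depends on the underlying state $(I,V,u_s)$ through the expressions of $\dot I$ and $\dot V$ coming from \eqref{BM_non_switched_extendeda}--\eqref{BM_non_switched_extendedb} and through $B^\top(I-\overline{I})$, so one has to verify case by case that sublevel sets of $W$, viewed as a function of $(I,V,u_s)$, are bounded. This can be done by combining full column rank of $B$ (which controls the $u_s$-component), $G>0$ (which, together with the capacitor equation, couples $V$ to $I$), and either $R>0$ (case (i)) or full column rank of $\Gamma^\top$ (case (ii)) to control $I$. Once boundedness is in hand, the iterative differentiation above is essentially algebraic and poses no further difficulty.
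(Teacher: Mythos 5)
Your proposal is correct and follows essentially the same route as the paper: the same shaped storage function $S_d = S + \tfrac{1}{2}\|B^\top(I-\overline I)\|^2_{k_i}$, the same cancellation of the $k_i$ cross-terms via the control law to get $\dot S_d \le \mu_s^\top y_s$, and the same LaSalle argument using \eqref{BM_non_switched_extendedc}--\eqref{BM_non_switched_extendedd} and the full-column-rank hypotheses to pin down the invariant set. Your closing remark on verifying boundedness of sublevel sets of $S_d$ in the underlying coordinates is a point the paper passes over by simply asserting the existence of a forward invariant set $\Pi$, so it is a welcome (if not strictly divergent) addition.
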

                \begin{proof}
                	We use the integrated output port-variable to shape the desired closed-loop storage function, i.e.,
                                \begin{equation}\label{com_storage_RLC_clpa}
                S_d= S+\dfrac{1}{2}\left|\left| B^\top (I-\overline I)\right|\right|^2_{k_i},
                \end{equation}
               where $S$ is given by \eqref{com_storage}. 
               Then, $ S_d$ along the trajectories of system~\eqref{BM_non_switched_extended} controlled by \eqref{cont_RLC_output_controller} satisfies
                \begin{subequations}  \label{Sdot_OS_RLC}
                	\begin{align} 
\dot{S}_d
&=-\dot{I}^\top R\dot{I}-\dot{V}^\top G\dot{V}+y_s^\top  \left(\upsilon_s+k_iB^\top (I-\overline{I})\right)\label{Sdot_OS_RLCb}\\
&= -\dot{I}^\top R\dot{I}-\dot{V}^\top G\dot{V}-k_dy_s^\top y_s+\mu_s^\top y_s                 \label{Sdot_OS_RLCc}\\
&\leq \mu_s^\top y_s,\label{Sdot_OS_RLCd}
                	\end{align}  
                \end{subequations}
                where in  \eqref{Sdot_OS_RLCb} we use the controller \eqref{cont_RLC_output_controller}. This concludes the proof of part (a). For part (b-i), let $\mu_s$ be equal to zero. Then, from \eqref{Sdot_OS_RLCc}, there exists a forward invariant set $\Pi$ and by LaSalle's invariance principle the solutions that start in $\Pi$ converge to the largest invariant set contained in 
                \begin{equation}\label{set:forward_inv_set_thm1}
                	\Pi \cap \left\{\left(I,V,\dot{I},\dot{V},u_s\right):\dot{I}=0,\dot{V}=0 \right\}.
                \end{equation}
                From \eqref{BM_non_switched_extendedc} it follows that $B\upsilon_{s}=0$, i.e., $\upsilon_{s}=0$ ($B$ has full column rank). Moreover, from \eqref{cont_RLC_output_controller} it follows that $B^\top (I-\overline I)= 0$, concluding the proof of part (b-i). For part (b-ii), the solutions that start in the forward invariant set $\Pi$ converge to the largest invariant set contained in 
                \begin{equation}\label{set:forward_inv_set_thm1a}
                \Pi \cap \left\{\left(I,V,\dot{I},\dot{V},u_s\right):R\dot{I}=0,\dot{V}=0,y_s=0 \right\}.
                \end{equation}
                On this invariant set, from \eqref{BM_non_switched_extendedd} we obtain $\Gamma^\top\dot{I}= 0$, which implies $\dot{I}= 0$ ($\, \Gamma^\top $ has full column rank). This further implies that, also in this case, the solutions starting in $\Pi$ converge to the set \eqref{set:forward_inv_set_thm1}. The rest of the proof follows from the proof of part (b-i).
                \end{proof}
                \begin{remark}[Alternative controller to \eqref{cont_RLC_output_controller}]
                The controller \eqref{cont_RLC_output_controller} needs the information of the first time derivative of the inductor current. This can be avoided by rewriting \eqref{cont_RLC_output_controller} as follows:
                \begin{align}
                \label{cont_RLC_b}
                \begin{split}
                u_s&= -\left(k_i\phi+k_d B^\top I\right)\\
                \dot{\phi} &= - \frac{1}{k_i}\mu_s + B^\top \left(I-\overline I\right).
                \end{split}
                \end{align}
 By using  the storage function \eqref{com_storage_RLC_clpa}, the same results of Theorem \ref{prop:Non_Switched_Method_1} can be established analogously. Moreover, note that \eqref{cont_RLC_output_controller} can be rewritten as in \eqref{cont_RLC_b} because of the integrability of the port-variables.
                \end{remark}
We now extend this methodology to s--RLC circuits
\eqref{BM_switched2}. One possible issue in extending this methodology to
  s--RLC circuits may be the integrability of the output
port-variable $y$ given by \eqref{output_srlc}. In order to avoid this issue, we introduce the following assumption: 
\begin{assumption}[Integrating factor]
\label{ass:integrability_m}
There exist $m:\mathbb{R}^{\sigma}\times \mathbb{R}^{\rho}\rightarrow \mathbb{R}$ different from zero and $\gamma:\mathbb{R}^{\sigma}\times \mathbb{R}^{\rho}\rightarrow \mathbb{R}$ such that $\dot{\gamma}=my$.
                \end{assumption}
                It is however worth to mention that the second methodology  (i.e.,  \emph{input shaping}) that we propose in Section \ref{sec:inputshaping} does not need Assumption~\ref{ass:integrability_m}.
Relying on Assumption~\ref{ass:integrability_m}, the following lemma provides a new passive map with integrable output port-variable for system \eqref{BM_switched_extended}.
            \begin{lemma}[Integrable output]
	\label{lemma: passivity_switched electrical circuits}
	Let Assumptions \ref{ass:LC}, \ref{ass:diss_potential} and \ref{ass:integrability_m} hold. System \eqref{BM_switched_extended}
	is passive with port-variables $\dfrac{\upsilon}{m}$ and
	$\dot{\gamma}=my$.
            \end{lemma}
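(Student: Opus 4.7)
The plan is to piggy-back on Proposition~\ref{prop: passivity_switched electrical circuits}, which already establishes passivity of the extended s--RLC dynamics \eqref{BM_switched_extended} with respect to the storage function \eqref{com_storage} and port-variables $\upsilon$ and $y$. From \eqref{Sdot_passivity_sRLC} we have the dissipation inequality $\dot S \leq \upsilon\, y$. The lemma merely claims passivity with a different (and, crucially, \emph{integrable}) output, so I would treat it as a pointwise rescaling of the supply-rate via the integrating factor $m$.

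Concretely, I would invoke Assumption~\ref{ass:integrability_m}, which guarantees a nowhere-zero scalar $m$ and a scalar $\gamma$ with $\dot\gamma = m\,y$. Dividing and multiplying the right-hand side of the supply-rate by $m$ gives
\begin{equation*}
\upsilon\, y \;=\; \frac{\upsilon}{m}\,(m\,y) \;=\; \frac{\upsilon}{m}\,\dot\gamma,
\end{equation*}
so that, along the solutions of \eqref{BM_switched_extended},
\begin{equation*}
\dot S \;\leq\; \frac{\upsilon}{m}\,\dot\gamma.
\end{equation*}
Since \eqref{com_storage} is nonnegative, integration in time yields the standard passivity inequality $S(t)-S(0)\leq \int_0^t (\upsilon/m)\,\dot\gamma\, d\tau$, which is exactly the claim with input $\upsilon/m$ and output $\dot\gamma$.

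Strictly speaking there are no serious obstacles, because Assumption~\ref{ass:integrability_m} is precisely what is needed to make the rescaling well defined and to give the new output an antiderivative $\gamma$. The only point that deserves a sentence of care is verifying that $m\neq 0$ holds along trajectories (which is part of Assumption~\ref{ass:integrability_m}), so that $\upsilon/m$ is well posed as an input. The payoff, which will be exploited in the forthcoming output-shaping design for s--RLC circuits, is that the integrated output port-variable $\int_0^t \dot\gamma\, d\tau = \gamma(t) - \gamma(0)$ is now available in closed form and can be used to reshape the closed-loop storage function exactly as in Theorem~\ref{prop:Non_Switched_Method_1}.
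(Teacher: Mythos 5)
Your proof is correct and is essentially identical to the paper's own argument: both start from the dissipation inequality $\dot S \leq \upsilon y$ of Proposition~\ref{prop: passivity_switched electrical circuits} and rewrite the supply-rate as $(\upsilon/m)\,\dot\gamma$ using the integrating factor of Assumption~\ref{ass:integrability_m}. The extra remarks on $m\neq 0$ and the time-integral form are harmless elaborations of the same one-line computation.
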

        \begin{proof}
After multiplying and dividing the last line of \eqref{Sdot_passivity_sRLC} by $m$, we obtain
\begin{equation*}
\dot{S}\leq \upsilon y
= \dfrac{\upsilon}{m}\dot{\gamma}.
\end{equation*}
        \end{proof}
         
\begin{theorem}[Output shaping for s--RLC circuits]\label{prop:Switched_Method_1}
                Let Assumptions \ref{ass:LC}--\ref{ass:avail_information} and \ref{ass:feasibility_sRLC}--\ref{ass:integrability_m} hold. Consider system \eqref{BM_switched_extended}  with control    input $\upsilon$ given by
                \begin{equation}\label{cont_Switched_RLC_a}
                \upsilon= m\left(\mu-k_i\left(\gamma-\gamma^\star\right)-k_d \dot{\gamma}\right),
                \end{equation}
                with $\gamma^\star=\gamma(\overline{I},V^\star)$,  $k_d>0, k_i >0$ and  $\mu \in \mathbb{R}$. The following statements hold:
                \begin{itemize}
                	\item[(a)] System \eqref{BM_switched_extended} in closed-loop with control \eqref{cont_Switched_RLC_a} defines a passive map $\mu \mapsto \dot{\gamma}$.
                	\item[(b)] Let $\mu$ be equal to zero. If any of the following conditions holds
                	\begin{itemize}
                		\item[(i)]$R>0$ and $G>0$
                		\item[(ii)]$G> 0$, $\Gamma^\top (u)$ has full column rank, and
                		\begin{equation}\label{switched_outshaping_asy_stab_cond_2}
                		\left(\Gamma_1-\Gamma_0\right)V-\left(B_1-B_0\right)V_s \neq 0,
                		\end{equation}
                	\end{itemize}
                	 then the solution to the closed-loop system  asymptotically converges to the set
                	\begin{equation} \label{set:ouput_switched}
                	\left\{\left(I,V,\dot{I},\dot{V},u\right)|~\dot{V}=0,\dot{I}=0,\dot{u}=0,\gamma=\gamma^\star\right\}.
                	\end{equation}
                \end{itemize}
                \end{theorem}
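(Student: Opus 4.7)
The plan is to replicate the architecture of Theorem \ref{prop:Non_Switched_Method_1}: use Lemma \ref{lemma: passivity_switched electrical circuits} to replace the (possibly non-integrable) output $y$ by its integrated version $\gamma$, then shape the closed-loop storage by adding a quadratic penalty on $\gamma - \gamma^\star$.

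For part (a), I take the closed-loop candidate
\begin{equation*}
S_d \;=\; S \,+\, \tfrac{1}{2}\,k_i\,(\gamma - \gamma^\star)^2,
\end{equation*}
with $S$ as in \eqref{com_storage}. Using Proposition \ref{prop: passivity_switched electrical circuits} together with Lemma \ref{lemma: passivity_switched electrical circuits}, the derivative along \eqref{BM_switched_extended} is $\dot S = -\dot I^\top R\dot I - \dot V^\top G\dot V + (\upsilon/m)\dot\gamma$. Substituting the controller \eqref{cont_Switched_RLC_a} makes the $k_i$-cross-term cancel, leaving
\begin{equation*}
\dot S_d \;\le\; -\dot I^\top R\dot I - \dot V^\top G\dot V - k_d\,\dot\gamma^2 + \mu\,\dot\gamma \;\le\; \mu\,\dot\gamma,
\end{equation*}
which is the claimed passive map $\mu \mapsto \dot\gamma$.

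For part (b), setting $\mu = 0$ makes $S_d$ a Lyapunov function; its sublevel sets are forward invariant and LaSalle's invariance principle applies. In case (i), $R,G > 0$ and $k_d > 0$ force $\dot I = 0$, $\dot V = 0$ and $\dot\gamma = 0$ on the largest invariant set. Substituting these, together with $\ddot I = 0$ and $\ddot V = 0$, into \eqref{BM_switched_extendedc}--\eqref{BM_switched_extendedd} reduces them to
\begin{align*}
\bigl((\Gamma_1 - \Gamma_0)V - (B_1 - B_0)V_s\bigr)\,\upsilon &= 0,\\
(\Gamma_1 - \Gamma_0)^\top I \,\upsilon &= 0,
\end{align*}
and Assumption \ref{ass:input Matrix ass} forces $\upsilon = 0$, hence $\dot u = 0$. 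Feeding $\upsilon = 0$, $\dot\gamma = 0$, $\mu = 0$ into the controller \eqref{cont_Switched_RLC_a} and using $m \ne 0$, $k_i > 0$ gives $\gamma = \gamma^\star$, placing the trajectory in the set \eqref{set:ouput_switched}.

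The main obstacle is case (ii): since $R$ is only positive semi-definite, $\dot I = 0$ is not immediate. The plan is to use $G > 0$ and $k_d > 0$ to obtain $\dot V \equiv 0$ and $\dot\gamma \equiv 0$ on the invariant set; then, differentiating \eqref{BM_switched_extendedb} and exploiting \eqref{BM_switched_extendedd} with $\ddot V = 0$, write $\Gamma^\top(u)\dot I = -(\Gamma_1 - \Gamma_0)^\top I\,\upsilon$; combine this with $y = 0$ (equivalent to $\dot\gamma = 0$ via $m \ne 0$), which under $\dot V = 0$ reduces to $\dot I^\top\bigl((B_1 - B_0)V_s - (\Gamma_1 - \Gamma_0)V\bigr) = 0$; and then invoke the full-column-rank hypothesis on $\Gamma^\top(u)$ together with the non-degeneracy condition \eqref{switched_outshaping_asy_stab_cond_2} to conclude $\upsilon = 0$ and therefore $\dot I = 0$. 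Once $\dot I = 0$ is in hand, the closing step is identical to case (i): \eqref{BM_switched_extendedc}--\eqref{BM_switched_extendedd} and Assumption \ref{ass:input Matrix ass} re-confirm $\dot u = 0$, and the controller relation yields $\gamma = \gamma^\star$, so the trajectory lies in \eqref{set:ouput_switched}.
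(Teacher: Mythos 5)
Your proposal is correct and follows essentially the same route as the paper: the shaped storage function $S_d = S + \tfrac{1}{2}k_i(\gamma-\gamma^\star)^2$, the dissipation inequality $\dot S_d \le \mu\dot\gamma$, and a LaSalle argument on the invariant set where $\dot V$, $\dot\gamma$ (and $R\dot I$ or $\dot I$) vanish, with Assumption~\ref{ass:input Matrix ass}, condition \eqref{switched_outshaping_asy_stab_cond_2} and the column rank of $\Gamma^\top(u)$ used to force $\upsilon=0$, $\dot u = 0$, $\dot I=0$ and $\gamma=\gamma^\star$. The only (cosmetic) deviation is in case (b-ii), where you extract $\upsilon=0$ by combining the $\ddot V$-equation \eqref{BM_switched_extendedd} with $y=0$, whereas the paper reads it off the $\ddot I$-equation \eqref{BM_switched_extendedc} together with \eqref{switched_outshaping_asy_stab_cond_2}; both variants rely on the same facts and are equally terse at that step.
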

                \begin{proof}
 We use the integrated output port-variable $\gamma$ (see Lemma \ref{lemma: passivity_switched electrical circuits}) to shape the desired closed-loop storage function, i.e.,
\begin{equation}\label{storage_output_shaping_clp}
S_d= S+\dfrac{1}{2}k_i\left(\gamma-\gamma^\star\right)^2,
\end{equation} 
where $S$ is given by \eqref{com_storage}. 
Then, $ S_d$ along the trajectories of  system \eqref{BM_switched_extended} controlled by \eqref{cont_Switched_RLC_a} satisfies
                                \begin{subequations}  \label{Sdot_OS_sRLC}
                	\begin{align} 
                \dot{S}_d
&= -\dot{I}^\top R \dot{I}-\dot{V}^\top G \dot{V}+ \dfrac{\upsilon}{m} \dot{\gamma}+k_i\left(\gamma-\gamma^\star\right)  \dot{\gamma}\label{Sdot_OS_sRLCa}\\
&= -\dot{I}^\top R \dot{I}-\dot{V}^\top G \dot{V} -k_d\dot{\gamma}^2+\mu \dot{\gamma}\label{Sdot_OS_sRLCc}\\
&\leq \mu \dot{\gamma},
                	\end{align}  
                \end{subequations}
            where in \eqref{Sdot_OS_sRLCa} we use Proposition \ref{prop: passivity_switched electrical circuits}, Lemma \ref{lemma: passivity_switched electrical circuits} and the controller~\eqref{cont_Switched_RLC_a}. This concludes the proof of part (a). For part (b-i), let $\mu$ be equal to zero. Then, from \eqref{Sdot_OS_sRLCc}, there exists a forward invariant set $\Pi$ and by LaSalle's invariance principle the solutions that start in $\Pi$ converge to the largest invariant set contained in 
            \begin{equation}\label{set:forward_inv_set_thm2}
            \Pi \cap \left\{\left(I,V,\dot{I},\dot{V},u_s\right):\dot{I}=0,\dot{V}=0,\dot{\gamma}=0 \right\}.
            \end{equation}
            On this invariant set, from \eqref{BM_switched_extendedc} and \eqref{BM_switched_extendedd} it follows that
            \begin{equation*}
            	\begin{bmatrix}
			\left(\Gamma_1-\Gamma_0\right)V-\left(B_1-B_0\right)V_s\\\left(\Gamma_1-\Gamma_0\right)^{ \top}I
            	\end{bmatrix}\upsilon=0.
            \end{equation*}
            Then, according to Assumption \ref{ass:input Matrix ass}, we have $\upsilon=0$, which implies $\dot{u}=0$. Moreover, from \eqref{cont_Switched_RLC_a} it follows that  $\gamma=\gamma^\star$, concluding the proof of part (b-i). For part (b-ii), when only $G$ is positive definite, the solutions that start in the forward invariant set $\Pi$ converge to the largest invariant set contained in 
            \begin{equation}\label{set:forward_inv_set_thm2a}
            \Pi \cap \left\{\left(I,V,\dot{I},\dot{V},u_s\right):R\dot{I}=0,\dot{V}=0,\dot{\gamma}=0 \right\}.
            \end{equation}
            On this invariant set, from \eqref{BM_switched_extendedc} and \eqref{switched_outshaping_asy_stab_cond_2} we obtain $\upsilon=0$. Consequently, from \eqref{BM_switched_extendedd} we have $\Gamma^\top(u)\dot{I}= 0$, which implies $\dot{I}= 0$ ($\,\Gamma^\top(u) $ has full column rank). This further implies that, also in this case, the solutions starting in $\Pi$ converge to the set \eqref{set:forward_inv_set_thm2}. The rest of the proof follows from the proof of part (b-i).
%
                \end{proof}
                %
                %
                %
                %
                %
                %
                %
                %
\begin{remark}[{Output shaping stability}]
	{Theorems \ref{prop:Non_Switched_Method_1} and \ref{prop:Switched_Method_1} imply that the integrated output port-variables converge to the corresponding desired values and the first time derivatives of the state  converge to zero. \green{However, this generally does not imply that the trajectories of the closed-loop system asymptotically converge to the corresponding desired operating point\footnote{\green{If $\sigma=m$, the input matrix becomes a full rank matrix and, as a consequence, in case of RLC circuits, asymptotic convergence to the corresponding desired operating point can be proved.}}.}
	Furthermore, for the buck, boost, buck-boost and C\'uk applications (see Section \ref{sec:controllers} and Appendix), we will show that Theorems \ref{prop:Non_Switched_Method_1} and \ref{prop:Switched_Method_1} also imply that all the trajectories of the closed-loop system asymptotically converge to the corresponding desired operating point. We also notice that for the input shaping methodology that we present in next subsection, under some mild and reasonable assumptions, the stability results will be strengthened.}
\end{remark}
\begin{remark}[Limitations of output shaping]\label{rem: prob_out_shaping}
	Note that, if the resistance $R$ of a RLC circuit is negligible, then, in order to establish the stability results presented in Theorem \ref{prop:Switched_Method_1} part (b-ii), condition   \eqref{switched_outshaping_asy_stab_cond_2} needs to be satisfied. More specifically, for a buck converter (see Section \ref{subsec::buck}), this is equivalent to have  $V_s\neq 0$, which is in practice true (see also Assumption~\ref{ass:avail_information}). Yet, for a boost converter (see Section \ref{subsec:boost}), satisfying condition~\eqref{switched_outshaping_asy_stab_cond_2} is equivalent to require $V\neq 0$, which generally could be not always true. Moreover, the output shaping control methodology relies on finding $\gamma$ satisfying $\dot\gamma=my$, with $m\neq 0$. This may not always be possible. Finally, designing a controller based on the output shaping methodology requires the information of $\overline I$, which often depends on the load parameters. Consequently, the output shaping methodology is sensitive to load uncertainty (see Remark \ref{rm:robustness}).
%
\end{remark}
\subsection{Input Shaping}\label{sec:inputshaping}
The second methodology, which we call {\em input shaping}, relies on the integrability property of the \emph{input} port-variables $\upsilon_s$ and $\upsilon$ (see Proposition \ref{prop: passivity_Non_switched electrical
circuits} and Proposition \ref{prop: passivity_switched electrical
circuits}), respectively.
Similarly to the output shaping technique, we use the
integrated input port-variable to shape the closed-loop storage
function such that it has a minimum at the desired operating point (see Objective 1). 
Compared to the output shaping methodology, the input shaping methodology has the following advantages:
\begin{itemize}
\item[(i)]  Assumption \ref{ass:integrability_m} on the integrability
 of the output port-variable is no longer needed;
\item[(ii)]  the knowledge of
$\bar{u}_s$ and $\bar{u}$, given by \eqref{BM_non_switched_equilibrium} and \eqref{BM_switched3}, respectively, does not usually require the information of the load parameters (see the examples in Subsections \ref{subsec::buck} and \ref{subsec:boost}), making the input shaping control methodology robust with
respect to load uncertainty;
\item[(iii)] condition \eqref{switched_outshaping_asy_stab_cond_2} is not required anymore and, in addition, {all} the trajectories of the extended system converge to the desired operating point.
\end{itemize}

We now first present the input shaping methodology for RLC circuits \eqref{BM_non_switched}.
\begin{theorem}[Input shaping for RLC circuits]\label{prop:Non_Switched_Method_2}
	Let Assumptions~\ref{ass:LC}--\ref{ass:feasibility_RLC} hold. Consider system \eqref{BM_non_switched_extended} with control input $\upsilon_{s}$ given by
	\begin{equation}\label{cont_RLC_input_controller}
	\upsilon_s=
	\dfrac{1}{k_d}\left(\mu_s-k_i\left(u_s-\bar{u}_s\right)-
	y_s\right),
	\end{equation}
	with $y_s=B^\top\dot{I}$, $k_d>0,\;k_i>0$ and  $\mu_s \in \mathbb{R}^m$. The following statements hold:
	\begin{itemize}
		\item[(a)]System \eqref{BM_non_switched_extended} in closed-loop with control \eqref{cont_RLC_input_controller} defines a passive map $\mu_s \mapsto \dot{u}_s$ (note that $u_s$ is a state of the extended system \eqref{BM_non_switched_extended}).
		\item[(b)] Let $\mu_s$ be equal to zero. If any of the following conditions holds
	\begin{itemize}
\item[(i)] $R>0$ and $G>0$  
\item[(ii)]$R> 0$ and $\Gamma$  has full column rank  
\item[(iii)]$G> 0$ and $\Gamma^\top$  has full column rank,
		\end{itemize}
then the solution to the closed-loop system asymptotically converges to the set
		\begin{equation} \label{set:input_non_switched}
		\left\{\left(I,V,\dot{I},\dot{V},u_s\right): \dot{V}=0,\dot{I}=0,\dot{u}_s=0,u_s=\overline{u}_s\right\}.
		\end{equation}
		\item[(c)] {If any of the conditions in (b) holds} and the matrix 
		\begin{equation} \label{sys_matrix_RLC}
		\mathcal{A}_s	=\begin{bmatrix}
			R & \Gamma\\\Gamma^\top & -G
			\end{bmatrix}
		\end{equation}
has full-rank, then the solution to the closed-loop system asymptotically converges to the desired  operating point {$\left(\overline{I},V^\star,0,0,\overline{u}_s\right)$}, which is \green{unique}.
	\end{itemize}
                \end{theorem}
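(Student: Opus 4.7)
The plan is to mimic the structure of Theorem~\ref{prop:Non_Switched_Method_1}, but shape the closed-loop storage function using the integrated \emph{input} port-variable $u_s - \bar u_s$ (which is accessible because $u_s$ is a state of the extended system \eqref{BM_non_switched_extended}). For part (a), I would set
\begin{equation*}
S_d = S + \tfrac{1}{2}\,\|u_s - \bar u_s\|_{k_i\mathds{1}}^{2},
\end{equation*}
with $S$ as in \eqref{com_storage}, differentiate along \eqref{BM_non_switched_extended} using Proposition~\ref{prop: passivity_Non_switched electrical circuits}, and substitute the controller \eqref{cont_RLC_input_controller}. The cross term $k_i(u_s-\bar u_s)^\top \dot u_s = k_i(u_s-\bar u_s)^\top\upsilon_s$ combines with $\upsilon_s^\top y_s$, and after using \eqref{cont_RLC_input_controller} we obtain
\begin{equation*}
\dot S_d \leq -\dot I^\top R\dot I - \dot V^\top G \dot V - k_d\|\upsilon_s\|^{2} + \mu_s^\top \upsilon_s \leq \mu_s^\top \dot u_s,
\end{equation*}
establishing passivity from $\mu_s$ to $\dot u_s$.

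For part (b), set $\mu_s = 0$; then $\dot S_d \leq 0$ produces a forward-invariant sublevel set $\Pi$ and LaSalle's invariance principle applies, with the largest invariant set contained in $\Pi \cap \{\upsilon_s=0,\,\dot I^\top R\dot I = 0,\,\dot V^\top G\dot V = 0\}$. Case (b-i) is immediate. For (b-ii), $R>0$ forces $\dot I = 0$, so $\ddot I = 0$; substituting into \eqref{BM_non_switched_extendedc} together with $\upsilon_s = 0$ yields $\Gamma\dot V = 0$, and full column rank of $\Gamma$ gives $\dot V = 0$. Symmetrically, for (b-iii), $G>0$ forces $\dot V = 0$, and \eqref{BM_non_switched_extendedd} with $\ddot V = 0$ gives $\Gamma^\top\dot I = 0$, hence $\dot I = 0$. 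In every case, plugging $\upsilon_s = 0$, $\dot I = 0$ back into \eqref{cont_RLC_input_controller} with $\mu_s = 0$ and $y_s = B^\top\dot I = 0$ forces $u_s = \bar u_s$, yielding the set \eqref{set:input_non_switched}.

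For part (c), on the invariant set \eqref{set:input_non_switched} we have $\dot I = 0$, $\dot V = 0$ and $u_s = \bar u_s$, so \eqref{BM_non_switched_extendeda}--\eqref{BM_non_switched_extendedb} reduce to the linear system
\begin{equation*}
\mathcal{A}_s \begin{bmatrix} I \\ V \end{bmatrix} = \begin{bmatrix} B\bar u_s \\ 0 \end{bmatrix}.
\end{equation*}
By Assumption~\ref{ass:feasibility_RLC}, $(\bar I, V^\star)$ is one solution, and full rank of $\mathcal{A}_s$ implies it is the unique solution, so the trajectories converge to $(\bar I, V^\star, 0, 0, \bar u_s)$.

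The main obstacle I anticipate is case (b-ii)/(b-iii), where vanishing of the quadratic dissipation alone does not immediately pin down all of $\dot I$, $\dot V$, $\upsilon_s$; one must cascade through the second-order extended equations \eqref{BM_non_switched_extendedc}--\eqref{BM_non_switched_extendedd} using the rank hypothesis on $\Gamma$ or $\Gamma^\top$. Once that argument is in place, closing the loop to $u_s = \bar u_s$ is a direct read-off of the controller expression, and part (c) reduces to linear algebra.
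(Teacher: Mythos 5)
Your proposal is correct and follows essentially the same route as the paper's proof: the same shaped storage function $S_d = S + \tfrac{1}{2}\|u_s-\bar u_s\|^2_{k_i}$, the same dissipation inequality and LaSalle argument, the same cascade through \eqref{BM_non_switched_extendedc}--\eqref{BM_non_switched_extendedd} with $\upsilon_s=0$ to handle the rank conditions in (b-ii)/(b-iii), and the same linear-algebra identification $\mathcal{A}_s(I^\top,V^\top)^\top=(\,(B\bar u_s)^\top,0)^\top$ for part (c). No gaps.
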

                \begin{proof}
 We use the integrated input port-variable to shape the desired closed-loop storage function, i.e,
                	\begin{equation}\label{com_storage_RLC_clp}
                	S_d= S+\dfrac{1}{2}\left|\left| u_s-\bar{u}_s\right|\right|_{k_i}^2,
                	\end{equation}
                	where $S$ is given by \eqref{com_storage}. 
                	Then, $ S_d$ along the trajectories of system \eqref{BM_non_switched_extended} controlled by \eqref{cont_RLC_input_controller} satisfies
                	\begin{subequations}  \label{Sdot_IS_RLC}
                		\begin{align} 
                		\dot{S}_d
                		&=-\dot{I}^\top R\dot{I}-\dot{V}^\top G\dot{V}+\dot{u}_s^\top \left(y_s+k_i\left(u_s-\overline{u}_s\right)\right)\label{Sdot_IS_RLCb}\\
                		&= -\dot{I}^\top R\dot{I}-\dot{V}^\top G\dot{V}-k_d\dot{u}_s^\top \dot{u}_s+\mu_s^\top \dot{u}_s                 \label{Sdot_IS_RLCc}\\
                		&\leq \mu_s^\top \dot{u}_s,\label{Sdot_IS_RLCd}
                		\end{align}  
                	\end{subequations}
                	where in \eqref{Sdot_IS_RLCb} we use Proposition \ref{prop: passivity_Non_switched electrical circuits} and the controller \eqref{cont_RLC_input_controller}. This concludes the proof of part (a). For part (b-i), let $\mu_s$ be equal to zero. Then, from \eqref{Sdot_IS_RLCc}, there exists a forward invariant set $\Pi$ and by LaSalle's invariance principle the solutions that start in $\Pi$ converge to the largest invariant set contained in 
                	\begin{equation}\label{set:forward_inv_set_thm3}
                	\Pi \cap \left\{\left(I,V,\dot{I},\dot{V},u_s\right):\dot{I}=0,\dot{V}=0,\dot{u}_s=0 \right\}.
                	\end{equation}
                	On this invariant set, $\dot{I}=0$ and $\dot{u}_s=0$ further imply $y_s=0$ and $\upsilon_{s}=0$, respectively. Consequently, from \eqref{cont_RLC_input_controller} it  follows that $u_s=\bar{u}_s$, concluding the proof of part (b-i). For part (b-ii) and (b-iii), the solutions that start in the forward invariant set $\Pi$ converge to the largest invariant set contained in 
                	\begin{equation}\label{set:forward_inv_set_thm3a}
                	\Pi \cap \left\{\left(I,V,\dot{I},\dot{V},u_s\right):R\dot{I}=0,G\dot{V}=0,\dot{u}_s=0 \right\}.
                	\end{equation}
                	 On this set, from \eqref{BM_non_switched_extendedc} and \eqref{BM_non_switched_extendedd} we get $\Gamma \dot{V}= 0$ and $\Gamma^\top \dot{I}= 0$, respectively. Consequently, if (b-ii) or (b-iii) holds, then $\dot{I}= 0$ and $\dot{V}= 0$. This further implies that the solutions that start in $\Pi$ converge to the set \eqref{set:forward_inv_set_thm3}. The rest of the proof follows from the proof of part (b). \green{For part (c), we first notice that from \eqref{BM_non_switched_equilibrium} we have
                	\begin{align}
                	 \label{BM_non_switched_equilibrium_thm3a}
                	 \begin{split}
                	 \begin{bmatrix}
                	 \overline I\\V^\star
                	 \end{bmatrix}&=\mathcal{A}_s^{-1}\begin{bmatrix}
									B\overline{u}_s\\0
                	 				\end{bmatrix},
                	 \end{split}
                	 \end{align}
 implying that $(\overline I, V^\star)$ is unique.              	 
                	Moreover, on the set \eqref{set:forward_inv_set_thm3}, from \eqref{BM_non_switched_extendeda} and \eqref{BM_non_switched_extendedb} we obtain
                	 \begin{align}
                	 \label{BM_non_switched_equilibrium_thm3}
                	 \begin{split}
                	 \begin{bmatrix}
                	 I\\V
                	 \end{bmatrix}&=\mathcal{A}_s^{-1}\begin{bmatrix}
									B\overline{u}_s\\0
                	 				\end{bmatrix}.
                	 \end{split}
                	 \end{align}
                	 Then, from \eqref{BM_non_switched_equilibrium_thm3a}, $I$ and $V$ converge to $\overline I$ and $V^\star$, respectively.}
                \end{proof}
We now extend these results to s--RLC circuits \eqref{BM_switched2}. 
\begin{theorem}[Input shaping for s--RLC circuits]\label{prop:Switched_Method_2}
	                Let Assumptions \ref{ass:LC}--\ref{ass:avail_information}, \ref{ass:feasibility_sRLC} and \ref{ass:input Matrix ass} hold. Consider system \eqref{BM_switched_extended}  with control input $\upsilon$ given by   
	\begin{equation}\label{cont_Switched_sRLC_b}
	\upsilon= \dfrac{1}{k_d}\left(\mu-k_i\left(u-\bar u\right)-y\right),
	\end{equation}
	with $y$ given by \eqref{output_srlc}, $k_d>0, k_i >0$ and $\mu \in \mathbb{R}$. The following statements hold:
	\begin{itemize}
		\item[(a)]System \eqref{BM_switched_extended} in closed-loop with control \eqref{cont_Switched_sRLC_b} defines a passive map $\mu \mapsto \dot{u}$  (note that $u$ is a state of the extended system \eqref{BM_switched_extended}).
		\item[(b)] Let $\mu$ be equal to zero. If any of the following conditions holds
		\begin{itemize}
			\item[(i)] $R>0$ and $G>0$  
			\item[(ii)]$R> 0$ and $\Gamma(u)$  has full column rank  
			\item[(iii)]$G> 0$ and $\Gamma^\top(u)$  has full column rank,
		\end{itemize}
		then the solution to the closed-loop system asymptotically converges to the set
		\begin{equation} \label{set:input_switched_IS}
\left\{\left(I,V,\dot{I},\dot{V},u\right)|~\dot{V}=0,\dot{I}=0,\dot{u}=0,u=\overline{u}\right\}.
\end{equation}
		\item[(c)] If any of the conditions in (b) holds and the matrix 
		\begin{equation}
		\mathcal{A}=\begin{bmatrix}\label{sys_matrix_sRLC}
		R & \Gamma({\overline{u}})\\\Gamma^\top({\overline{u}}) & -G
		\end{bmatrix}
		\end{equation}
		has full-rank, then the solution to the closed-loop system asymptotically converges to the desired operating point $\left(\overline{I},V^\star,0,0,\overline{u}\right)$, which is \green{unique}.
	\end{itemize}
\end{theorem}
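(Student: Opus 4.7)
The proof follows the same strategy as Theorem \ref{prop:Non_Switched_Method_2}, adapted to the switched dynamics via Proposition \ref{prop: passivity_switched electrical circuits}.

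\textbf{Part (a).} The plan is to shape the closed-loop storage function by exploiting the fact that $u$ is now a state of the extended system \eqref{BM_switched_extended}, so the \emph{integrated} input port-variable is simply $u-\bar u$. I therefore take
\begin{equation*}
S_d = S + \tfrac{1}{2} k_i (u-\bar u)^2,
\end{equation*}
with $S$ given by \eqref{com_storage}. Differentiating along \eqref{BM_switched_extended} and applying Proposition \ref{prop: passivity_switched electrical circuits} together with $\dot u = \upsilon$ yields
\begin{equation*}
\dot S_d \leq -\dot I^\top R \dot I -\dot V^\top G \dot V + \dot u \bigl(y + k_i(u-\bar u)\bigr).
\end{equation*}
Substituting the controller \eqref{cont_Switched_sRLC_b}, which is precisely designed so that $y + k_i(u-\bar u) = \mu - k_d \dot u$, gives
\begin{equation*}
\dot S_d \leq -\dot I^\top R \dot I -\dot V^\top G \dot V - k_d \dot u^2 + \mu \dot u \leq \mu \dot u,
\end{equation*}
establishing the passive map $\mu \mapsto \dot u$.

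\textbf{Part (b).} Setting $\mu=0$ gives $\dot S_d \leq 0$, so there exists a forward invariant sublevel set $\Pi$; by LaSalle's invariance principle, solutions starting in $\Pi$ converge to the largest invariant set contained in
\begin{equation*}
\Pi \cap \bigl\{(I,V,\dot I,\dot V,u) : R\dot I = 0,\; G\dot V = 0,\; \dot u = 0\bigr\}.
\end{equation*}
Under (b-i), $R,G>0$ directly yields $\dot I = \dot V = 0$. For (b-ii) and (b-iii), I would then exploit that $\dot u=0$ (hence $\upsilon = 0$) together with invariance, which forces $\ddot I=0$ and $\ddot V=0$ in the limit, so evaluating \eqref{BM_switched_extendedc}--\eqref{BM_switched_extendedd} on this set leaves $\Gamma(u)\dot V = 0$ and $\Gamma^\top(u)\dot I = 0$; the full column rank hypothesis on $\Gamma(u)$ or $\Gamma^\top(u)$ then closes the argument. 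Finally, $\dot u=0$ combined with $y=0$ (which follows from $\dot I = \dot V = 0$ via \eqref{output_srlc}) and the controller equation \eqref{cont_Switched_sRLC_b} with $\mu=0$ yields $u = \bar u$, so convergence is to the set \eqref{set:input_switched_IS}.

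\textbf{Part (c).} On the set \eqref{set:input_switched_IS} I have $u = \bar u$, $\dot I = \dot V = 0$, so \eqref{BM_switched_extendeda}--\eqref{BM_switched_extendedb} collapse to the algebraic condition
\begin{equation*}
\mathcal{A}\begin{bmatrix} I \\ V \end{bmatrix} = \begin{bmatrix} B(\bar u)V_s \\ 0 \end{bmatrix},
\end{equation*}
while Assumption \ref{ass:feasibility_sRLC} gives the same identity for $(\bar I, V^\star)$. Full rank of $\mathcal{A}$ thus forces $(I,V)=(\bar I, V^\star)$ and uniqueness of the equilibrium.

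\textbf{Main obstacle.} The only non-routine step is the LaSalle analysis in cases (b-ii)/(b-iii): unlike Theorem \ref{prop:Non_Switched_Method_2}, here $\Gamma(u)$ depends on the state $u$, so one must be careful that the rank assumption is invoked at the (constant) value of $u$ attained on the invariant set, and that $\ddot I, \ddot V$ can legitimately be set to zero there via invariance of the set rather than pointwise time-differentiation of an inequality.
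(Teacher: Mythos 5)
Your proposal is correct and follows essentially the same route as the paper: the paper's own proof of this theorem consists of introducing $S_d = S + \tfrac{1}{2}k_i(u-\bar u)^2$ and declaring the rest "analogous to Theorem \ref{prop:Non_Switched_Method_2}", and your sketch is exactly that analogous argument carried out (same storage function, same LaSalle set, same use of \eqref{BM_switched_extendedc}--\eqref{BM_switched_extendedd} with $\upsilon=0$, same $\mathcal{A}$-invertibility step for uniqueness). The only slight looseness is asserting both $\ddot I=0$ and $\ddot V=0$ on the invariant set in cases (b-ii)/(b-iii), whereas only the one implied by the positive-definite damping term is available a priori — but that is the only one each sub-case actually needs, so the argument goes through.
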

                \begin{proof}
                	We use the integrated input port-variable to shape the desired closed-loop storage function, i.e.,
                	\begin{equation}\label{com_storage_sRLC_clp}
                	S_d= S+\dfrac{1}{2}k_i\left(u-\bar{u}\right)^2,
                	\end{equation}
                	where $S$ is given by \eqref{com_storage}. 
                	Then, by using the storage function \eqref{com_storage_sRLC_clp}, the proof is analogous to that of Theorem \ref{prop:Non_Switched_Method_2}.
\end{proof}
\begin{remark}[{Robustness property of input shaping methodology}]\label{rem:robustness}
Note that the controllers \eqref{cont_RLC_input_controller} and \eqref{cont_Switched_sRLC_b} proposed in Theorems \ref{prop:Non_Switched_Method_2} and \ref{prop:Switched_Method_2}, respectively, require information of the desired value of the control input. If $R=0$, from  the first line of \eqref{BM_non_switched_equilibrium} and \eqref{BM_switched3}, it follows that $\overline{u}_s$ and $\overline{u}$ require only information of the desired voltage $V^\star$. This implies that the input shaping methodology is robust with respect to load uncertainty (see Remark~\ref{rm:robustness}). 
\end{remark}
\Rc{\begin{remark}[Initial conditions for $u_s$ and $u$]\label{rem:init_cond}
The control inputs $u$ and $u_s$ of systems \eqref{BM_non_switched} and \eqref{BM_switched} are states of the extended systems \eqref{BM_non_switched_extended} and  \eqref{BM_switched_extended}, respectively. Moreover, we proved that the closed-loop dynamics of these extended systems are asymptotically stable. Therefore, independently of  the initial conditions of $u$ and $u_s$, the proposed dynamic controllers  stabilize the corresponding closed-loop systems to their desired operating points.
\end{remark}}

Before showing the application of the proposed control methodologies to power converters in the next section, we notice that, under certain assumptions on $\Gamma$, the input shaping methodology allows for $R\geq 0$ or $G\geq 0$. Differently, the output shaping methodology allows only for $R\geq 0$. Furthermore, under certain assumptions on the steady state equations, the  input shaping methodology guarantees that \emph{all} the solutions to the extended system  converge to the desired operating point. 
\section{Application to DC-DC Power Converters}
\label{sec:controllers} 
In this section, we use the control
methodologies proposed in the previous section for regulating the output voltage 
of the most widespread DC-DC power converters: the buck and the boost converters\footnote{Buck and boost converters describe in form and function a large family of DC-DC power converters. Moreover, in Appendices \ref{appendixA} and \ref{appendixB} we also study other two  common types of DC-DC power converters: the buck-boost and C\'uk converters}, respectively. 

                \subsection{Buck converter}\label{subsec::buck}
                    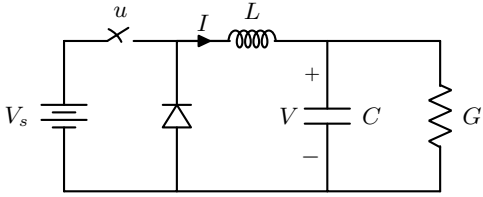
\begin{figure}[t]
        \begin{center}
            \ctikzset{bipoles/length=1cm}
            \begin{circuitikz}[scale=1,transform shape]
                \draw
                (0,1) node [] {} to [cspst=$u$] (1.5,1)
                (1.5,-1) {to [diode, l=, v<={{}}] (1.5,1)}
                (0,-1) node [] {} -- (5.0,-1)
                (0,1) {to [battery, l_=$V_s$] (0,-1)}
                (1.5,1) to [L, l=$L$, i>^={{$I$}}] (3.5,1)
                (3.5,1) {to [C, l=$C$, v={{$V$}}] (3.5,-1)}
                (5.0,-1) {to [R, l_=$G$] (5.0,1)}
                (3.5,1) -- (5.0,1);
                \end{circuitikz}
                \end{center}
                \caption{Electrical scheme of the buck converter.}
                \label{fig:buck_scheme}
                \end{figure}


Consider the electrical scheme of the buck converter in Figure
\ref{fig:buck_scheme}, where the diode is assumed to be ideal. Then,
by applying the Kirchhoff's current (KCL) and voltage (KVL) laws,
the \emph{average} governing dynamic equations of the buck converter
are the following:
                \begin{align}
                \begin{split}
                \label{eq:buck}
                -L\dot{I} &=V - u V_s\\
                C\dot{V} &=I - G V.
                \end{split}
                \end{align}
Equivalently, system \eqref{eq:buck} can be obtained from
\eqref{BM_switched2} with $\Gamma_0=\Gamma_1=1$, $B_0=0$, $B_1=1$ and
$R=0$. By using Proposition \ref{prop: passivity_switched electrical circuits},  the following passivity property is established.
\begin{lemma}[Passivity property of the buck converter]
	Let Assumptions \ref{ass:LC} and \ref{ass:diss_potential} hold. System \eqref{eq:buck} is passive with respect to the storage function \eqref{com_storage} and the port-variables $\dot{u}$ and $\dot{I}V_s$.	
                \end{lemma}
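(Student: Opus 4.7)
The plan is to specialize the general passivity result for s--RLC circuits (Proposition~\ref{prop: passivity_switched electrical circuits}) to the buck converter. First I would identify the specific parameter values that reduce \eqref{BM_switched2} to \eqref{eq:buck}, namely $\Gamma_0=\Gamma_1=1$, $B_0=0$, $B_1=1$, and $R=0$ (with $L,C,G$ scalar). Assumptions~\ref{ass:LC} and \ref{ass:diss_potential} are inherited directly, so Proposition~\ref{prop: passivity_switched electrical circuits} applies verbatim once the output port-variable is simplified under these values.

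Second, I would substitute into the generic output port-variable \eqref{output_srlc}. Since $\Gamma_1-\Gamma_0=0$, the first two terms vanish, leaving only
\begin{equation*}
y = -\dot{I}^\top(B_0-B_1)V_s = \dot{I}\,V_s,
\end{equation*}
while $\upsilon = \dot{u}$ by \eqref{BM_switched_extendede}. Hence the generic supply-rate $\upsilon y$ specializes precisely to $\dot{u}\,\dot{I}\,V_s$.

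Third, to make the argument self-contained I would verify the dissipation inequality directly from the extended dynamics of \eqref{eq:buck}, i.e., from
\begin{align*}
-L\ddot{I} &= \dot{V} - \dot{u}V_s,\\
C\ddot{V} &= \dot{I} - G\dot{V},
\end{align*}
(using Assumption~\ref{ass:avail_information} that $V_s$ is constant). Differentiating $S(\dot I,\dot V)=\tfrac12 L\dot I^{2}+\tfrac12 C\dot V^{2}$ along these dynamics yields, after the cross-terms $\dot{I}\dot{V}$ cancel (they correspond to $\Gamma_1=\Gamma_0=1$),
\begin{equation*}
\dot{S} = -G\dot{V}^{2} + \dot{u}\,\dot{I}\,V_s \;\le\; \dot{u}\,\bigl(\dot{I}V_s\bigr),
\end{equation*}
which is exactly the desired passivity statement with input $\dot{u}$ and conjugate output $\dot{I}V_s$.

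There is no real obstacle here: the proof is essentially a one-line corollary of Proposition~\ref{prop: passivity_switched electrical circuits}. The only care needed is correctly tracking the signs in \eqref{output_srlc} (the $(B_0-B_1)$ factor produces a positive $\dot{I}V_s$) and noting that, because $R=0$, the $-\dot{I}^{\top}R\dot{I}$ term drops out so dissipation is carried entirely by the conductance $G\ge 0$. In particular, positive semi-definiteness of $G$ is all that is needed to obtain the inequality.
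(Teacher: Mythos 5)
Your proposal is correct and matches the paper's approach: the paper establishes this lemma precisely by identifying \eqref{eq:buck} with \eqref{BM_switched2} for $\Gamma_0=\Gamma_1=1$, $B_0=0$, $B_1=1$, $R=0$ and invoking Proposition~\ref{prop: passivity_switched electrical circuits}, under which the output \eqref{output_srlc} collapses to $\dot{I}V_s$ exactly as you compute. Your additional direct verification of $\dot S=-G\dot V^2+\dot u\,\dot I V_s\le \dot u\,(\dot I V_s)$ is a correct (and welcome) elaboration of the same argument, which the paper leaves implicit.
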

            %
                By virtue of the above passivity property, we can now use the output shaping and input shaping control methodologies to design voltage controllers. 
                \begin{corollary}[Output shaping for the buck converter]\label{cor:buck_output}
                Let Assumptions \ref{ass:LC}--\ref{ass:avail_information} and {\ref{ass:feasibility_sRLC}} hold.	Consider system \eqref{eq:buck}  with the dynamic controller 
                	\begin{equation}\label{cont_buck}
                	\dot{u}= -V_s\left(k_i\left(I-\overline{I} \right)+k_d \dot{I}\right),
                	\end{equation}
                	with $k_d>0$ and $k_i>0$. Then, the solution $(I,V,u)$ to the closed-loop system asymptotically converges to the desired steady-state  $\left(\overline{I},V^\star,\overline{u}\right)$.
                \end{corollary}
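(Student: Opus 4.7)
The plan is to recognize the buck converter \eqref{eq:buck} as an instance of the s--RLC class \eqref{BM_switched2} with $\Gamma_0=\Gamma_1=1$, $B_0=0$, $B_1=1$, $R=0$, and then simply specialize Theorem \ref{prop:Switched_Method_1} (output shaping for s--RLC circuits) to this setting. Since $R=0$, I must invoke part (b-ii) of that theorem, so I must verify: $G>0$ (taken to hold for a standard resistive load, consistent with Assumption \ref{ass:diss_potential} together with feasibility); $\Gamma^\top(u)=1$ has full column rank trivially; and the condition \eqref{switched_outshaping_asy_stab_cond_2} holds because $(\Gamma_1-\Gamma_0)V-(B_1-B_0)V_s=-V_s\neq 0$ by Assumption \ref{ass:avail_information}.

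Next, I would identify the output port-variable. From \eqref{output_srlc}, since $\Gamma_1-\Gamma_0=0$ and $B_0-B_1=-1$, one obtains $y=\dot I\,V_s$. An integrating factor is then $m=1$ with $\gamma=V_sI$ (recall $V_s$ is constant), so $\gamma^\star=V_s\overline I$ and Assumption \ref{ass:integrability_m} is satisfied. Substituting $m=1$, $\gamma=V_sI$ and $\mu=0$ into the generic controller \eqref{cont_Switched_RLC_a} yields
\begin{equation*}
\dot u = -V_s\bigl(k_i(I-\overline I)+k_d\dot I\bigr),
\end{equation*}
which is precisely \eqref{cont_buck}. Hence Theorem \ref{prop:Switched_Method_1}(b-ii) applies and the closed-loop trajectories converge to the set \eqref{set:ouput_switched}, i.e.\ $\dot I=0,\ \dot V=0,\ \dot u=0,\ \gamma=\gamma^\star$, which in the present notation reads $I=\overline I$ together with the three derivative conditions.

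The only remaining step is to upgrade convergence to the set into convergence to the single point $(\overline I,V^\star,\overline u)$. On the invariant set, $\dot I=0$ in \eqref{eq:buck} gives $V=uV_s$, and $\dot V=0$ combined with $I=\overline I$ gives $\overline I=GV$, so $V=\overline I/G$. From Assumption \ref{ass:feasibility_sRLC} applied to \eqref{eq:buck}, the steady state satisfies $\overline I=GV^\star$ and $V^\star=\overline u V_s$; since $G>0$ and $V_s\neq 0$, this yields $V=V^\star$ and then $u=V^\star/V_s=\overline u$. Therefore the closed-loop solution $(I,V,u)$ asymptotically converges to $(\overline I,V^\star,\overline u)$.

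There is no substantial obstacle here beyond verifying the hypotheses of Theorem \ref{prop:Switched_Method_1}; the one subtle point is ensuring that $\gamma^\star=V_s\overline I$ is the correct choice of reference so that the cross term in \eqref{Sdot_OS_sRLCa} cancels and the controller matches \eqref{cont_buck}. Everything else is a direct specialization.
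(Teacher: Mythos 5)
Your proposal is correct and follows essentially the same route as the paper: it verifies condition \eqref{switched_outshaping_asy_stab_cond_2} reduces to $V_s\neq 0$, selects $m=1$, $\gamma=IV_s$, $\gamma^\star=\overline I V_s$, and specializes Theorem \ref{prop:Switched_Method_1} before concluding pointwise convergence on the invariant set via \eqref{eq:buck} and Assumption \ref{ass:feasibility_sRLC}. Your final step is in fact slightly more explicit than the paper's (which obtains $\dot u=0$ by differentiating the first line of \eqref{eq:buck} and then reads $I=\overline I$ off the controller), but the substance is identical.
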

                \begin{proof}
For the buck converter \eqref{eq:buck},  condition \eqref{switched_outshaping_asy_stab_cond_2} is equivalent to require $V_s\neq 0$, which holds by Assumption~\ref{ass:avail_information}. Consequently, Theorem \ref{prop:Switched_Method_1} can  be used by selecting $m=1$, $\gamma=I V_s$  and $\gamma^\star=\overline{I} V_s$. In analogy with Theorem~\ref{prop:Switched_Method_1}, the solutions to the closed-loop system converge to the set
\begin{equation}\label{set:Buck_OS}
\Pi \cap \left\{\left(I,V,u\right):\dot{I}=0,\dot{V}=0\right\}.
\end{equation}
By differentiating the first line of \eqref{eq:buck}, on this invariant set we get $\dot{u}=0$. As a consequence, from \eqref{cont_buck} it follows that $I=\overline{I}$ which further implies $V=V^\star$ and $u=\overline{u}$ (see Assumption~\ref{ass:feasibility_sRLC}). 
                \end{proof}

                \begin{corollary}[Input shaping for the buck converter]\label{prop:input_shaping_buck} Let Assumptions \ref{ass:LC}--\ref{ass:avail_information} and  \ref{ass:feasibility_sRLC} hold.
                	                	Consider system \eqref{eq:buck} with the dynamic controller
                \begin{equation}\label{cont_buck_input}
\dot{u}= -\dfrac{1}{k_d}\left(k_i\left(u-\overline{u} \right)+ V_s\dot{I}\right),
\end{equation}
                	with $k_d >0$ and $k_i>0$. Then, the solution $(I,V,u)$ to the closed-loop system asymptotically converges to the desired steady-state  $\left(\overline{I},V^\star,\overline{u}\right)$.
                \end{corollary}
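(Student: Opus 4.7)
The plan is to recognize the buck converter as a special instance of the general s--RLC system \eqref{BM_switched2} and to apply Theorem \ref{prop:Switched_Method_2} directly. I will first match parameters: for \eqref{eq:buck} one has $\Gamma_0=\Gamma_1=1$, $B_0=0$, $B_1=1$, $R=0$, while $G>0$ by the feasibility assumption. With these choices, the output port-variable defined in \eqref{output_srlc} collapses: both terms carrying the factor $(\Gamma_1-\Gamma_0)$ vanish, and what survives is $y=-\dot{I}^\top(B_0-B_1)V_s=\dot{I}\,V_s$. Substituting this $y$ and $\mu=0$ into the generic input-shaping law \eqref{cont_Switched_sRLC_b} yields exactly the dynamic controller \eqref{cont_buck_input} of the statement, so the setting of Theorem \ref{prop:Switched_Method_2} is fully applicable.

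Next, I will check the hypotheses. Assumption \ref{ass:input Matrix ass} holds because the controllability vector reduces to $[-V_s,\,0]^\top$, which is nonzero by Assumption \ref{ass:avail_information}. Since $R=0$, conditions (b-i) and (b-ii) of Theorem \ref{prop:Switched_Method_2} are unavailable, but condition (b-iii) applies: $G>0$ and $\Gamma^\top(u)=1$ is trivially of full column rank. Consequently, Theorem \ref{prop:Switched_Method_2}(b) gives convergence of the closed-loop solution to the invariant set where $\dot{I}=0$, $\dot{V}=0$, $\dot{u}=0$ and $u=\overline{u}$.

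Finally, to upgrade set-convergence to pointwise convergence to $(\overline{I},V^\star,\overline{u})$, I will invoke Theorem \ref{prop:Switched_Method_2}(c). The associated matrix in \eqref{sys_matrix_sRLC} specializes here to
\[
\mathcal{A}=\begin{bmatrix}0 & 1\\ 1 & -G\end{bmatrix},
\]
with $\det\mathcal{A}=-1\neq 0$, so $\mathcal{A}$ is of full rank and the steady state from Assumption \ref{ass:feasibility_sRLC} is unique. Applying part (c) of the theorem, the extended state $(I,V,\dot{I},\dot{V},u)$ converges to $(\overline{I},V^\star,0,0,\overline{u})$, which in the original coordinates of \eqref{eq:buck} is precisely the claimed convergence $(I,V,u)\to(\overline{I},V^\star,\overline{u})$. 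There is no real obstacle here beyond the bookkeeping of matching parameters; the only point that deserves attention is the verification that $\mathcal{A}$ is nonsingular despite $R=0$, which is what guarantees uniqueness of the equilibrium and allows the input-shaping framework to deliver convergence to the desired operating point rather than merely to an invariant set.
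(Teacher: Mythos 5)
Your proposal is correct and follows exactly the route the paper intends: the paper's own proof is the single line ``analogous to that of Theorem~\ref{prop:Switched_Method_2},'' and you have simply filled in the specialization ($\Gamma_0=\Gamma_1=1$, $B_0=0$, $B_1=1$, $R=0$, $y=\dot I V_s$), checked that Assumption~\ref{ass:input Matrix ass} holds automatically since the controllability vector is $[-V_s,\,0]^\top$ with $V_s\neq 0$, invoked condition (b-iii), and verified that $\mathcal{A}$ is nonsingular for part (c). The only minor imprecision is attributing $G>0$ to the feasibility assumption (it is really a standing property of the load in the buck circuit, implicitly assumed by the paper as well), but this does not affect the argument.
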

                \begin{proof}
The proof is analogous to that of Theorem \ref{prop:Switched_Method_2}. 
                \end{proof}
\subsection{Boost converter} \label{subsec:boost}
Consider now the electrical scheme of the boost converter in Figure
\ref{fig:boost_scheme}, where the diode is again assumed to be
ideal. The \emph{average} governing dynamic equations of the boost
converter are the following:
                \begin{align}
                \begin{split}
                \label{eq:boost}
                -L\dot{I} &=(1 - u) V - V_s\\
                C\dot{V} &=(1 - u) I - G V.
                \end{split}
                \end{align}
Also in this case, system \eqref{eq:boost} can be 
obtained from \eqref{BM_switched2} with  $\Gamma_0=1$, $\Gamma_1=0$,
$B_0=B_1=1$ and $R=0$. By using Proposition \ref{prop: passivity_switched electrical circuits}, the following passivity property is established.
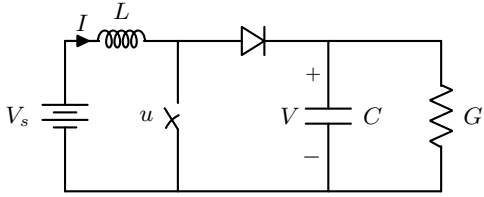
\begin{figure}[t]
                \begin{center}
                \ctikzset{bipoles/length=1cm}
                \begin{circuitikz}[scale=1,transform shape]
                \draw
                (0,1) to [L, l=$L$, i>^=$I$] (1.5,1)
                (0,-1) node [] {} -- (5,-1)
                (0,1) {to [battery, l_=$V_s$] (0,-1)}
                (1.5,-1) {to [cspst=$u$] (1.5,1)}
                (1.5,1) to [diode] (3.5,1)
                (3.5,1) {to [C, l=$C$, v>={{$V$}}] (3.5,-1)}
                (5,-1) {to [R, l_=$G$] (5,1)}
                (3.5,1) -- (5.0,1);
                \end{circuitikz}
                \end{center}
                \caption{Electrical scheme of the boost converter.}
                \label{fig:boost_scheme}
                \end{figure}
                \begin{lemma}[Passivity property of boost converter]
               Let Assumptions \ref{ass:LC} and \ref{ass:diss_potential} hold. System \eqref{eq:boost} is passive with respect to the storage function \eqref{com_storage} and the port-variables $\dot{u}$ and $\dot{I}V-\dot{V}I$.
                \end{lemma}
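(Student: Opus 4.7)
The plan is to recognize the boost converter \eqref{eq:boost} as a special instance of the general s--RLC model \eqref{BM_switched2} and then invoke Proposition~\ref{prop: passivity_switched electrical circuits} directly. Concretely, I would first verify the identification: with $\Gamma_0=1$, $\Gamma_1=0$, $B_0=B_1=1$ and $R=0$, the general model \eqref{BM_switched2} reduces to $-L\dot I = (u\cdot 0 + (1-u)\cdot 1)V - V_s = (1-u)V - V_s$ and $C\dot V = (1-u)I - GV$, which matches \eqref{eq:boost} verbatim. Since Assumption~\ref{ass:LC} and Assumption~\ref{ass:diss_potential} are already in force (note $R=0$ is positive semi-definite and $G\geq 0$ is the assumed scalar load conductance), the hypotheses of Proposition~\ref{prop: passivity_switched electrical circuits} are satisfied.

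Next I would specialize the output port-variable given in \eqref{output_srlc} to the boost parameters. Substituting $\Gamma_1-\Gamma_0=-1$ and $B_0-B_1=0$ yields
\begin{equation*}
y=\dot V(\Gamma_1-\Gamma_0)^\top I-\dot I(\Gamma_1-\Gamma_0)V-\dot I(B_0-B_1)V_s = -\dot V I + \dot I V = \dot I V-\dot V I,
\end{equation*}
which is precisely the stated port-variable. The corresponding dual port-variable is $\upsilon=\dot u$ from the extended dynamics \eqref{BM_switched_extendede}.

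Finally, I would apply Proposition~\ref{prop: passivity_switched electrical circuits} to conclude that the extended dynamics of \eqref{eq:boost} satisfy $\dot S \leq \upsilon\, y = \dot u\,(\dot I V-\dot V I)$ along the trajectories, with $S$ given by \eqref{com_storage}; this is exactly the claimed passivity property. I do not foresee any real obstacle here: the entire argument is a substitution of the converter's structural matrices into the general s--RLC passivity result, and the only point requiring a brief mention is that Proposition~\ref{prop: passivity_switched electrical circuits} only uses $R\geq 0$ and $G\geq 0$ through the dissipation term $-\dot I^\top R\dot I-\dot V^\top G\dot V \le 0$ in \eqref{Sdot_passivity_sRLC}, so the case $R=0$ is covered without modification.
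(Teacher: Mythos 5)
Your proposal is correct and is exactly the argument the paper intends: the lemma is stated as an immediate consequence of Proposition~\ref{prop: passivity_switched electrical circuits} with the boost identification $\Gamma_0=1$, $\Gamma_1=0$, $B_0=B_1=1$, $R=0$, and your substitution into \eqref{output_srlc} correctly yields $y=\dot I V-\dot V I$. The paper omits these computations entirely, so you have simply written out the routine verification it leaves implicit.
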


\begin{remark}[Integrable output port-variables for the boost converter]
Note that the output port-variable $\dot{I}V-\dot{V}I$ is not integrable. It is however possible to find a different output port-variable  that is indeed integrable (see Lemma \ref{lemma: passivity_switched electrical circuits}). 
More precisely, if we choose for instance $m={1}/{I^2}$, we obtain the passive map $\dot{u}I^2 \mapsto -\frac{d}{dt}({V}/{I})$ (see Table \ref{table:passive maps} for different passivity properties corresponding to different choices of (integrable) output port-variables).

\begin{table}[t]
                \centering
                \caption{Passive maps for the boost-converter}
                { \begin{tabular}{c  c c}
 \toprule
 $m(I,V)$  & Passive map & $\gamma(I,V)$ \\ [0.5ex]
\midrule
 1& $\dot{u} \mapsto \dot{I}V-\dot{V}I$ & \\
 \midrule
 $\dfrac{1}{V^2}$& $V^2\dot{u} \mapsto \dfrac{d}{dt}\dfrac{I}{V}$ & $\dfrac{I}{V}$\\
 \midrule
 $\dfrac{1}{I^2}$& $I^2\dot{u} \mapsto -\dfrac{d}{dt}\dfrac{V}{I}$ & $-\dfrac{V}{I}$\\
 \midrule
 $\dfrac{1}{V^2+I^2}$& $(V^2+I^2)\dot{u} \mapsto \dfrac{d}{dt}\tan^{-1}\left(\dfrac{I}{V}\right)$& $\tan^{-1}\left(\dfrac{I}{V}\right)$ \\
 \midrule
 $\dfrac{1}{IV}$ &  $(IV)\dot{u} \mapsto \dfrac{d}{dt}ln\left(\dfrac{I}{V}\right)$& $ln\left(\dfrac{I}{V}\right)$ \\  [1ex]
 \bottomrule
\end{tabular}}
                \label{table:passive maps}
                \end{table}

                \end{remark}
  By virtue of the above passivity property, we can now use the output shaping and input shaping control methodologies to design voltage controllers.                
                \begin{corollary}[Output shaping for the boost converter] 
                \label{corollary3}
                Let Assumptions \ref{ass:LC}--\ref{ass:avail_information} and \ref{ass:feasibility_sRLC} hold.  Moreover, let $V(t)$ be different from zero for any $t\geq 0$.
                	 Consider system \eqref{eq:buck} with the dynamic controller
                \begin{equation}\label{cont_boost}
\dot{u}= -\dfrac{1}{V^2}\left(k_i\left(\dfrac{I}{V}-\dfrac{\overline{I}}{V^\star} \right)+k_d \dfrac{d}{dt}\dfrac{I}{V}\right),
\end{equation}
                	with $k_d>0$ and $k_i>0$. Then, the solution $(I,V,u)$ to the closed-loop system asymptotically converges to the desired steady-state  $\left(\overline{I},V^\star,\overline{u}\right)$.
                \end{corollary}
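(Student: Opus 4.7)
The plan is to specialize Theorem~\ref{prop:Switched_Method_1} to the boost converter by choosing the integrating factor $m=1/V^{2}$, which is admissible since the hypothesis $V(t)\neq 0$ makes it well-defined and nowhere vanishing. Proposition~\ref{prop: passivity_switched electrical circuits} applied with the boost parameters $\Gamma_{0}=1$, $\Gamma_{1}=0$, $B_{0}=B_{1}=1$ reduces the output port-variable to $y=\dot{I}V-\dot{V}I$, and a direct computation gives $my=(V\dot{I}-I\dot{V})/V^{2}=\frac{d}{dt}(I/V)$. Hence Assumption~\ref{ass:integrability_m} holds with $\gamma=I/V$ and $\gamma^{\star}=\overline{I}/V^{\star}$, and the controller \eqref{cont_boost} has exactly the form \eqref{cont_Switched_RLC_a} with $\mu=0$.

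I would then verify the remaining hypotheses of Theorem~\ref{prop:Switched_Method_1}(b-ii). Assumption~\ref{ass:input Matrix ass} is satisfied because $(\Gamma_{1}-\Gamma_{0})V-(B_{1}-B_{0})V_{s}=-V\neq 0$. The scalar $\Gamma(u)=1-u$ has full column rank on $(0,1)$, and condition \eqref{switched_outshaping_asy_stab_cond_2} reduces to $-V\neq 0$, which holds. Under the standard boost-load condition $G>0$, Theorem~\ref{prop:Switched_Method_1} yields convergence of every closed-loop trajectory to the invariant set on which $\dot{I}=0$, $\dot{V}=0$, $\dot{u}=0$ and $I/V=\overline{I}/V^{\star}$.

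The remaining task is to extract the equilibrium values from this invariant set. Substituting $\dot{I}=0$ and $\dot{V}=0$ into \eqref{eq:boost} yields $(1-u)V=V_{s}$ and $(1-u)I=GV$, while Assumption~\ref{ass:feasibility_sRLC} supplies $(1-\overline{u})V^{\star}=V_{s}$ and $(1-\overline{u})\overline{I}=GV^{\star}$. Combining $(1-u)I=GV$ with $I/V=\overline{I}/V^{\star}$ gives $(1-u)\overline{I}/V^{\star}=G=(1-\overline{u})\overline{I}/V^{\star}$, whence $u=\overline{u}$ (using $\overline{I}\neq 0$, which follows from $G>0$ and feasibility). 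Substituting back into $(1-u)V=V_{s}=(1-\overline{u})V^{\star}$ gives $V=V^{\star}$, and finally $I=V\cdot\overline{I}/V^{\star}=\overline{I}$.

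The main obstacle I anticipate is that Theorem~\ref{prop:Switched_Method_1} by itself only ensures $\gamma\to\gamma^{\star}$ and $\dot{x}\to 0$; separating the individual components $(I,V,u)$ on the invariant set requires the algebraic argument above, which hinges on both $V\neq 0$ (imposed as a hypothesis, and also needed for $m=1/V^{2}$ to make sense along trajectories) and $\overline{I}\neq 0$ (implicit in boost operation with $G>0$). If either condition were to fail, uniqueness of the limit on the invariant set would not be immediate, so these non-degeneracy conditions are precisely what sharpens Theorem~\ref{prop:Switched_Method_1} from set-convergence to point-convergence in this corollary.
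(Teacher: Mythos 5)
Your proposal is correct and follows essentially the same route as the paper: choose $m=1/V^{2}$, $\gamma=I/V$, verify that condition \eqref{switched_outshaping_asy_stab_cond_2} reduces to $V\neq 0$, invoke Theorem~\ref{prop:Switched_Method_1} to reach the invariant set with $\dot I=\dot V=\dot u=0$ and $\gamma=\gamma^{\star}$, and then use the steady-state equations of \eqref{eq:boost} to pin down $u=\overline{u}$, $V=V^{\star}$, $I=\overline{I}$. The only cosmetic difference is that the paper writes the last step as $u=1-G V/I=1-G/\gamma=\overline{u}$ (implicitly needing $I\neq 0$), whereas you make the equivalent non-degeneracy condition $\overline{I}\neq 0$ explicit.
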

                \begin{proof}
For the boost converter \eqref{eq:boost}, condition \eqref{switched_outshaping_asy_stab_cond_2} is equivalent to require $V(t)\neq 0$ for any $t\geq0$, which holds by assumption.
Consequently, Theorem \ref{prop:Switched_Method_1} can  be used by selecting for instance $m={1}/{V^2}$, $\gamma={I}/{V}$  and $\gamma^\star={\overline{I}}/ {V^\star}$. 
In analogy with Theorem~\ref{prop:Switched_Method_1}, the solutions to the closed-loop system converge to the set
	\begin{equation}\label{set:Boost_OS}
	\Pi \cap \left\{\left(I,V,u\right):\dot{V}=0,\dot{I}=0\right\}.
	\end{equation}
 By differentiating the first line of \eqref{eq:boost}, on this invariant set  we get $\dot{u}=0$. As a consequence, from \eqref{cont_boost} it follows that $\gamma=\gamma^\star$.  Then, from the second line of \eqref{eq:boost} it yields
	\begin{equation}
		u=1-G\dfrac{V}{I}=1-G\dfrac{1}{\gamma}=1-G\dfrac{V^\star}{\overline{I}}=\overline{u},
	\end{equation}
which further implies $V=V^\star$ and $I=\overline{I}$. 
                \end{proof}
\begin{corollary}[Input shaping for boost converter]\label{prop:input_shaping_boost}
	Let Assumptions \ref{ass:LC}--\ref{ass:avail_information},  \ref{ass:feasibility_sRLC} and \ref{ass:input Matrix ass} hold\footnote{For the boost converter, Assumption \ref{ass:input Matrix ass} is equivalent to require that $V$ and $I$ are not equal to zero at the same time (i.e., $V$ can be equal to zero when $I$ is different from zero and vice versa). We note that to use the output shaping methodology we need a stronger assumption, i.e., $V$ different from zero for any $t\geq 0$ (see Corollary \ref{corollary3}).}. Consider system \eqref{eq:boost} with the dynamic controller
\begin{equation}\label{cont_boost_input}
\dot{u}:=-\dfrac{1}{k_d}\left(k_i\left(u-\bar{u} \right)+ \left(\dot{I}V-\dot{V}I\right)\right),
\end{equation}
                with $k_d>0$ and $k_i>0$. Then, the solution $(I,V,u)$ to the closed-loop system asymptotically converges to the desired steady-state  $\left(\overline{I},V^\star,\overline{u}\right)$.
                \end{corollary}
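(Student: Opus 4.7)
The plan is to apply Theorem \ref{prop:Switched_Method_2} directly to the boost converter, mirroring the treatment of the buck converter in Corollary \ref{prop:input_shaping_buck}. First, I would identify system \eqref{eq:boost} with the general s--RLC model \eqref{BM_switched2} by choosing $R=0$, $\Gamma_0=1$, $\Gamma_1=0$ and $B_0=B_1=1$. With these identifications the output port-variable \eqref{output_srlc} collapses to $y=\dot{I}V-\dot{V}I$ and the proposed controller \eqref{cont_boost_input} becomes exactly the generic input-shaping law \eqref{cont_Switched_sRLC_b} evaluated at $\mu=0$. Part (a) of the theorem then yields the passivity of the closed-loop map $\mu\mapsto\dot u$ at no extra cost.

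Next, I would verify the hypotheses of parts (b) and (c). Since $R=0$, only condition (b-iii) is applicable: it requires $G>0$, which is the standard assumption on the resistive load, together with the scalar $\Gamma^\top(u)=1-u$ having full column rank, i.e., $1-u\neq 0$. Inside the LaSalle argument underlying the theorem, on the largest invariant set one has $\dot V=\dot u=0$; combining this with the controller equation \eqref{cont_boost_input} gives $k_i(u-\bar u)+\dot{I}V-\dot{V}I=0$, which together with $\dot I=\dot V=0$ forces $u=\bar u\in(0,1)$ by Assumption \ref{ass:feasibility_sRLC}, so $1-u\neq 0$ on the $\omega$-limit set. This delivers convergence to the set \eqref{set:input_switched_IS}. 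For part (c) I would then check that
\[
\mathcal{A}=\begin{bmatrix} 0 & 1-\bar u\\ 1-\bar u & -G\end{bmatrix}
\]
has full rank, which is immediate from $\det\mathcal{A}=-(1-\bar u)^2\neq 0$; this secures uniqueness of the solution $(\bar I,V^\star)$ to the steady-state equations \eqref{BM_switched3} and upgrades the set convergence to convergence of $(I,V,u)$ to the desired point $(\bar I,V^\star,\bar u)$.

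The main obstacle, exactly as in the proof of Theorem \ref{prop:Switched_Method_2} when $R=0$, is the LaSalle propagation from $\{\dot V=0,\;\dot u=0\}$ to $\dot I=0$, which rests on $\Gamma^\top(u)=1-u\neq 0$ along the limit set and not merely at steady state. The bootstrap argument sketched above, using the controller law to pin $u=\bar u$ before invoking the rank argument on $\Gamma^\top(u)\dot I=0$ coming from \eqref{BM_switched_extendedd}, is what makes this step go through for the boost converter despite the absence of a series resistance and completes the reduction to Theorem \ref{prop:Switched_Method_2}.
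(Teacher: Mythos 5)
Your overall route coincides with the paper's: the paper proves this corollary simply by declaring it analogous to Theorem \ref{prop:Switched_Method_2}, and your instantiation ($R=0$, $\Gamma_0=1$, $\Gamma_1=0$, $B_0=B_1=1$, hence $y=\dot I V-\dot V I$, condition (b-iii) as the relevant case, and the full-rank check on $\mathcal{A}$ via $\det\mathcal{A}=-(1-\overline u)^2\neq 0$) is the correct way to flesh that out. However, the one step you add on top of the theorem --- the ``bootstrap'' that pins $u=\overline u$ \emph{before} invoking $\Gamma^\top(u)\dot I=(1-u)\dot I=0$ --- is circular as written. On the LaSalle limit set you only know $G\dot V=0$ and $\dot u=0$, since with $R=0$ the dissipation inequality does not penalize $\dot I$; yet your derivation of $u=\overline u$ from the controller identity $k_i(u-\overline u)+\dot I V-\dot V I=0$ explicitly invokes ``together with $\dot I=\dot V=0$''. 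You are using $\dot I=0$ to prove $1-u\neq 0$, and $1-u\neq 0$ to prove $\dot I=0$.

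The gap is fixable by a short case analysis in place of the bootstrap. On the invariant set $u$ is constant and \eqref{BM_switched_extendedd} gives $(1-u)\dot I=0$. If $u\neq 1$, then $\dot I=0$ directly, and the controller identity yields $u=\overline u$. If instead $u=1$, the voltage equation $C\dot V=(1-u)I-GV=0$ together with $G>0$ forces $V=0$, so the controller identity reduces to $k_i(1-\overline u)=-\dot I V=0$, contradicting $\overline u\in(0,1)$ from Assumption \ref{ass:feasibility_sRLC}. Hence $u\neq 1$ on the limit set, $\dot I=0$ follows, and the reduction to Theorem \ref{prop:Switched_Method_2} goes through noncircularly; the remainder of your argument (parts (a) and (c)) is correct as stated.
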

                \begin{proof}
The proof is analogous to that of Theorem \ref{prop:Switched_Method_2}.
\end{proof}

In Table \ref{tab:supplyrates} we have summerized the passivity properties derived in the pH~{\cite{915398}}, BM~{\cite{1230225}} and proposed framework, respectively.

For the sake of completeness, we now show in Figures \ref{fig:ex_buck_output_shaping}--\ref{fig:ex_boost_input_shaping} the simulation results obtained by implementing the proposed methodologies to control the output voltage of a buck and boost converter, respectively.  
In order to verify the robustness property of the proposed controllers with respect to the load uncertainty, the value of the load is changed from $G$ to $G+\Delta G$, with $\Delta G$ uncertain, at the time instant $t=$ \SI{1}{\second} (all the simulation parameters are reported at the end of the caption of each figure). 
More precisely, Figures \ref{fig:ex_buck_output_shaping} and \ref{fig:ex_boost_output_shaping} show that after the load variation the voltage converges to a steady state value different from the desired one. Controllers \eqref{cont_buck} and \eqref{cont_boost} depend indeed on $\overline I = GV^\star$ and, therefore, require  the information of $G$. 
On the contrary, Figures \ref{fig:ex_buck_input_shaping} and \ref{fig:ex_boost_input_shaping} clearly show that the input shaping methodology is robust with respect to load uncertainty (see also Remark \ref{rem:robustness}).
\magenta{Furthermore, for the sake of fairness, we compare the proposed input shaping methodology with the {\em Parallel Damping PBC} approach proposed in \cite[Section V]{1323174}. 
 Figures \ref{fig:ex_buck_output_shaping} and \ref{fig:ex_boost_output_shaping} indicate that {\em Parallel Damping PBC} approach is also robust with respect to load variation. However, it is important to note that the {\em Parallel Damping PBC} approach requires the information of the filter inductance $L$ and capacitance $C$.} 
 \begin{figure}
 	\includegraphics[width=\columnwidth]{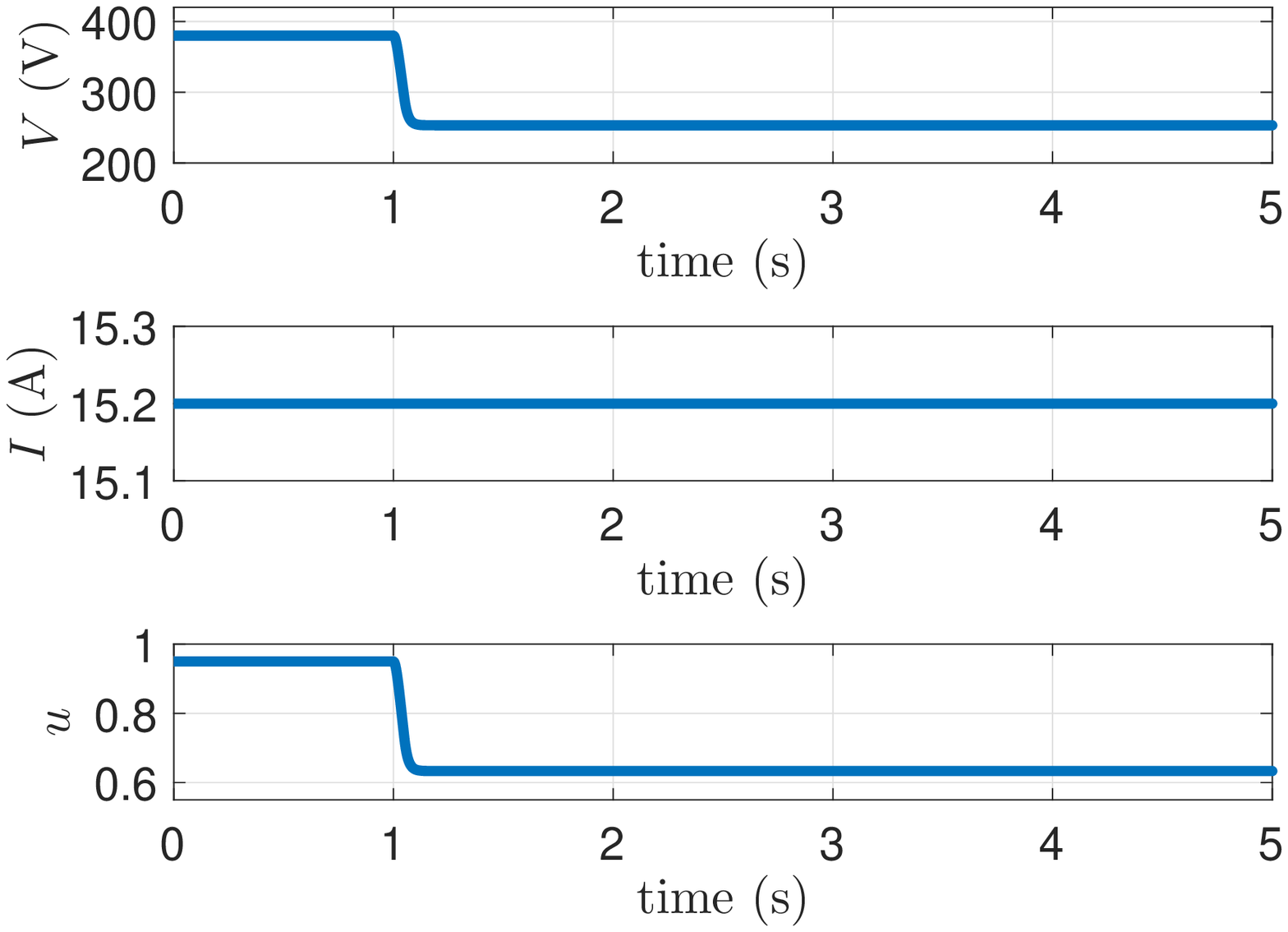}
 	\caption{(Output shaping for the buck converter) From the top: time evolution of the voltage, current and duty cycle considering a load variation $\Delta G$ at the time instant $t=$ 1 s (Parameters: $L =$ \SI{1}{\milli\henry}, $C =$ \SI{1}{\milli\farad}, $V_s =$ \SI{400}{\volt}, $G =$ \SI{0.04}{\siemens}, $\Delta G =$ \SI{0.02}{\siemens}, $V^\star =$ \SI{380}{\volt}, $k_d =$ \num{5e5}, $k_i =$ \num{1e7}).} \label{fig:ex_buck_output_shaping}
 \end{figure}
\begin{figure}
\includegraphics[width=\columnwidth]{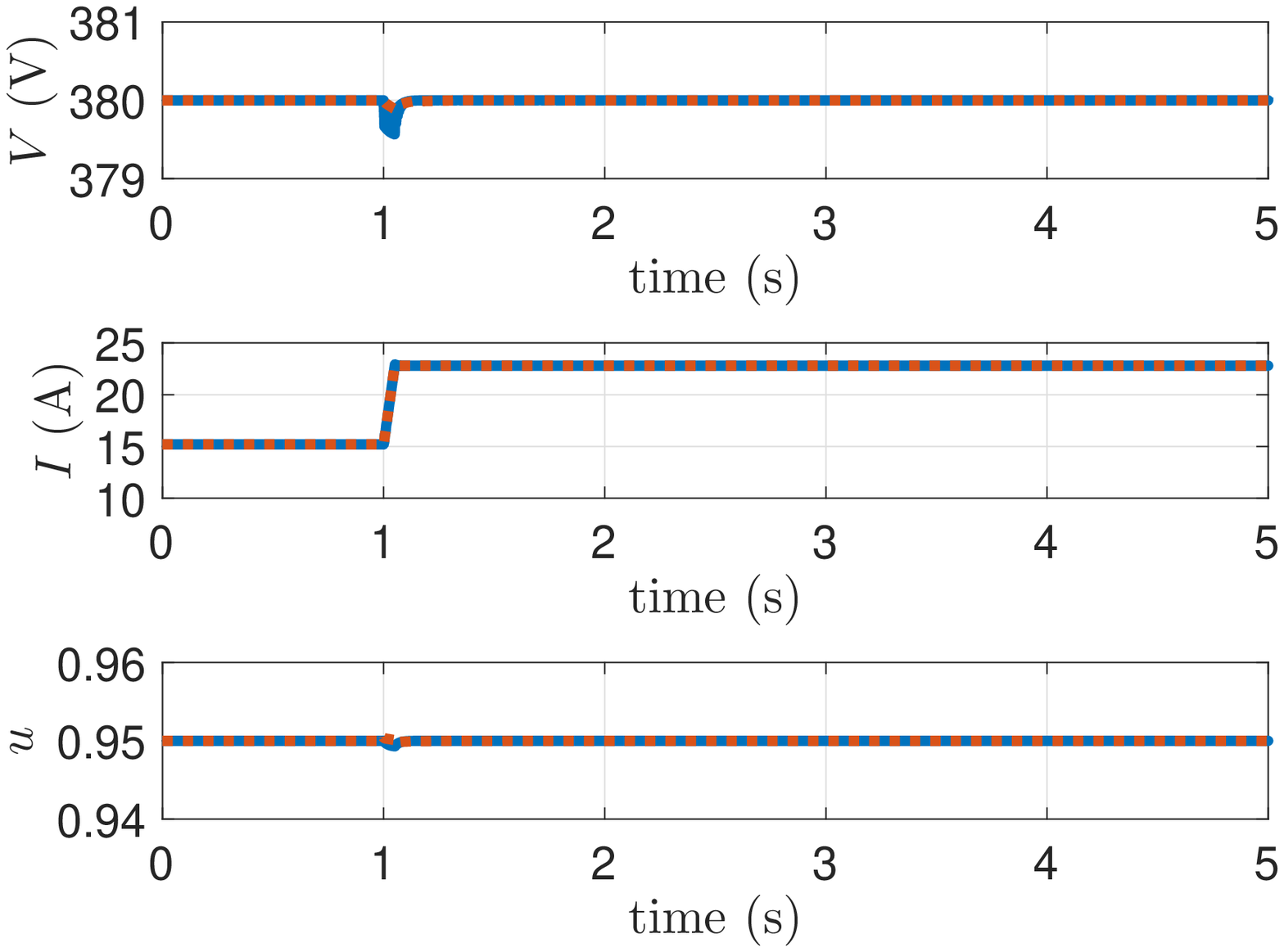}
\caption{(Input shaping for the buck converter) From the top: time evolution of the voltage, current and duty cycle considering a load variation $\Delta G$ at the time instant $t=$ 1 s. Input shaping for buck converter is plotted in blue color, while {\em Parallel Damping PBC} approach proposed in \cite{1323174} is plotted in red-dashed. (Parameters: $L =$ \SI{1}{\milli\henry}, $C =$ \SI{1}{\milli\farad}, $V_s =$ \SI{400}{\volt}, $G =$ \SI{0.04}{\siemens}, $\Delta G =$ \SI{0.02}{\siemens}, $V^\star =$ \SI{380}{\volt}, $k_d =$ \num{16e5}, $k_i =$ \num{8e7}, $\overline u =V^\star/V_s$ and \red{gamma in \cite[Equation (19)]{1323174} is set to 0.97}). } \label{fig:ex_buck_input_shaping}
\end{figure}
\begin{figure}
\includegraphics[width=\columnwidth]{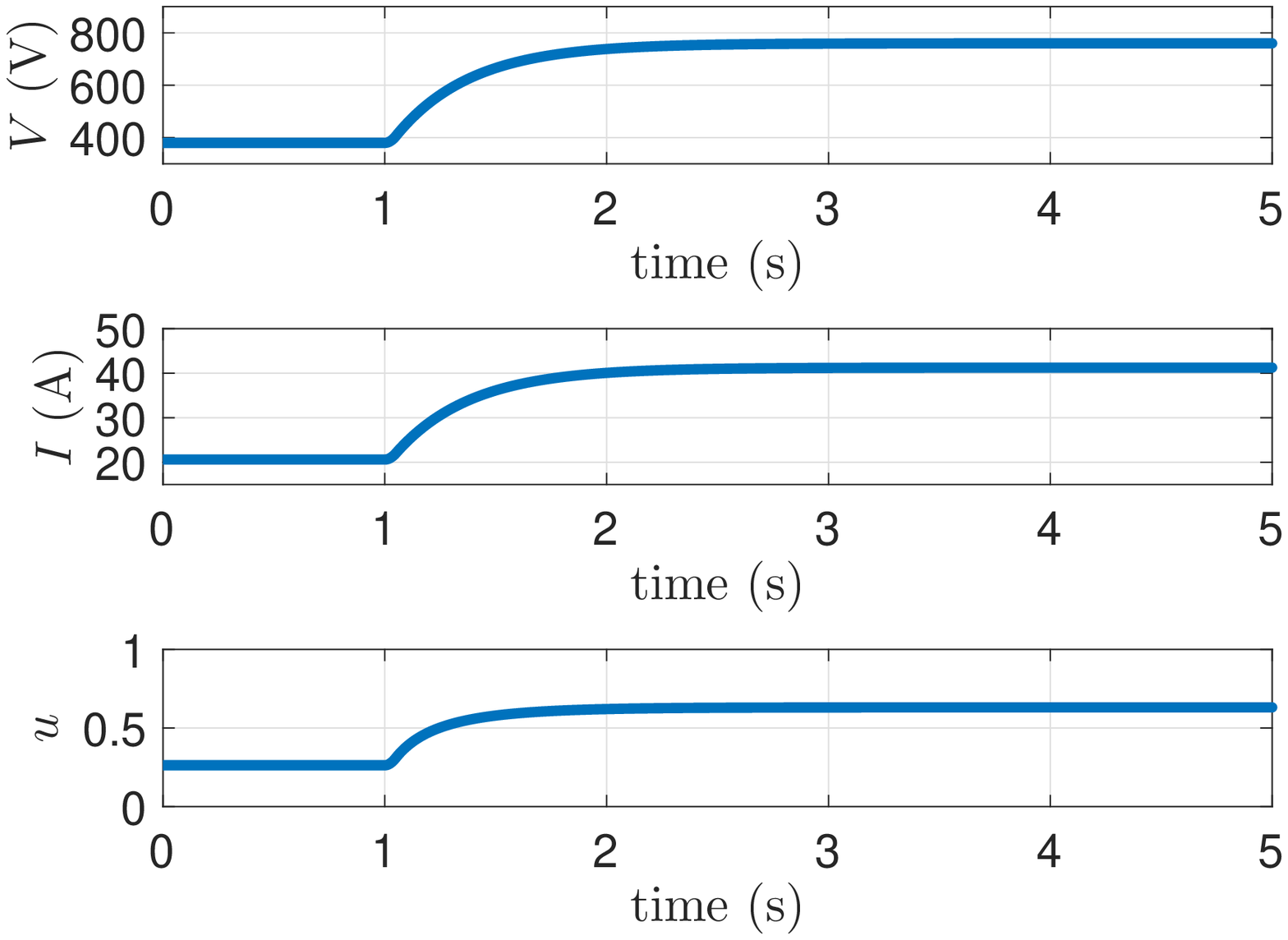}
\caption{(Output shaping for the boost converter) From the top: time evolution of the voltage, current and duty cycle considering a load variation at the time instant $t=$ 1 s (Parameters: $L =$ \SI{1.12}{\milli\henry}, $C =$ \SI{6.8}{\milli\farad}, $V_s =$ \SI{280}{\volt}, $G =$ \SI{0.04}{\siemens}, $\Delta G =$ \SI{-0.02}{\siemens}, $V^\star =$ \SI{380}{\volt}, $k_d =$ \num{5e2}, $k_i =$ \num{1e6}).} \label{fig:ex_boost_output_shaping}
\end{figure}
\begin{figure}
\includegraphics[width=\columnwidth]{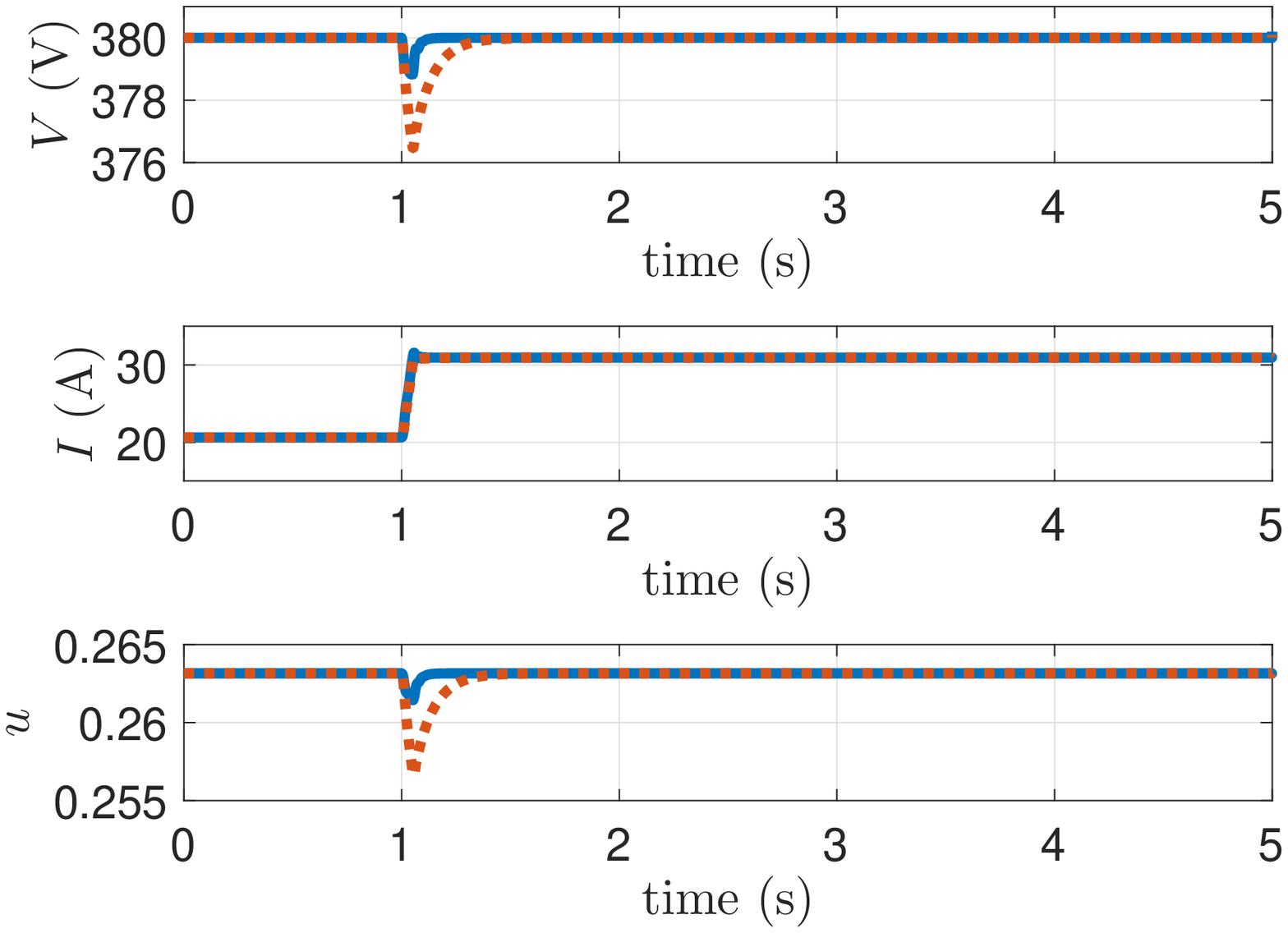}
\caption{(Input shaping for the boost converter) From the top: time evolution of the voltage, current and duty cycle considering a load variation at the time instant $t=$ 1 s. Input shaping for boost converter is plotted in blue color, while {\em Parallel Damping PBC} approach proposed in \cite{1323174} is plotted in red-dashed. (Parameters: $L =$ \SI{1.12}{\milli\henry}, $C =$ \SI{6.8}{\milli\farad}, $V_s =$ \SI{280}{\volt}, $G =$ \SI{0.04}{\siemens}, $\Delta G =$ \SI{0.02}{\siemens}, $V^\star =$ \SI{380}{\volt}, $k_d =$ \num{1e6}, $k_i =$ \num{4e7}, $\overline u =1-V_s/V^\star$ and \red{gamma in \cite[Equation (23)]{1323174} is set to 0.1}.)} \label{fig:ex_boost_input_shaping}
\end{figure}
                %
                %
                %
                %
                \subsection{DC Networks}
                \label{sec:networks}

                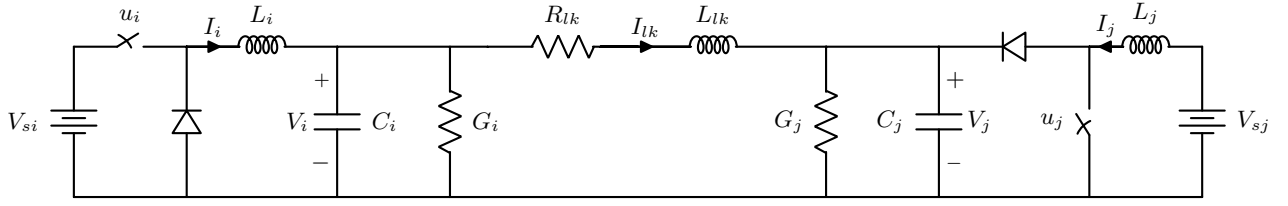
\begin{figure*}
                \begin{center}
                \ctikzset{bipoles/length=1cm}
                \begin{circuitikz}[scale=1,transform shape]
                \draw
                (0,1) node [] {} to [cspst=$u_{i}$] (1.5,1)
                (1.5,-1) {to [diode, l=, v>={{}}] (1.5,1)}
                (0,-1) node [] {} -- (5.0,-1)
                (0,1) {to [battery, l_=$V_{si}$] (0,-1)}
                (1.5,1) to [L, l=$L_{i}$, i>^={{$I_{i}$}}] (3.5,1)
                (3.5,1) {to [C, l=$C_{i}$, v>={{$V_{i}$}}] (3.5,-1)}
                (5.0,-1) {to [R, l_=$G_{i}$] (5.0,1)}
                (3.5,1) -- (5.5,1)
                (5.5,1) to [R, l=$R_{lk}$] (7.5,1)
                (7.0,1) to [L, l=$L_{lk}$,  i>^=$I_{lk}$] (10,1)
                (10,-1) {to [R, l=$G_{j}$] (10,1)}
                (11.5,-1) {to [C, l=$C_{j}$, v<={{$V_{j}$}}] (11.5,1)}
                (10,1) -- (11.5,1)
                (13.5,1) to [diode] (11.5,1)
                (13.5,-1) {to [cspst=$u_{j}$] (13.5,1)}
                (15,1) {to [battery, l=$V_{sj}$] (15,-1)}
                (13.5,1) to [L, l=$L_{j}$,  i<=$I_{j}$] (15,1)
                (5,-1) -- (15,-1);
            \end{circuitikz}
        \end{center}
        \caption{The considered electrical diagram of a (Kron reduced) DC network representing node $i \in \mathcal{V}_\alpha$ and node $j \in \mathcal{V}_\beta$ interconnected by the line $k \in \mathcal{E}$.}
        \label{fig:microgrid}
    \end{figure*}

    In this subsection we consider a typical DC microgrid of which a schematic electrical diagram is provided in Figure~\ref{fig:microgrid}, including a buck and boost DC-DC power converter interconnected through resistive-inductive power lines. 
    In the following we adopt the subscripts $\alpha$ or $\beta$ in order to refer to the buck or boost type converter, respectively.
    The network consists of $n_\alpha$ buck converters and $n_\beta$ boost converters, such that the total number of converters is $n_\alpha + n_\beta = n$.
    The overall network is represented by a connected and undirected graph $\mathcal{G} = (\mathcal{V}_\alpha \cup \mathcal{V}_\beta,\mathcal{E})$, where $\mathcal{V}_\alpha = \{1, \dots, n_\alpha \}$ is the set of the buck converters, $\mathcal{V}_\beta = \{n_\alpha+1, \dots, n \}$ is the set of the boost converters and $\mathcal{E}  = \{1,...,m\}$ is the set of the distribution lines interconnecting the $n$ converters.
    The network topology is  represented by its corresponding incidence matrix $\mathrm{D} \in \R^{n \times m}$. The ends of edge $k$ are arbitrarily labeled with a $+$ and a $-$, and the entries of  $\mathrm{D}$ are given by
    \begin{equation*}
        \label{eq:incidence}
        \mathrm{D}_{ik}=
        \begin{cases}
            +1 \quad &\text{if $i$ is the positive end of $k$}\\
            -1 \quad &\text{if $i$ is the negative end of $k$}\\
            0 \quad &\text{otherwise}.
        \end{cases}
    \end{equation*}

    According to \eqref{eq:buck}, the {average} dynamic equations of the  buck converter $i \in \mathcal{V_\alpha}$ become
    \begin{align}
        \begin{split}
            \label{eq:bucki}
            -L_{i}\dot{I}_{i} &=V_{i} - u_{i}V_{si}\\
            C_{i}\dot{V}_{i} &=I_{i} - G_{i}V_{i} -  \displaystyle{ \sum_{k \in \mathcal{E}_{i}}^{}I_{lk}},
        \end{split}
    \end{align}
    where $\mathcal{E}_{i} \subset \mathcal{E}$ is the set of the distribution lines incident to the  node $i$, and $I_{lk}$ denotes the current through the line $k\in \mathcal{E}_{i}$.
    On the other hand, according to \eqref{eq:boost}, the {average} dynamic equations of the boost converter $i \in \mathcal{V_\beta}$ become
    \begin{align}
        \begin{split}
            \label{eq:boosti}
            -L_{i}\dot{I}_{i} &=(1 - u_{i}) V_{i} - V_{si}\\
            C_{i}\dot{V}_{i} &=(1 - u_{i}) I_{i} - G_{i}V_{i} -  { \sum_{k \in \mathcal{E}_{i}}^{}I_{lk}}.
        \end{split}
    \end{align}
The dynamic of the current $I_{lk}$ from node $i$ to
node $j \neq i$, $i, j \in \mathcal{V}_\alpha \cup
\mathcal{V}_\beta$, is given by
\begin{equation}
\label{eq:line} -L_{lk}{\dot I_{lk}} = -(V_{i} - V_{j}) + R_{lk}
I_{lk}.
\end{equation}
Let $V = [V_\alpha^\top, V_\beta^\top]^\top$, with $V_\alpha =
[V_{1}, \dots, V_{{n_\alpha}}]$ and $V_\beta = [V_{n_\alpha+1},
\dots, V_{{n}}]$. Analogously, let $I_\alpha =
[I_{1}, \dots, I_{{n_\alpha}}]$ and $I_\beta = [I_{n_\alpha+1},
\dots, I_{{n}}]$. To study the interconnected DC network, we write 
\eqref{eq:bucki}-\eqref{eq:line} compactly for all buses $i \in
\mathcal{V}_\alpha \cup \mathcal{V}_\beta$
    	\begin{subequations}          \label{eq:plant}
    	\begin{align} 
              -L_{\alpha}\dot{I}_{\alpha} & =V_\alpha - u_\alpha \circ V_{s\alpha}\label{eq:planta}\\
  -L_{\beta}\dot{I}_{\beta} & =(\mathds{1}_{n_\beta}-u_\beta) \circ V_\beta - V_{s\beta}\label{eq:plantb}\\
  -L_l\dot{I}_l & =\mathrm{D}^T V + R_l I_l\label{eq:plantc}\\
  C_{\alpha}\dot{V}_\alpha & =I_{\alpha} - G_{\alpha}V_\alpha + \mathrm{D}_\alpha I_l\label{eq:plantd}\\
  C_{\beta}\dot{V}_\beta & =(\mathds{1}_{n_\beta}-u_\beta) \circ I_{\beta} - G_{\beta}V_\beta + \mathrm{D}_\beta I_l,\label{eq:plante}
    	\end{align}  
    \end{subequations}
where $I_\alpha, V_\alpha, V_{s\alpha}, u_\alpha \in \R^{n_\alpha}$,
$I_\beta, V_\beta, V_{s\beta}, u_\beta \in \R^{n_\beta}$, $I_l \in
\R^m$. Moreover, $L_\alpha, L_\beta, L_l, C_\alpha, C_\beta, R_l,
G_\alpha, G_\beta$, are positive definite diagonal matrices of
appropriate dimensions, e.g. $L_\alpha = \mathrm{diag}(L_1, \dots,
L_{n_\alpha})$, and $\mathds{1}_{n_\beta} \in \R^{n_\beta}$ denotes the
vector consisting of all ones. The matrices
$\mathrm{D}_\alpha \in \R^{n_\alpha \times m}$ and
$\mathrm{D}_\beta \in \R^{n_\beta \times m}$ are obtained by
collecting from $\mathrm{D}$ the rows indexed by
$\mathcal{V}_\alpha$ and $\mathcal{V}_\beta$, respectively.
Let $I = [I_\alpha^\top, I_\beta^\top, I_l^\top]^\top$, $u=[u_\alpha^\top, u_\beta^\top]^\top$, $V_s=[V_{s\alpha}^\top, V_{s\beta}^\top]^\top$, $L = \diag(L_\alpha, L_\beta, L_l)$ and $C = \diag(C_\alpha, C_\beta)$. We notice that system \eqref{eq:plant} can be expressed in the BM
formulation \eqref{BM_switched} with
\begin{equation}
\label{eq:B_network}
 B(u) =
\begin{bmatrix}
\diag(u_\alpha) & \vec{0}^{n_\alpha \times n_\beta}\\
\vec{0}^{n_\beta \times n_\alpha} & \mathds{I}_{n_\beta}\\
\vec{0}^{m \times n_\alpha} & \vec{0}^{m \times n_\beta}
  \end{bmatrix},
\end{equation}
and
\begin{align}
\label{eq:PT}
\begin{split} 
 P(u,I,V)=&~I^\top\Gamma(u) V+\frac{1}{2}I_l^\top R_l I_l\\
 &-\frac{1}{2}V_\alpha^\top G_\alpha V_\alpha - \frac{1}{2}V_\beta^\top G_\beta V_\beta,
\end{split}
\end{align}
where $\Gamma \in \R^{(n+m)\times n}$ is given by
\begin{equation}
\label{eq:Psi}
 \Gamma(u) =
\begin{bmatrix}
 \mathds{I}_{n_\alpha} & \vec{0}^{n_\alpha \times n_\beta}\\
 \vec{0}^{n_\beta \times n_\alpha} & \mathds{I}_{n_\beta}-\diag(u_\beta)\\
 \mathrm{D}_\alpha^T & \mathrm{D}_\beta^T
  \end{bmatrix},
\end{equation}
$\mathds{I}$ being the identity matrix. By using now the storage function in \eqref{com_storage}, the following passivity property for the considered DC network  \eqref{eq:plant} is established.
\begin{lemma}[Passivity property of DC Networks]\label{lem:DC_network_passivity}
Let Assumptions \ref{ass:LC} and \ref{ass:diss_potential} hold.	System \eqref{eq:plant} is passive with respect to the storage function \eqref{com_storage} and the port-variables $\dot{u}$ and 
	\begin{equation}\label{y_dc}
y_{\mathrm{DC}}=\begin{bmatrix}
\dot{I}_{\alpha}\circ V_{s\alpha}\\ \dot{I}_{\beta}\circ V_{\beta}-\dot{V}_{\beta}\circ I_{\beta}
\end{bmatrix}.
	\end{equation}
\end{lemma}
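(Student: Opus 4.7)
The plan is to mimic the proof of Proposition 2 but applied directly to the networked Brayton-Moser form \eqref{eq:plant}. First, I would form the extended dynamics by differentiating \eqref{eq:planta}--\eqref{eq:plante} once in time (treating $V_{s\alpha}, V_{s\beta}$ as constant), obtaining explicit expressions for $\ddot I_\alpha, \ddot I_\beta, \ddot I_l, \ddot V_\alpha, \ddot V_\beta$ in terms of the first derivatives of the state and of $\dot u_\alpha, \dot u_\beta$. I would then compute
\[
\dot S = \dot I_\alpha^\top L_\alpha \ddot I_\alpha + \dot I_\beta^\top L_\beta \ddot I_\beta + \dot I_l^\top L_l \ddot I_l + \dot V_\alpha^\top C_\alpha \ddot V_\alpha + \dot V_\beta^\top C_\beta \ddot V_\beta
\]
along these extended trajectories.

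The key step is to show that all the interconnection terms cancel pair-wise, leaving only a dissipative quadratic form and the claimed supply rate. Specifically, the buck coupling $\dot I_\alpha \leftrightarrow \dot V_\alpha$, the boost coupling $(1-u_\beta)\circ \dot I_\beta \leftrightarrow (1-u_\beta)\circ\dot V_\beta$, and the line coupling associated with $\mathrm{D}^\top \dot V$ and $\mathrm{D}_\alpha \dot I_l$, $\mathrm{D}_\beta \dot I_l$ are all skew-symmetric in $(\dot I, \dot V)$ and drop out by inspection. What remains from the resistive content/co-content is $-\dot I_l^\top R_l \dot I_l - \dot V_\alpha^\top G_\alpha \dot V_\alpha - \dot V_\beta^\top G_\beta \dot V_\beta \leq 0$ (note there is no resistance in series with the source inductors, consistently with \eqref{eq:bucki}-\eqref{eq:boosti}).

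The only surviving input-dependent contributions come from differentiating $u_\alpha \circ V_{s\alpha}$ and $(\mathds{1}_{n_\beta}-u_\beta)\circ V_\beta$ and $(\mathds{1}_{n_\beta}-u_\beta)\circ I_\beta$, producing the terms $\dot I_\alpha^\top(\dot u_\alpha \circ V_{s\alpha})$, $\dot I_\beta^\top(\dot u_\beta \circ V_\beta)$ and $-\dot V_\beta^\top(\dot u_\beta \circ I_\beta)$. Using the elementary identity $a^\top(b\circ c)=b^\top(a\circ c)$, I would rewrite these as $\dot u_\alpha^\top(\dot I_\alpha \circ V_{s\alpha})$ and $\dot u_\beta^\top(\dot I_\beta \circ V_\beta - \dot V_\beta \circ I_\beta)$, which stacked together are exactly $\dot u^\top y_{\mathrm{DC}}$ with $y_{\mathrm{DC}}$ as in \eqref{y_dc}. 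Combining with the dissipative terms yields $\dot S \leq \dot u^\top y_{\mathrm{DC}}$, i.e.\ the claimed passivity.

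The main obstacle is purely bookkeeping: tracking the block decomposition of $\Gamma(u)$ and $B(u)$ in \eqref{eq:Psi}-\eqref{eq:B_network}, verifying the skew-symmetry of every interconnection sub-block (buck side, boost side, and line side through $\mathrm{D}$), and being careful with the Hadamard-product rearrangement so that the resulting vector indexing matches \eqref{y_dc}. A more compact alternative is to interpret this lemma as a direct specialization of Proposition~\ref{prop: passivity_switched electrical circuits} to the vector-input extension of the BM form, using $\Gamma_1-\Gamma_0 = -\,\mathrm{blkdiag}(0, \diag(\cdot), 0)$ on the boost entries only and $B_1-B_0 = \mathrm{blkdiag}(\diag(\cdot), 0, 0)$ on the buck entries only, but I would still need to justify the multi-input extension, so the direct differentiation argument above is the cleanest route.
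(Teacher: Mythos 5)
Your proposal is correct and follows essentially the same route as the paper: the paper states Lemma~\ref{lem:DC_network_passivity} without an explicit proof, relying on the fact that \eqref{eq:plant} fits the Brayton--Moser form with $B(u)$ and $\Gamma(u)$ as in \eqref{eq:B_network}--\eqref{eq:Psi} so that the computation of Proposition~\ref{prop: passivity_switched electrical circuits} carries over, and your direct differentiation of $S$ along the extended network dynamics is exactly that computation written out (with the correct cancellation of the buck, boost and line interconnection terms and the Hadamard rearrangement yielding $\dot u^\top y_{\mathrm{DC}}$). Your remark that the scalar-input Proposition~\ref{prop: passivity_switched electrical circuits} formally needs a multi-input extension here is a fair observation the paper glosses over, and your explicit calculation sidesteps it cleanly.
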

By virtue of the above passivity property, we can now use the input shaping  methodology to design a decentralized control scheme for regulating the voltage of \eqref{eq:plant}.

\begin{proposition}[Input shaping for DC Networks]\label{prop:input_shaping_dc_network}
Let Assumptions \ref{ass:LC}--\ref{ass:avail_information},  \ref{ass:feasibility_sRLC} and \ref{ass:input Matrix ass} hold. Consider system \eqref{eq:plant}  with the dynamic controller
	\begin{equation}\label{cont_dc_network_input}
	\dot{u}=-K_{d}^{-1}\left(K_{i}\left(u-\bar{u} \right)+ y_{\mathrm{DC}}\right),
	\end{equation}
	 where $K_d$ and $K_i$ are positive definite diagonal matrices of order $n_{\alpha}+n_{\beta}$, and $y_{\mathrm{DC}}$ is given by \eqref{y_dc}. Then, the solution $(I,V,u)$ to the closed-loop system asymptotically converges to the desired  steady-state  $\left(\overline{I},V^\star,\overline{u}\right)$.
\end{proposition}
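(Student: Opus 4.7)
The plan is to mirror the proof of Theorem \ref{prop:Switched_Method_2}, using Lemma \ref{lem:DC_network_passivity} as the underlying passivity property and exploiting the integrability of the input port-variable $\upsilon = \dot{u}$ to shape the closed-loop storage function. First, I would consider the extended dynamics of \eqref{eq:plant} obtained by appending $\dot{u} = \upsilon$ and by time-differentiating \eqref{eq:planta}--\eqref{eq:plante}. Since the DC network fits into the BM formulation \eqref{BM_switched} via \eqref{eq:B_network}--\eqref{eq:Psi}, Proposition \ref{prop: passivity_switched electrical circuits} applies componentwise and Lemma \ref{lem:DC_network_passivity} gives $\dot{S} \le \upsilon^{\top} y_{\mathrm{DC}}$ along the extended dynamics, with $S$ as in \eqref{com_storage}.

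Next, I would introduce the shaped storage function
\begin{equation*}
S_d = S + \tfrac{1}{2}\|u - \bar u\|^{2}_{K_i},
\end{equation*}
and compute its time derivative along the closed-loop trajectories. Using the controller \eqref{cont_dc_network_input}, the cross term $\dot{u}^{\top} y_{\mathrm{DC}} + \dot{u}^{\top} K_i (u-\bar u)$ simplifies and one obtains, in analogy with \eqref{Sdot_IS_RLC},
\begin{equation*}
\dot{S}_d \le - \|\dot{I}_l\|^{2}_{R_l} - \|\dot{V}_\alpha\|^{2}_{G_\alpha} - \|\dot{V}_\beta\|^{2}_{G_\beta} - \|\dot{u}\|^{2}_{K_d} \le 0,
\end{equation*}
which yields a forward invariant sublevel set $\Pi$ of $S_d$. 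LaSalle's invariance principle then guarantees convergence to the largest invariant subset of $\Pi$ on which $\dot{I}_l = 0$, $\dot{V}_\alpha = 0$, $\dot{V}_\beta = 0$ and $\dot{u} = 0$.

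The main obstacle, as in part (c) of Theorem \ref{prop:Switched_Method_2}, is pinning down the invariant set to the single desired operating point. I would argue as follows. On the invariant set, differentiating \eqref{eq:plantd} and \eqref{eq:plante} and substituting $\dot{V}_\alpha = \dot{V}_\beta = \dot{I}_l = 0$, $\dot{u} = 0$ gives $\dot{I}_\alpha = 0$ and $(\mathds{1}_{n_\beta} - u_\beta) \circ \dot{I}_\beta = 0$; since Assumption \ref{ass:feasibility_sRLC} implies $\bar u \in (0,1)^n$, the factor $(\mathds{1}_{n_\beta} - u_\beta)$ has no zero entry in a neighborhood of the equilibrium, and thus $\dot{I}_\beta = 0$ as well. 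Then $y_{\mathrm{DC}} = 0$, so \eqref{cont_dc_network_input} with $\dot{u} = 0$ forces $u = \bar u$. Finally, invoking Assumption \ref{ass:input Matrix ass} and the full-rank property of the system matrix associated with the boundary conditions \eqref{eq:plant} at $u=\bar u$ (which parallels \eqref{sys_matrix_sRLC}, and whose full rank follows from $R_l>0$, $G_\alpha,G_\beta\ge 0$ combined with the connectedness of $\mathcal{G}$ encoded in $\mathrm{D}$), the steady-state equations uniquely identify $(I,V)=(\bar I, V^\star)$, completing the argument.
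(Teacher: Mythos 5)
Your proposal is correct and follows essentially the same route as the paper: shape the storage function with the integrated input port-variable $\tfrac{1}{2}\|u-\bar u\|_{K_i}^2$, obtain $\dot S_d=-\dot I_l^\top R_l\dot I_l-\dot V^\top G\dot V-\dot u^\top K_d\dot u$, apply LaSalle to reach the set $\{\dot I_l=0,\dot V=0,\dot u=0\}$, differentiate \eqref{eq:plantd}--\eqref{eq:plante} to get $\dot I=0$, and then read off $u=\bar u$ from the controller and $(I,V)=(\bar I,V^\star)$ from the steady-state equations. You are in fact slightly more careful than the paper at two points it glosses over --- ruling out zero entries of $(\mathds{1}_{n_\beta}-u_\beta)$ before concluding $\dot I_\beta=0$, and justifying uniqueness of $(\bar I,V^\star)$ via a full-rank/connectedness argument --- but these are refinements of the same proof, not a different one.
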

\begin{proof}
	Consider the storage function \eqref{com_storage}. We use the integrated input port-variable to shape the desired closed-loop storage function, i.e.,
	\begin{equation}
	S_d= S+\dfrac{1}{2}\left(u-\bar{u}\right)^\top K_i \left(u-\bar{u}\right).
	\end{equation}
	Then, the first time derivative of $ S_d$ along the trajectories of system \eqref{eq:plant} controlled by \eqref{cont_dc_network_input} satisfies
	\begin{subequations}  \label{Sdot_IS_sRLC_Network}
		\begin{align} 
		\dot{S}_d&=-\dot{I}_l^\top R_l\dot{I}_l-\dot{V}^\top G\dot{V}+ \dot{u}^\top y_{DC}+\dot{u}^\top K_i \left(u-\overline{u}\right)\label{Sdot_IS_sRLC_Networka}\\
		&= -\dot{I}_l^\top R_l\dot{I}_l-\dot{V}^\top G\dot{V}-\dot{u}^\top K_d\dot{u},               \label{Sdot_IS_sRLC_Networkc}
		\end{align}  
	\end{subequations}
	where we use Lemma \ref{lem:DC_network_passivity} and the controller \eqref{cont_dc_network_input}. Then, from \eqref{Sdot_IS_sRLC_Networkc} there exists a forward invariant set $\Pi$ and by LaSalle's invariance principle the solutions that start in $\Pi$  converge to the largest invariant set contained in 
	\begin{equation}\label{set:forward_inv_set_thm_network}
	\Pi \cap \left\{\left(I,V,\dot{I},\dot{V},u\right):\dot{I}_l=0,\dot{V}=0,\dot{u}=0 \right\}.
	\end{equation}
On this invariant set, by differentiating \eqref{eq:plantd} and \eqref{eq:plante}  we get $\dot{I}=0$. Moreover, from \eqref{cont_dc_network_input} it follows that $u=\overline{u}$, which further implies $V=V^\star$ and $I=\overline{I}$. 
\end{proof}

The proposed decentralized control scheme is now assessed in simulation\footnote{For the readers interested also in experimental results obtained by implementing the input shaping control methodology in a real DC microgrid comprising boost converters, we refer to \cite{Cucuzzella_CDC19}.}, considering a DC network comprising four power converters (i.e., two buck and two boost converters) interconnected as shown in Figure~\ref{fig:microgrid_example}.
The parameters of the converters and lines are reported in Table \ref{tab:parameters1} and~\cite[Table III]{cucuzzella2017robust}, respectively.
The controller gains for the buck converters are $k_{d\alpha} =$ \num{4e5} and $k_{i\alpha} =$ \num{4e7}, while for the boost converters are $k_{d\beta} =$ \num{1e6} and $k_{i\beta} =$ \num{4e7}. The most significant electrical signals of the simulation results are shown in Figure \ref{fig:sim_network}. In order to verify the robustness property of the control scheme with respect to the load uncertainty, the value of the load is changed from $G$ to $G+\Delta G$ at the time instant $t=$ \SI{1}{\second} (see Table \ref{tab:parameters1}). One can appreciate that the input shaping methodology is robust with respect to load uncertainty (see Remark \ref{rem:robustness}).

\begin{table}[t]
\caption{Network Parameters}
\centering
{\begin{tabular}{lc | cccc}			
Node								&	&1		&2	&3	&4\\			
\hline
$L_{i}$	&(\si{\milli\henry})	&\num{1.8}		&\num{1.12}		&\num{3.0}		&\num{1.12}\\
$C_{i}$	&(\si{\milli\farad})	&\num{2.2}		&\num{6.8}		&\num{2.5}		&\num{6.8}\\
$V_{si}$ &(\si{\volt})		&\num{400.0}		&\num{280.0}		&\num{450.0}		&\num{320.0}\\
$V_i^{\star}$ &(\si{\volt})		&\num{380.0}		&\num{380.0}		&\num{380.0}		&\num{380.0}\\
$G$ &(\si{\siemens})	&\num{0.08}		&\num{0.04}		&\num{0.05}		&\num{0.07}\\
$\Delta G$ &(\si{\siemens})	&\num{0.01}		&\num{0.03}		&\num{-0.03}		&\num{0.01}
\end{tabular}}
\label{tab:parameters1}
\end{table}
 
    \begin{figure}
\centering
\begin{tikzpicture}[scale=0.8,transform shape,->,>=stealth',shorten >=1pt,auto,node distance=3cm,
                    semithick]
  \tikzstyle{every state}=[circle,thick,fill=white,draw=black,text=black]

  \node[state] (A)                    {\num{1}};
  \node[state]         (B) [above right of=A] {\num{2}};
  \node[state]         (D) [below right of=A] {\num{4}};
  \node[state]         (C) [below right of=B] {\num{3}};

  \path (A) edge   [below] node {\hspace{7mm}$I_{l1}$} (B)
  		edge 	     node {$I_{l4}$} (D)
           (B) edge      [below]        node {\hspace{-7mm}$I_{l2}$} (C)
           (C) edge         [above left]     node {$I_{l3}$} (D);

\end{tikzpicture}
\caption{Scheme of the considered network with 4 power converters: Nodes 1 and 3 have buck converters, Nodes 2 and 4 have boost converters. }\label{fig:microgrid_example}
\end{figure}
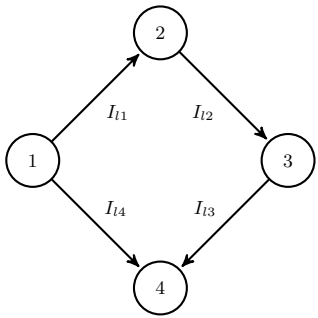

\begin{figure}
\includegraphics[width=\columnwidth]{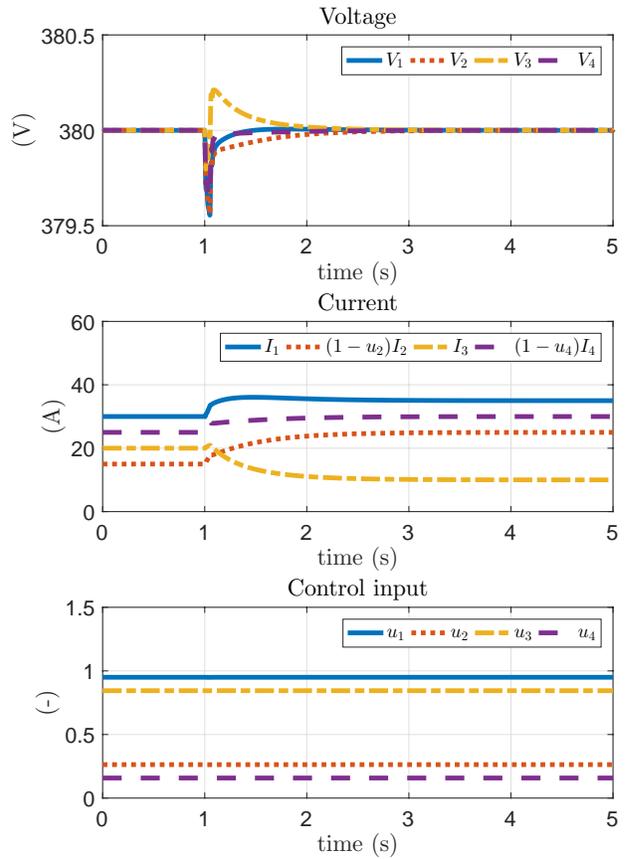}
\caption{(Input shaping for the DC network) From the top: time evolution of the voltage of each node, current generated by each converter and duty cycle of each converter, considering a load variation at the time instant $t=$ 1 s.} \label{fig:sim_network}
\end{figure}

    %
    %
    %
    %
    \begin{table*}[t]
    	\centering
    	\caption{supply-rates of RLC and s--RLC circuits}
    	{\begin{tabular}{llllll}
    			\toprule
    			framework& &\hspace{3.5cm}supply-rate&&\\
    			\midrule
    			&RLC& \hspace{3.5cm}s--RLC& buck & boost&buck-boost\\
    			\midrule
    			Port--Hamiltonian                     & $I^\top Bu_s$ & $I^\top B(u)V_s$ = $I^\top B_0V_s+ u I^\top \left(B_1-B_0\right)V_s$&$uI V_s$&$IV_s$&$uIV_s$\\
    			Brayton--Moser                  & $\dot{I}^\top Bu_s$ & \hspace{2cm}~~\;\;-\;&$u\dot{I}V_s$&-&-\\
    			Proposed  &  $\dot{I}^\top B\dot{u}_s$ & $\dot{u}(\dot{V}^\top(\Gamma_1-\Gamma_0)^\top
    			I-\dot{I}^\top(\Gamma_1-\Gamma_0)V-\dot{I}^\top
    			(B_0-B_1)V_s)$&$\dot{u}\dot{I}V_s$&$\dot u\left(\dot{I}V-\dot{V}I\right)$& $\dot{u}\left(\dot{I}V-\dot{V}I+V_{s}\dot{I}\right)$\\
    			\midrule
    			\bottomrule
    	\end{tabular}}
    	\label{tab:supplyrates}
    \end{table*}
    \section{Conclusions and Future Works}
    \label{sec:conclusions}
In this paper, we have presented new passivity properties for a class of RLC and s--RLC circuits that are  modeled using the Brayton-Moser formulation. 
We use these new passivity properties to propose two new control methodologies: \emph{output shaping} and \emph{input shaping}. The  key observations are:
\begin{itemize}
\item [(i)]
The output shaping methodology exploits the integrability property of the output port-variable. The input shaping technique instead exploits the integrability property of the input port-variable.
\item [(ii)] The controllers based on the input shaping methodology show robustness properties with respect to load uncertainty.
\end{itemize}
{Possible future directions include to incorporate nonlinear loads (e.g. constant power loads \cite{Cucuzzella_CDC19,Cucuzzella_AUT_ZIP}), develop distributed control schemes (e.g. for achieving load sharing \cite{8814756}) and extend such a new passivity concept to a wider class of nonlinear systems \cite{kosaraju2019krasovskiis,Yu2019}.}


    %
    %
    %
    %
    \appendix
In this Appendix we use the input shaping methodology to design voltage controllers for the buck-boost and C\'uk converters, respectively.
    The proofs of the following Corollaries are analogous to those of Corollaries \ref{prop:input_shaping_buck} and \ref{prop:input_shaping_boost} presented in Section~\ref{sec:controllers}.
    \subsection{Buck-boost Converter}
    \label{appendixA}
    The {average} governing dynamic equations of the buck-boost converter are the following:
    \begin{align}
        \begin{split}
            \label{eq:buckboost}
            -L\dot{I} &=(1 - u) V - uV_s\\
            C\dot{V} &=(1 - u) I - G V.
        \end{split}
    \end{align}
Equivalently, system \eqref{eq:buckboost} can be obtained from
\eqref{BM_switched2} with $\Gamma_0=1$, $\Gamma_1=0$, $B_0=0$, $B_1=1$ and $R=0$. 
    By using Proposition~\ref{prop: passivity_switched electrical circuits}, the following passivity property is established.
\begin{lemma}[Passivity property of the buck-boost converter]
Let Assumptions \ref{ass:LC} and \ref{ass:diss_potential} hold. System \eqref{eq:buckboost} is passive with respect to the storage function \eqref{com_storage} and the port-variables $\dot{u}$ and $y=\dot{I}V-\dot{V}I+V_{s}\dot{I}$.
\end{lemma}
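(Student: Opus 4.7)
The plan is to derive the claimed passivity property as a direct instance of Proposition~\ref{prop: passivity_switched electrical circuits}, which already establishes passivity of the general s--RLC extended system \eqref{BM_switched_extended} with storage function \eqref{com_storage} and output port-variable \eqref{output_srlc}. The buck-boost dynamics \eqref{eq:buckboost} are a special case of \eqref{BM_switched2} with $\sigma=\rho=m=1$ and with the parameter choices $\Gamma_0=1$, $\Gamma_1=0$, $B_0=0$, $B_1=1$, $R=0$; Assumptions \ref{ass:LC} and \ref{ass:diss_potential} are assumed in the statement, so all the hypotheses of Proposition~\ref{prop: passivity_switched electrical circuits} are met.

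The first step is to write the extended dynamics of \eqref{eq:buckboost} in the form \eqref{BM_switched_extended} by differentiating the two state equations with respect to time and augmenting the state with $u$ and the input with $\upsilon=\dot u$. Since $R=0$, the term $R\dot I$ drops out, which also means the negative-definite term $-\dot I^\top R\dot I$ disappears from the dissipation inequality, but this does not affect the passivity property itself (only the convergence arguments later).

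The second step is to specialize the output port-variable formula \eqref{output_srlc} using $\Gamma_1-\Gamma_0=-1$ and $B_0-B_1=-1$. Substituting, one gets
\begin{align*}
y &= \dot{V}^\top(\Gamma_1-\Gamma_0)^\top I-\dot{I}^\top(\Gamma_1-\Gamma_0)V-\dot{I}^\top(B_0-B_1)V_s\\
&= -\dot V\,I+\dot I\,V+V_s\dot I=\dot I\,V-\dot V\,I+V_s\dot I,
\end{align*}
which matches the claimed output. Combining this with the dissipation inequality \eqref{Sdot_passivity_sRLC} of Proposition~\ref{prop: passivity_switched electrical circuits} yields $\dot S\le \dot u\,y=\upsilon\,y$, which is exactly the passivity property asserted in the lemma.

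There is no real obstacle: the lemma is a one-line specialization of Proposition~\ref{prop: passivity_switched electrical circuits}, and the only non-trivial thing to verify is the sign bookkeeping in the substitution of $\Gamma_0,\Gamma_1,B_0,B_1$ into \eqref{output_srlc}. One might mention, for completeness, that although $R=0$ removes one of the dissipation terms, the resulting $\dot S\le\upsilon y$ is still a valid passivity inequality, so the conclusion is unaffected.
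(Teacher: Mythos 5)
Your proposal is correct and follows exactly the route the paper intends: the paper gives no explicit proof but states that the lemma follows from Proposition~\ref{prop: passivity_switched electrical circuits} after identifying \eqref{eq:buckboost} with \eqref{BM_switched2} via $\Gamma_0=1$, $\Gamma_1=0$, $B_0=0$, $B_1=1$, $R=0$, which is precisely your specialization. Your sign bookkeeping in \eqref{output_srlc} is accurate, and your remark that $R=0$ only removes a dissipation term without affecting the passivity inequality is a correct (and welcome) clarification.
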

By virtue of the above passivity property, we can now use the input shaping  methodology to design a voltage controller. 
\begin{corollary}[Input shaping for the buck-boost converter]\label{prop:input_shaping_buckboost}
	Let Assumptions \ref{ass:LC}--\ref{ass:avail_information},  \ref{ass:feasibility_sRLC} and \ref{ass:input Matrix ass} hold. Consider system \eqref{eq:buckboost} with the dynamic controller
\begin{equation}\label{cont_buckboost_input}
\dot{u}=-\dfrac{1}{k_d}\left(k_i\left(u-\overline{u} \right)+ \left(\dot{I}V-\dot{V}I+V_{s}\dot{I}\right)\right),
\end{equation}
	with $k_d>0$ and $k_i>0$. Then, the solution $(I,V,u)$ to the closed-loop system asymptotically converges to the desired steady-state  $\left(\overline{I},V^\star,\overline{u}\right)$.
\end{corollary}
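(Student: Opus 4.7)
The plan is to apply Theorem \ref{prop:Switched_Method_2} (input shaping for s--RLC circuits) directly, using the passive map established in the preceding lemma for the buck-boost converter. Note that \eqref{eq:buckboost} fits into \eqref{BM_switched2} with $\Gamma_0=1$, $\Gamma_1=0$, $B_0=0$, $B_1=1$, $R=0$, so that $y=\dot I V-\dot V I+V_s\dot I$ is exactly the s--RLC output port-variable from \eqref{output_srlc} evaluated at these coefficients. First I would form the extended dynamics of \eqref{eq:buckboost} in the sense of \eqref{BM_switched_extended}, set $\upsilon=\dot u$, and choose the candidate closed-loop storage function
\begin{equation*}
S_d = S + \tfrac{1}{2} k_i (u-\overline u)^2,
\end{equation*}
with $S$ the Krasovskii storage \eqref{com_storage}.

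Differentiating $S_d$ along the closed-loop trajectories and invoking Proposition \ref{prop: passivity_switched electrical circuits} together with the controller \eqref{cont_buckboost_input}, I would obtain the dissipation inequality
\begin{equation*}
\dot S_d \;\le\; -\dot V^\top G\,\dot V \;-\; k_d\,\dot u^{\,2},
\end{equation*}
since $R=0$ for the buck-boost. This yields a forward invariant sublevel set $\Pi$ containing the initial condition, and then by LaSalle's invariance principle every solution starting in $\Pi$ converges to the largest invariant subset of $\Pi\cap\{\dot V=0,\ \dot u=0\}$.

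On this invariant set I would argue that $\dot I=0$ as well: using Assumption \ref{ass:input Matrix ass}, which for the buck-boost amounts to $(V+V_s,\,(1-u)I)\neq 0$ and is mild, the extended equations \eqref{BM_switched_extendedc}-\eqref{BM_switched_extendedd} together with $\upsilon=\dot u=0$ and $\dot V=0$ force $\dot I=0$ (alternatively, one differentiates the second line of \eqref{eq:buckboost}, uses $\dot V=0,\ \dot u=0$, and gets $(1-u)\dot I=0$, then combines with the first line). The controller relation \eqref{cont_buckboost_input} then collapses to $k_i(u-\overline u)=0$, so $u=\overline u$. Finally, substituting $\dot I=\dot V=0$ and $u=\overline u$ into \eqref{eq:buckboost} and invoking the feasibility Assumption \ref{ass:feasibility_sRLC} (with $\Gamma(\overline u)=1-\overline u\neq 0$, so the associated matrix $\mathcal A$ in \eqref{sys_matrix_sRLC} has full rank and the equilibrium is unique), one concludes $I=\overline I$ and $V=V^\star$, as desired.

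The main obstacle I expect is the step that rules out trajectories in $\Pi\cap\{\dot V=0,\ \dot u=0\}$ with $\dot I\neq 0$, since the buck-boost has $R=0$ and therefore no direct damping term in $\dot I$ appears in $\dot S_d$; this is precisely the role played by Assumption \ref{ass:input Matrix ass}, and it needs to be checked explicitly for the coefficients of \eqref{eq:buckboost}. Apart from this, the argument is a routine specialization of Theorem \ref{prop:Switched_Method_2} part (c), exactly parallel to Corollaries \ref{prop:input_shaping_buck} and \ref{prop:input_shaping_boost}.
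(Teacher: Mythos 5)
Your strategy is exactly the paper's: the paper proves this corollary by declaring it analogous to Corollaries \ref{prop:input_shaping_buck} and \ref{prop:input_shaping_boost}, i.e.\ a direct specialization of Theorem \ref{prop:Switched_Method_2} with $\Gamma_0=1$, $\Gamma_1=0$, $B_0=0$, $B_1=1$, $R=0$, and your storage function, dissipation inequality and LaSalle set coincide with that argument. The one step whose justification does not hold up as written is the passage from $\{\dot V=0,\ \dot u=0\}$ to $\dot I=0$, which you correctly identify as the delicate point but then attribute to the wrong hypothesis. For the buck-boost, Assumption \ref{ass:input Matrix ass} reads $(V+V_s,\,I)\neq(0,0)$ (the second entry is $(\Gamma_1-\Gamma_0)^{\top}I=-I$, not $(1-u)I$), and it places no restriction whatsoever on $u$. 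What the invariant set actually yields, from \eqref{BM_switched_extendedd} with $\upsilon=0$ and $\dot V=0$, is $(1-u)\dot I=0$; to conclude $\dot I=0$ one needs $u\neq1$, i.e.\ condition (b-iii) of Theorem \ref{prop:Switched_Method_2} ($G>0$ and $\Gamma^{\top}(u)=1-u$ of full column rank), which you never invoke. Your fallback (``combine with the first line'') does not close this either: if $u=1$ on the invariant set, the first line of \eqref{eq:buckboost} gives $\dot I=V_s/L\neq0$, which is perfectly consistent with $(1-u)\dot I=0$. The gap is patchable: on the invariant set $u$ is constant; if $u=1$, then $\dot V=0$ and $G>0$ force $V=0$, hence $y=\dot I(V+V_s)=V_s^2/L>0$, and stationarity of the controller \eqref{cont_buckboost_input} requires $k_i(1-\overline u)=-V_s^2/L<0$, contradicting $\overline u\in(0,1)$ from Assumption \ref{ass:feasibility_sRLC}; so $u\neq1$ on the largest invariant set and $\dot I=0$ follows. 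With that repair the remainder of your argument, including uniqueness of the equilibrium via the full rank of $\mathcal A$ in \eqref{sys_matrix_sRLC}, goes through and matches the paper's intended reasoning.
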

\subsection{C\'uk Converter}
\label{appendixB}
The {average} governing dynamic equations of the C\'uk converter are the following:
\begin{subequations}   \label{eq:cuk}
\begin{align} 
            -L_1\dot{I}_1 &=(1 - u) V_1 - V_s \label{eq:cuka}\\
            -L_2\dot{I}_2 &=u V_1 + V_2\label{eq:cukb}\\
            C_1\dot{V}_1 &=(1 - u) I_1 + u I_2\label{eq:cukc}\\
            C_2\dot{V}_2 &=I_2 - G V_2. \label{eq:cukd}
    \end{align}
\end{subequations}
Equivalently, system \eqref{eq:cuk} can be obtained from \eqref{BM_switched} with $\Gamma_0=\begin{bmatrix}
    1&0\\0&1
    \end{bmatrix}$, $\Gamma_1=\begin{bmatrix}
    0&0\\1&1
    \end{bmatrix}$, $B_0=B_1=[1\; 0]^\top $, $P_R(I)=0$ and $P_G(V)=\frac{1}{2}GV^2_2$.  By using Proposition~\ref{prop: passivity_switched electrical circuits}, the following passivity property is established.
    \begin{lemma}[Passivity property of the C\'uk converter]
 Let Assumptions \ref{ass:LC} and \ref{ass:diss_potential} hold.   	System \eqref{eq:cuk} is passive with respect to the storage function \eqref{com_storage} and the port-variables $\dot{u}$ and $y=\dot{V}_1\left(I_2-I_1\right)-V_1\left(\dot{I}_2-\dot{I}_1\right)$.
    \end{lemma}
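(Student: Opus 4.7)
The plan is to apply Proposition \ref{prop: passivity_switched electrical circuits} directly, since the C\'uk converter has already been cast into the s--RLC form \eqref{BM_switched2} with the specific matrices $\Gamma_0,\Gamma_1,B_0,B_1$ listed just before the lemma, together with $P_R(I)=0$ and $P_G(V)=\tfrac{1}{2}GV_2^2$. First I would verify that Assumptions \ref{ass:LC} and \ref{ass:diss_potential} are inherited by this identification: $L=\diag\{L_1,L_2\}$ and $C=\diag\{C_1,C_2\}$ are constant, symmetric and positive-definite, while the dissipation potentials correspond to $R=0\in\R^{2\times 2}$ and $G'=\diag\{0,G\}\in\R^{2\times 2}$, both positive semi-definite; hence Proposition~\ref{prop: passivity_switched electrical circuits} is applicable verbatim.

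Second, I would specialize the output port-variable \eqref{output_srlc} to the C\'uk data. A short computation gives
\begin{equation*}
\Gamma_1-\Gamma_0=\begin{bmatrix}-1 & 0\\ 1 & 0\end{bmatrix},\qquad B_0-B_1=0,
\end{equation*}
so that $(\Gamma_1-\Gamma_0)V=(-V_1,\,V_1)^\top$ and $(\Gamma_1-\Gamma_0)^\top I=(I_2-I_1,\,0)^\top$. Substituting into \eqref{output_srlc} yields
\begin{equation*}
y=\dot V^\top(\Gamma_1-\Gamma_0)^\top I-\dot I^\top(\Gamma_1-\Gamma_0)V=\dot V_1(I_2-I_1)-V_1(\dot I_2-\dot I_1),
\end{equation*}
which is exactly the claimed output. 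The $V_s$-term in \eqref{output_srlc} drops out because $B_0=B_1$.

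Third, I would conclude by invoking the dissipation inequality \eqref{Sdot_passivity_sRLC} from Proposition \ref{prop: passivity_switched electrical circuits} with the storage function $S$ given by \eqref{com_storage}, specialized to $n=2$ storage elements of each type. Because $R=0$ and $G'\geq 0$, the inequality $\dot S\leq \dot u\,y=\upsilon y$ still holds, establishing passivity with the claimed port-variables $\dot u$ and $y$. There is no real obstacle beyond the bookkeeping of the matrix identification: the only point worth flagging is that $R=0$ is admissible under Assumption \ref{ass:diss_potential} (positive \emph{semi}-definite), and that $B_0=B_1$ makes the $V_s$-dependence in the output disappear even though $V_s$ itself is still present in the dynamics through $B(u)V_s$.
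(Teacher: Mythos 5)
Your proposal is correct and is exactly the argument the paper intends: the lemma is stated as a direct specialization of Proposition~\ref{prop: passivity_switched electrical circuits} to the identification $\Gamma_0,\Gamma_1,B_0,B_1,P_R,P_G$ given just before it, and your computation of $\Gamma_1-\Gamma_0$, of the resulting output \eqref{output_srlc}, and of the vanishing $V_s$-term (since $B_0=B_1$) checks out. The remarks that $R=0$ and $\diag\{0,G\}\geq 0$ are admissible under Assumption~\ref{ass:diss_potential} are also accurate, so nothing is missing.
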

By virtue of the above passivity property, we can now use the input shaping  methodology to design a voltage controller. 
\begin{corollary}[Input shaping for the C\'uk converter]\label{prop:input_shaping_cuk}
	Let Assumptions \ref{ass:LC}--\ref{ass:avail_information},  \ref{ass:feasibility_sRLC} and \ref{ass:input Matrix ass} hold. Consider system \eqref{eq:buckboost}  with the dynamic ocntroller
	\begin{equation}\label{cont_cuk_input}
	\dot{u}=-\dfrac{1}{k_d}\left(k_i\left(u-\overline{u} \right)+ \dot{V}_1\left(I_2-I_1\right)-V_1\left(\dot{I}_2-\dot{I}_1\right)\right),
	\end{equation}
	with $k_d>0$ and $k_i>0$. Then, the solution $(I,V,u)$ to the closed-loop system asymptotically converges to the desired steady-state  $\left(\overline{I},V^\star,\overline{u}\right)$.
\end{corollary}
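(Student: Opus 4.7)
The plan is to mimic the argument of Theorem~\ref{prop:Switched_Method_2} applied to the C\'uk dynamics~\eqref{eq:cuk}, using the Krasovskii-type storage function \eqref{com_storage} together with the integrated input port-variable. Concretely, I would shape the closed-loop storage as
\begin{equation*}
S_d = S + \dfrac{1}{2}k_i(u-\overline{u})^2,
\end{equation*}
where $S=\tfrac{1}{2}(L_1\dot I_1^2 + L_2 \dot I_2^2 + C_1 \dot V_1^2 + C_2 \dot V_2^2)$. Differentiating $S_d$ along the trajectories of the extended dynamics of \eqref{eq:cuk}, invoking the passivity map of the preceding lemma (with output $y = \dot V_1 (I_2-I_1) - V_1(\dot I_2 - \dot I_1)$), and substituting the controller \eqref{cont_cuk_input} should yield
\begin{equation*}
\dot S_d = -G \dot V_2^2 - k_d \dot u^2 \;\leq\; 0,
\end{equation*}
since for the C\'uk converter $P_R(I)=0$ and $G$ acts only on $V_2$. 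Passivity from $\mu$ to $\dot u$ follows immediately by adding back a $\mu \dot u$ term.

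Next, I would apply LaSalle's invariance principle: trajectories starting in a forward invariant sublevel set $\Pi$ of $S_d$ converge to the largest invariant subset of $\Pi \cap \{\dot V_2 = 0,\; \dot u = 0\}$.

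The main obstacle is the penultimate step: because $R=0$ and dissipation is present only on $V_2$, LaSalle alone does not immediately force $\dot I_1=\dot I_2=\dot V_1=0$. I would close this gap by propagating $\dot V_2=0$ and $\dot u=\upsilon=0$ through the extended dynamics \eqref{BM_switched_extendedc}--\eqref{BM_switched_extendedd} specialized to \eqref{eq:cuk}. Differentiating \eqref{eq:cukd} on the invariant set gives $\dot I_2 = 0$. Then \eqref{eq:cukb} differentiated with $\dot u=0$, $\dot V_2=0$ and $\dot I_2=0$ forces $u\dot V_1=0$; combining this with the consequence of \eqref{eq:cukc} and \eqref{eq:cuka} under $\dot u = 0$, and using Assumption~\ref{ass:input Matrix ass} (which for the C\'uk converter guarantees that the column vector multiplying $\upsilon$ in \eqref{BM_switched_extendedc}--\eqref{BM_switched_extendedd} is nonzero, preventing a degenerate invariant set), I expect to obtain $\dot V_1 = 0$ and then $\dot I_1=0$. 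Hence the invariant set reduces to $\{\dot I_1=\dot I_2=\dot V_1=\dot V_2 = 0,\;\dot u=0\}$.

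Finally, on this set the controller equation \eqref{cont_cuk_input} gives $u = \overline u$, and feasibility (Assumption~\ref{ass:feasibility_sRLC}) together with uniqueness of the steady-state solution of \eqref{eq:cuk} at $u=\overline u$ (which follows from the full-rank condition on the algebraic system analogous to \eqref{sys_matrix_sRLC}, easily verified for the C\'uk topology) identifies the limit point as $(\overline I, V^\star, \overline u)$. This yields the claimed asymptotic convergence.
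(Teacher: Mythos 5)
Your route is exactly the paper's: the paper dispatches this corollary in one line, declaring it analogous to Corollaries~\ref{prop:input_shaping_buck} and \ref{prop:input_shaping_boost}, which in turn defer to Theorem~\ref{prop:Switched_Method_2} with the shaped storage $S_d=S+\tfrac{1}{2}k_i(u-\overline u)^2$. Your dissipation identity $\dot S_d=-G\dot V_2^2-k_d\dot u^2$ is correct, and you have rightly flagged the point the paper glosses over: for the C\'uk converter $R=0$ and the co-content matrix is $\diag\{0,G\}$, so none of the conditions (b-i)--(b-iii) of Theorem~\ref{prop:Switched_Method_2} actually holds and a bespoke invariance analysis is required. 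Your verification that the analogue of \eqref{sys_matrix_sRLC} is nonsingular (its determinant is $(1-\overline u)^2\neq 0$) is also fine.

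The remaining gap is in that invariance analysis. On the largest invariant set you correctly obtain $\dot I_2=0$ (differentiating \eqref{eq:cukd}) and $u\dot V_1=0$ (differentiating \eqref{eq:cukb}), but the implication $u\dot V_1=0\Rightarrow\dot V_1=0$ requires excluding $u\equiv 0$, and Assumption~\ref{ass:input Matrix ass} does not supply this: for the C\'uk converter that assumption only says $V_1$ and $I_2-I_1$ do not vanish simultaneously, which constrains the state, not the value of $u$. Likewise the next step, $(1-u)\dot I_1=0\Rightarrow\dot I_1=0$ from differentiating \eqref{eq:cukc}, requires excluding $u\equiv 1$. Both exclusions are true but must be argued through the controller equation on the invariant set: $\dot u=0$ in \eqref{cont_cuk_input} forces $y\equiv -k_i(u-\overline u)$, which is a nonzero constant when $u\in\{0,1\}$ because $\overline u\in(0,1)$; on the other hand, for $u\equiv 1$ the algebraic constraints of the invariant set give $I_2=0$, $V_2=0$, hence $V_1=0$ and $y=0$, a contradiction, while for $u\equiv 0$ the residual undamped pair $-L_1\dot I_1=V_1-V_s$, $C_1\dot V_1=I_1$ cannot keep $y=-\dot V_1 I_1+V_1\dot I_1$ equal to the nonzero constant $k_i\overline u$ along any of its orbits. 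Once you add this case analysis the argument closes; as written, the decisive implication rests on an assumption that does not deliver it.
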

%
    %
    %
    %
    %
    \bibliographystyle{IEEEtran}
    \bibliography{references}

\begin{thebibliography}{10}
\providecommand{\url}[1]{#1}
\csname url@samestyle\endcsname
\providecommand{\newblock}{\relax}
\providecommand{\bibinfo}[2]{#2}
\providecommand{\BIBentrySTDinterwordspacing}{\spaceskip=0pt\relax}
\providecommand{\BIBentryALTinterwordstretchfactor}{4}
\providecommand{\BIBentryALTinterwordspacing}{\spaceskip=\fontdimen2\font plus
\BIBentryALTinterwordstretchfactor\fontdimen3\font minus
  \fontdimen4\font\relax}
\providecommand{\BIBforeignlanguage}[2]{{%
\expandafter\ifx\csname l@#1\endcsname\relax
\typeout{** WARNING: IEEEtran.bst: No hyphenation pattern has been}%
\typeout{** loaded for the language `#1'. Using the pattern for}%
\typeout{** the default language instead.}%
\else
\language=\csname l@#1\endcsname
\fi
#2}}
\providecommand{\BIBdecl}{\relax}
\BIBdecl

\bibitem{5184955}
D.~Jeltsema and J.~M.~A. Scherpen, ``Multidomain modeling of nonlinear networks
  and systems,'' \emph{IEEE Control Systems Magazine}, vol.~29, no.~4, pp.
  28--59, Aug. 2009.

\bibitem{915398}
R.~Ortega, A.~J. van~der Schaft, I.~Mareels, and B.~Maschke, ``Putting energy
  back in control,'' \emph{IEEE Control Systems Magazine}, vol.~21, no.~2, pp.
  18--33, Apr. 2001.

\bibitem{van2000l2}
A.~J. van~der Schaft, \emph{L2-gain and passivity techniques in nonlinear
  control}.\hskip 1em plus 0.5em minus 0.4em\relax Springer, 2000.

\bibitem{HILL1980327}
D.~J. Hill and P.~J. Moylan, ``Dissipative dynamical systems: Basic
  input-output and state properties,'' \emph{Journal of the Franklin
  Institute}, vol. 309, no.~5, pp. 327--357, May 1980.

\bibitem{SYS-002}
A.~J. van~der Schaft and D.~Jeltsema, ``Port-{H}amiltonian systems theory: An
  introductory overview,'' \emph{Foundations and Trends in Systems and
  Control}, vol.~1, no. 2-3, pp. 173--378, 2014.

\bibitem{BMnetworks1}
R.~Brayton and J.~Moser, ``A theory of nonlinear networks. {I},''
  \emph{Quarterly of Applied Mathematics}, vol.~22, no.~1, pp. 1--33, 1964.

\bibitem{BMnetworks2}
------, ``A theory of nonlinear networks. {II},'' \emph{Quarterly of applied
  mathematics}, vol.~22, no.~2, pp. 81--104, 1964.

\bibitem{1393170}
G.~Blankenstein, ``Geometric modeling of nonlinear {RLC} circuits,'' \emph{IEEE
  Transactions on Circuits and Systems I: Regular Papers}, vol.~52, no.~2, pp.
  396--404, Feb. 2005.

\bibitem{4154602}
D.~Jeltsema and J.~M.~A. Scherpen, ``A power-based perspective in modeling and
  control of switched power converters [past and present],'' \emph{IEEE
  Industrial Electronics Magazine}, vol.~1, no.~1, pp. 7--54, May 2007.

\bibitem{ESCOBAR1999445}
G.~Escobar, A.~J. van~der Schaft, and R.~Ortega, ``A {H}amiltonian viewpoint in
  the modeling of switching power converters,'' \emph{Automatica}, vol.~35,
  no.~3, pp. 445--452, Mar. 1999.

\bibitem{577568a}
H.~Sira-Ramirez, R.~Ortega, and G.~Escobar, ``Lagrangian modeling of switch
  regulated {DC}-to-{DC} power converters,'' in \emph{Proceedings of 35th IEEE
  Conference on Decision and Control}, vol.~4, Dec. 1996, pp. 4492--4497.

\bibitem{SCHERPEN2003365}
J.~M.~A. Scherpen, D.~Jeltsema, and J.~Klaassens, ``Lagrangian modeling of
  switching electrical networks,'' \emph{Systems \& Control Letters}, vol.~48,
  no.~5, pp. 365--374, Apr. 2003.

\bibitem{ORTEGA2004432}
R.~Ortega and E.~Garc\'{i}a-Canseco, ``{Interconnection and Damping Assignment
  Passivity-Based Control: A Survey},'' \emph{European Journal of Control},
  vol.~10, no.~5, pp. 432--450, 2004.

\bibitem{GARCIACANSECO2010127}
E.~Garc\'{i}a-Canseco, D.~Jeltsema, R.~Ortega, and J.~M.~A. Scherpen,
  ``Power-based control of physical systems,'' \emph{Automatica}, vol.~46,
  no.~1, pp. 127--132, Jan. 2010.

\bibitem{Ortega1998}
R.~Ortega, A.~Lor{\'{i}}a, P.~J. Nicklasson, and H.~Sira-Ram{\'{i}}rez,
  \emph{{Passivity-based Control of Euler-Lagrange Systems}}, ser.
  Communications and Control Engineering.\hskip 1em plus 0.5em minus
  0.4em\relax London: Springer London, 1998.

\bibitem{MEHRA201798}
R.~Mehra, S.~G. Satpute, F.~Kazi, and N.~Singh, ``Control of a class of
  underactuated mechanical systems obviating matching conditions,''
  \emph{Automatica}, vol.~86, pp. 98--103, Dec. 2017.

\bibitem{7993047}
M.~Zhang, P.~Borja, R.~Ortega, Z.~Liu, and H.~Su, ``{PID} passivity-based
  control of port-{H}amiltonian systems,'' \emph{IEEE Transactions on Automatic
  Control}, vol.~63, no.~4, pp. 1032--1044, Apr. 2018.

\bibitem{BORJA2016230}
P.~Borja, R.~Cisneros, and R.~Ortega, ``A constructive procedure for energy
  shaping of port-{H}amiltonian systems,'' \emph{Automatica}, vol.~72, pp.
  230--234, Oct. 2016.

\bibitem{chinde2017passivity}
V.~Chinde, K.~C. Kosaraju, A.~Kelkar, R.~Pasumarthy, S.~Sarkar, and N.~M.
  Singh, ``A passivity-based power-shaping control of building {HVAC}
  systems,'' \emph{Journal of Dynamic Systems, Measurement, and Control}, vol.
  139, no.~11, July 2017.

\bibitem{1323174}
D.~Jeltsema and J.~M.~A. Scherpen, ``Tuning of passivity-preserving controllers
  for switched-mode power converters,'' \emph{IEEE Transactions on Automatic
  Control}, vol.~49, no.~8, pp. 1333--1344, Aug. 2004.

\bibitem{khalil2002nonlinear}
H.~Khalil, \emph{{Nonlinear Systems}}, 3rd~ed.\hskip 1em plus 0.5em minus
  0.4em\relax Prentice Hall, 2002.

\bibitem{mtns}
R.~Pasumarthy, K.~C. Kosaraju, and A.~Chandrasekar, ``On power balancing and
  stabilization for a class of infinite-dimensional systems,'' in \emph{21st
  International Symposium on Mathematical Theory of Networks and Systems,
  Groningen, The Netherlands}, July 2014.

\bibitem{7846443}
K.~C. Kosaraju, R.~Pasumarthy, N.~M. Singh, and A.~L. Fradkov, ``Control using
  new passivity property with differentiation at both ports,'' in
  \emph{Proceedings of the Indian Control Conference (ICC)}, Guwahati, India,
  Jan. 2017, pp. 7--11.

\bibitem{8431720}
K.~C. Kosaraju, V.~Chinde, R.~Pasumarthy, A.~Kelkar, and N.~M. Singh,
  ``Differential passivity like properties for a class of nonlinear systems,''
  in \emph{Proceedings of the American Control Conference (ACC)}, Milwaukee,
  WI, USA, June 2018, pp. 3621--3625.

\bibitem{1230225}
D.~Jeltsema, R.~Ortega, and J.~M.~A. Scherpen, ``On passivity and power-balance
  inequalities of nonlinear {{RLC}} circuits,'' \emph{IEEE Transactions on
  Circuits and Systems I: Fundamental Theory and Applications}, vol.~50, no.~9,
  pp. 1174--1179, Sept. 2003.

\bibitem{Zhao2015}
J.~Zhao and F.~D{\"o}rfler, ``Distributed control and optimization in {DC}
  microgrids,'' \emph{Automatica}, vol.~61, pp. 18 -- 26, Nov. 2015.

\bibitem{J_Cucuzzella_18}
M.~Cucuzzella, R.~Lazzari, S.~Trip, S.~Rosti, C.~Sandroni, and A.~Ferrara,
  ``Sliding mode voltage control of boost converters in $\mathrm{DC}$
  microgrids,'' \emph{Control Engineering Practice}, vol.~73, pp. 161--170,
  Apr. 2018.

\bibitem{DePersis2016}
C.~{De Persis}, E.~R. Weitenberg, and F.~D{\"o}rfler, ``A power consensus
  algorithm for {DC} microgrids,'' \emph{Automatica}, vol.~89, pp. 364--375,
  Mar. 2018.

\bibitem{8071020}
K.~Cavanagh, J.~A. Belk, and K.~Turitsyn, ``{Transient Stability Guarantees for
  Ad Hoc DC Microgrids},'' \emph{IEEE Control Systems Letters}, vol.~2, no.~1,
  pp. 139--144, Jan. 2018.

\bibitem{J_Trip2018}
S.~Trip, M.~Cucuzzella, X.~Cheng, and J.~M. Scherpen, ``Distributed averaging
  control for voltage regulation and current sharing in {DC} microgrids,''
  \emph{IEEE Control Systems Letters}, vol.~3, no.~1, pp. 174--179, Jan. 2019.

\bibitem{cucuzzella2017robust}
M.~Cucuzzella, S.~Trip, C.~{De Persis}, X.~Cheng, A.~Ferrara, and A.~J. van~der
  Schaft, ``{A Robust Consensus Algorithm for Current Sharing and Voltage
  Regulation in {DC} Microgrids},'' \emph{IEEE Transactions on Control Systems
  Technology}, vol.~27, no.~4, pp. 1583--1595, July 2019.

\bibitem{BROER20101}
H.~Broer and F.~Takens, ``Preliminaries of dynamical systems theory,'' in
  \emph{Handbook of Dynamical Systems}.\hskip 1em plus 0.5em minus 0.4em\relax
  Elsevier Science, 2010, vol.~3, ch.~1, pp. 1--42.

\bibitem{VANDERSCHAFT20113321}
A.~van~der Schaft, ``{On The Relation Between Port-Hamiltonian And Gradient
  Systems},'' \emph{IFAC Proceedings Volumes}, vol.~44, no.~1, pp. 3321--3326,
  2011, 18th IFAC World Congress.

\bibitem{chaitanya2019modeling}
K.~C. Kosaraju, R.~Pasumarthy, and D.~Jeltsema, ``Modeling and boundary control
  of infinite dimensional systems in the brayton--moser framework,'' \emph{IMA
  Journal of Mathematical Control and Information}, vol.~36, no.~2, pp.
  485--513, 2019.

\bibitem{jeltsema2002brayton}
D.~Jeltsema, J.~Clemente-Gallardo, R.~Ortega, J.~M.~A. Scherpen, and J.~B.
  Klaassens, ``Brayton-moser equations and new passivity properties for
  nonlinear electromechanical systems,'' \emph{Proc. 8th Mechatronics Forum
  International Conference}, pp. 979--988, June 2002.

\bibitem{favache2010power}
A.~Favache and D.~Dochain, ``{Power-shaping control of reaction systems: The
  CSTR case},'' \emph{Automatica}, vol.~46, no.~11, pp. 1877--1883, 2010.

\bibitem{7155514}
M.~{Guay} and N.~{Hudon}, ``Stabilization of nonlinear systems via
  potential-based realization,'' \emph{IEEE Transactions on Automatic Control},
  vol.~61, no.~4, pp. 1075--1080, Apr. 2016.

\bibitem{6760930}
F.~Forni, R.~Sepulchre, and A.~J. van~der Schaft, ``On differential passivity
  of physical systems,'' in \emph{Proceedings of the 52nd IEEE Conference on
  Decision and Control (CDC)}, Florence, Italy, Dec. 2013, pp. 6580--6585.

\bibitem{Cucuzzella_CDC19}
M.~Cucuzzella, R.~Lazzari, Y.~Kawano, K.~Kosaraju, and J.~M.~A. Scherpen,
  ``{Robust Passivity-Based Control of Boost Converters in DC Microgrids},'' in
  \emph{Proceedings of the 58th IEEE Conference on Decision and Control (CDC)},
  Nice, France, Dec. 2019, pp. 8435--8440.

\bibitem{Cucuzzella_AUT_ZIP}
M.~Cucuzzella, K.~C. Kosaraju, and J.~M.~A. Scherpen, ``{Voltage control of DC
  networks: robustness for unknown ZIP-loads},'' \emph{arXiv preprint:
  1907.09973}, July 2019.

\bibitem{8814756}
M.~{Cucuzzella}, K.~C. {Kosaraju}, and J.~M.~A. {Scherpen}, ``{Distributed
  Passivity-Based Control of DC Microgrids},'' in \emph{2019 American Control
  Conference (ACC)}, July 2019, pp. 652--657.

\bibitem{kosaraju2019krasovskiis}
K.~C. Kosaraju, Y.~Kawano, and J.~M.~A. Scherpen, ``Krasovskii's passivity,''
  \emph{IFAC-PapersOnLine}, vol.~52, no.~16, pp. 466--471, 2019.

\bibitem{Yu2019}
Y.~Kawano, K.~C. Kosaraju, and J.~M.~A. Scherpen, ``Krasovskii and shifted
  passivity based control,'' \emph{arXiv:1907.07420}, 2019.

\end{thebibliography}
    \begin{IEEEbiography}[{\includegraphics[width=1in,height=1.25in,clip,keepaspectratio]{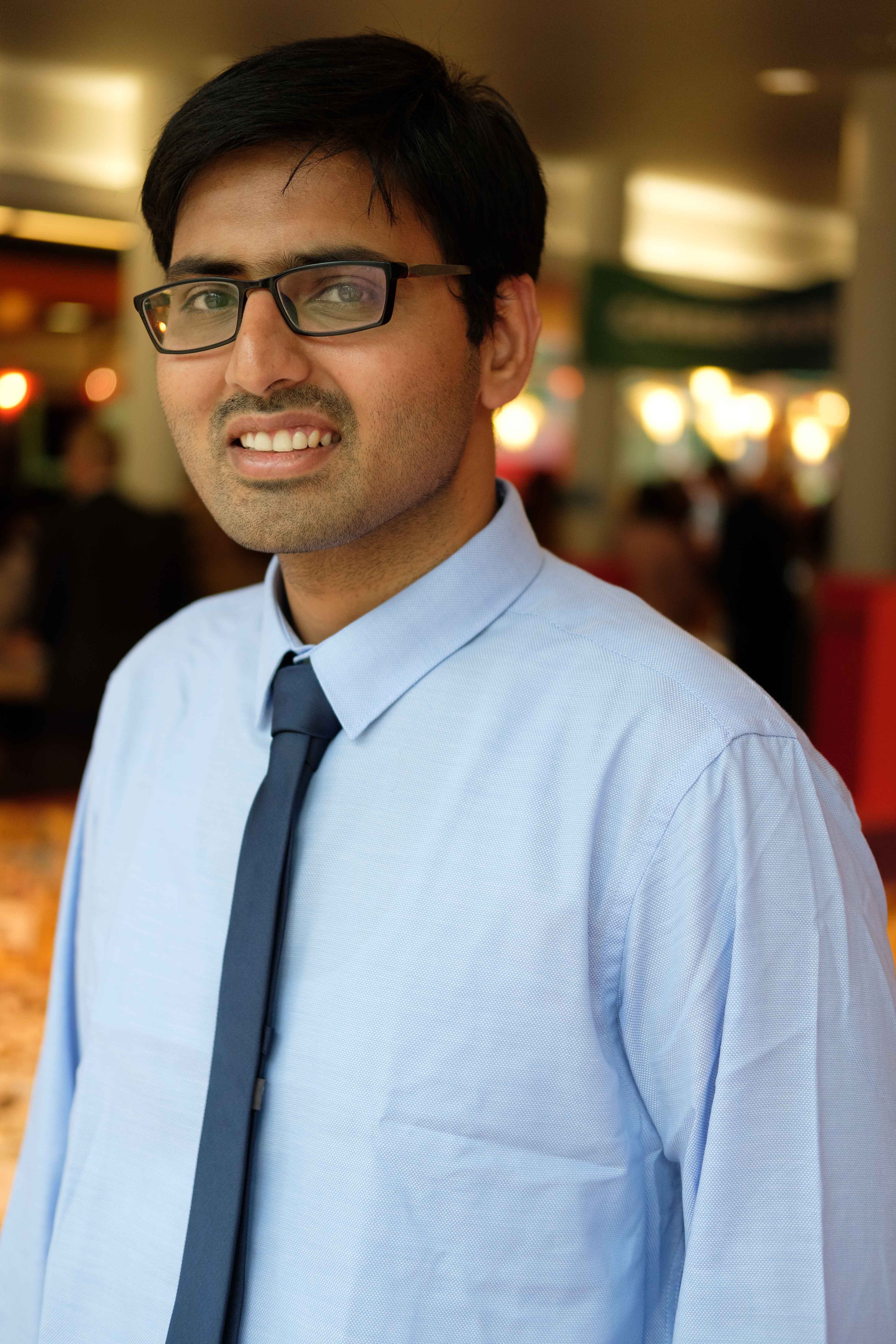}}]{Krishna Chaitanya Kosaraju} received the Bachelor degree in Electronics and Instrumentation from Birla Institute of Technology and Science - Pilani, 2010. He received his Master degree in Control and Instrumentation and the Ph.D. degree in Electrical Engineering from Indian Institute of Technology - Madras, in 2013 and 2018, respectively. Currently he is a Postdoc at the University of Notre Dame, under the
    	supervision of Professor Vijay Gupta. From April 2018 to December 2019, he was a Postdoc at the University of Groningen, under the
		supervision of Professor J.M.A. Scherpen. His research activities are mainly in the
		area of nonlinear control theory, passivity based control and optimization theory with application to power networks, Building systems and Reinforcement learning.
\end{IEEEbiography}
    \begin{IEEEbiography}[{\includegraphics[width=1in,height=1.5in,clip,keepaspectratio]{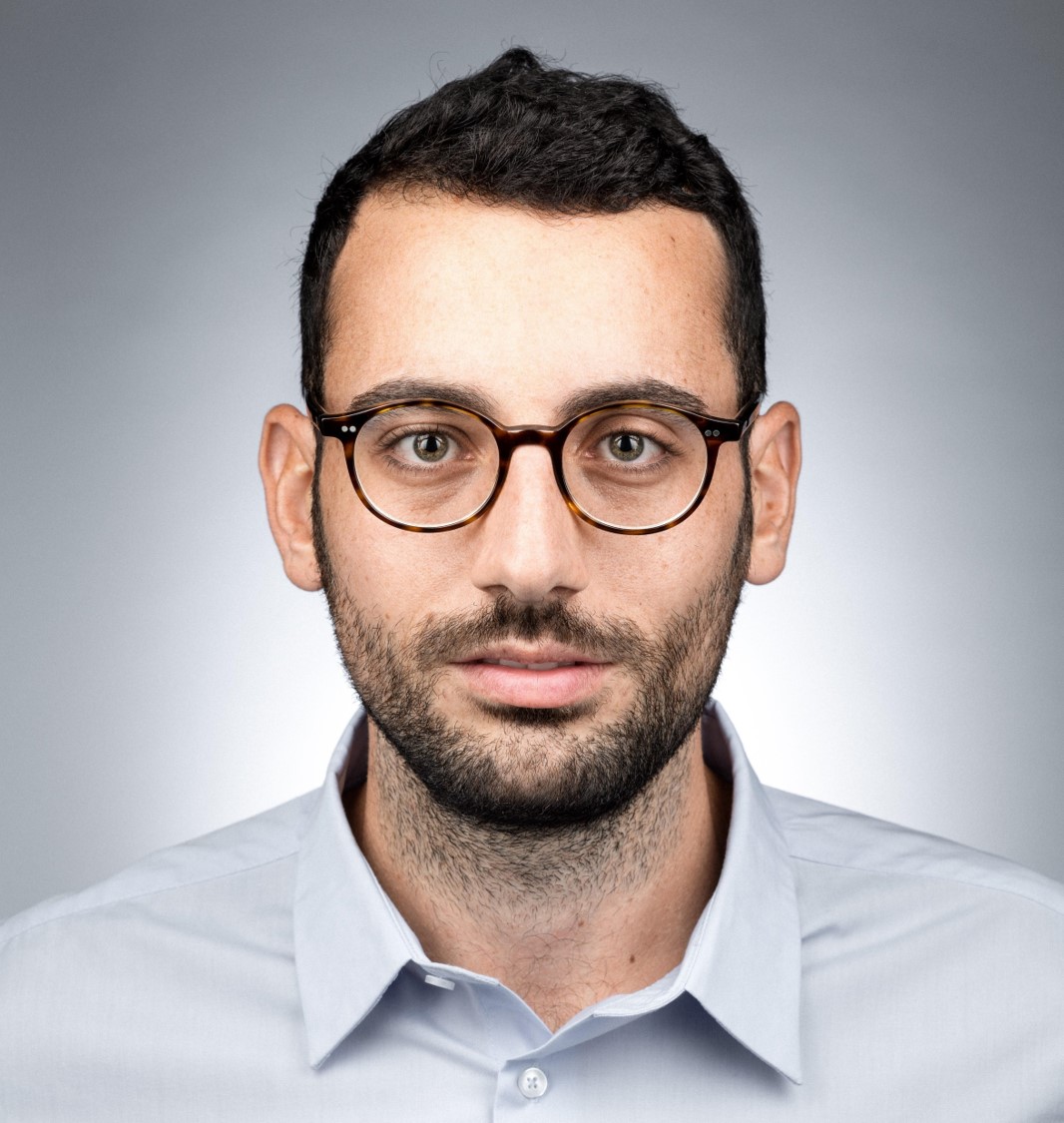}}]{Michele Cucuzzella} received the Bachelor degree (with highest honor) in
    		Industrial Engineering, the Master degree (with highest honor) in Electrical
    		Engineering and the Ph.D. degree (with highest honor) in Electronics, Computer
    		Science and Electrical Engineering from the University of Pavia, in 2012, 2014
    		and 2018, respectively. From April to June 2016, and from February to March
    		2017 he was with the Johann Bernoulli Institute for Mathematics and Computer
    		Science at the University of Groningen, under the supervision of Professor A. van
    		der Schaft. Currently he is a Postdoc at the University of Groningen, under the
    		supervision of Professor J. Scherpen. His research activities are mainly in the
    		area of nonlinear control  with
    		application to power networks. He is IEEE member since 2016 and, currently, he
    		is Associate Editor for the European Control Conference. He received the IEEE
    		Italy Section Award for the best Ph.D. thesis on new technological challenges in
    		energy and industry, and the SIDRA Award for the best Ph.D. thesis in the field of
    		systems and control engineering.
    \end{IEEEbiography}
    \begin{IEEEbiography}[{\includegraphics[width=1in,height=1.25in,clip,keepaspectratio]{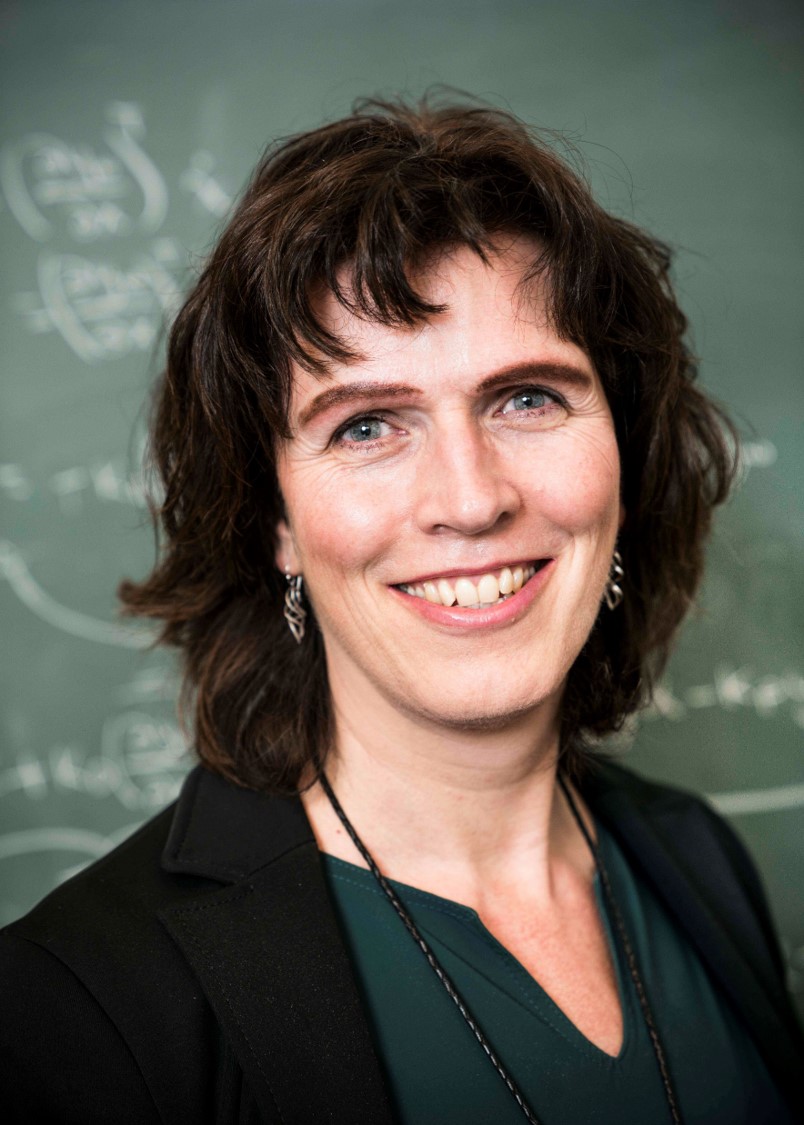}}]{Jacquelien M. A. Scherpen}  received the M.Sc. and Ph.D. degrees in applied mathematics from the University of Twente, Enschede, The Netherlands, in 1990 and 1994, respectively.
		She was with Delft University of Technology, The Netherlands, from 1994 to 2006. Since September 2006, she has been a Professor with the University of Groningen, at the Engineering and Technology institute Groningen (ENTEG) of the Faculty of Science and Engineering, Groningen, The Netherlands. She is currently scientific director of ENTEG (until 2019), and of the Groningen Engineering Center. She has held visiting research positions at The University of Tokyo, Japan, Universit\'e de Compiegne, SUPELEC, Gif-sur-Yvette, France, and the Old Dominion University, Norfolk, VA, USA. Her current research interests include nonlinear model reduction methods, nonlinear control methods, modeling and control of physical systems with applications to electrical circuits, electromechanical systems and mechanical systems, and distributed optimal control applications to smart grids.
		Prof. Scherpen has been an Associate Editor of the IEEE TRANSACTIONS ON AUTOMATIC CONTROL, the International Journal of Robust and Nonlinear Control (IJRNC), and the IMA Journal of Mathematical Control and Information. She is on the Editorial Board of the IJRNC.
\end{IEEEbiography}
    \begin{IEEEbiography}[{\includegraphics[width=1in,height=1.25in,clip,keepaspectratio]{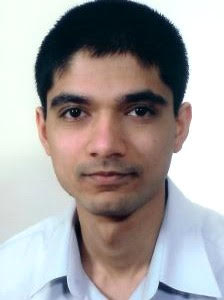}}]{Ramkrishna Pasumarthy} obtained his PhD in systems and control from the University of Twente, The Netherlands. He is currently with the Indian Institute of Technology Madras, India and is also associated with the Robert Bosch Center for data sciences and artificial intelligence at IIT Madras. His research interests are mainly in the area of modeling and control of complex physical systems, together with identification and control of (cloud) computing systems and data analytics for power, traffic, cloud and brain networks.
\end{IEEEbiography}
\balance

\end{document}